\providecommand{\U}[1]{\protect\rule{.1in}{.1in}}
\newtheorem{theorem}{Theorem}
\newtheorem{acknowledgement}[theorem]{Acknowledgement}
\newtheorem{corollary}[theorem]{Corollary}
\newtheorem{definition}[theorem]{Definition}
\newtheorem{example}[theorem]{Example}
\newtheorem{lemma}[theorem]{Lemma}
\newtheorem{proposition}[theorem]{Proposition}
\newtheorem{remark}[theorem]{Remark}
\newenvironment{proof}[1][Proof]{\noindent\textbf{#1.} }{\ \rule{0.5em}{0.5em}}
\begin{document}

\title{Paths of Canonical Transformations and their Quantization}
\author{Maurice A. de Gosson\thanks{maurice.de.gosson@univie.ac.at}\\University of Vienna\\Institute of Mathematics (NuHAG)\\Oskar-Morgenstern-Platz 1, 1090 Vienna}
\maketitle
\tableofcontents

\begin{abstract}
In their simplest formulations, classical dynamics is the study of Hamiltonian
flows and quantum mechanics that of propagators. Both are linked, and emerge
from the datum of a single classical concept, the Hamiltonian function. We
study and emphasize the analogies between Hamiltonian flows and quantum
propagators; this allows us to verify G. Mackey's observation that quantum
mechanics (in its Weyl formulation) is a refinement of Hamiltonian mechanics.
We discuss in detail the metaplectic representation, which very explicitly
shows the close relationship between classical mechanics and quantum
mechanics, the latter emerging from the first by lifting Hamiltonian flows to
the double covering of the symplectic group. We also give explicit formulas
for the factorization of Hamiltonian flows into simpler flows, and prove a
quantum counterpart of these results.

\end{abstract}

\section{Introduction}

Hamiltonian and quantum mechanics share a common background: they are both
based on the datum of a function $H$ (the Hamiltonian) which is a
generalization of the notion of energy. While this function is taken literally
in Hamiltonian mechanics as defining a vector field $X_{H}$ on phase space
which gives rise to the \textquotedblleft Hamiltonian flow\textquotedblright%
\ $(f_{t}^{H})_{t}$, in quantum mechanics one associates with $H$ a
differential, or pseudo-differential, operator generating a group of unitary
transforms $(U_{t}^{H})_{t}$ satisfying the Schr\"{o}dinger equation. When the
Hamiltonian function is a quadratic form in the phase space variables, both
$(f_{t}^{H})_{t}$ and $(U_{t}^{H})_{t}$ are easily linked: in this case the
$f_{t}^{H}$ are just linear canonical transformations, and the $U_{t}^{H}$ are
(up to an easily determined phase factor) the metaplectic operators obtained
by lifting the flow $(f_{t}^{H})_{t}$ to the metaplectic group. We will, among
other things, extend this similarity to arbitrary Hamiltonian flows and
propagators, thus producing an extension of the metaplectic representation.
One of the aims of the present paper is actually to highlight the similarities
between classical mechanics (in its Hamiltonian formulation), and quantum
mechanics: the latter is seen as emerging from the first. In fact, all which
lacks is a physical motivation for the introduction of Planck's constant $h$.
We will not discuss this very important -- and indeed difficult! -- problem
here; for all practical purposes we will view Planck's constant as a scaling
factor, and we will not discuss the physical content of quantum mechanics (in
the Schr\"{o}dinger picture, which is the only one considered here): there is
a vast literature on the subject, and it seems that no real consensus has yet
been reached.

A point of terminology: we will indifferently use the words \textquotedblleft
canonical transformation\textquotedblright\ or \textquotedblleft
symplectomorphism\textquotedblright. While in some texts both notions are not
quite equivalent, \textquotedblleft canonical transformation\textquotedblright%
\ sometimes having meaning any diffeomorphism preserving the form of
Hamilton's equations, we will adhere to the stricter definition following
which a canonical transformation is a diffeomorphism of phase space whose
Jacobian matrix is symplectic.

Here are some highlights of this paper:

\begin{itemize}
\item In Section \ref{sec1} we study Hamiltonian flows from the symplectic
point of view, and prove factorization formulas. For instance we show in
Proposition \ref{prop6} that if a Hamiltonian $H$ is split into a sum of two
independent Hamiltonians $H_{0}+H_{1}$ then the flow $(f_{t}^{H})_{t}$
determined by $H$ can be written as the product $f_{t}^{H_{0}}f_{t}^{H_{1}%
^{t}}$ where $H_{1}^{t}(z,t)=H_{1}(f_{t}^{H_{0}}(z),t)$. These constructions
allow us to define the group $\operatorname*{Ham}(n)$ of all Hamiltonian
symplectomorphisms associated with compactly supported Hamiltonians. We also
discuss a few special properties of Hamiltonian flows (and canonical
transformations) such as the theorems of Katok and Gromov.

\item In Section \ref{sec2} we give a rigorous review of the metaplectic group
and of its inhomogeneous extension. This section is in a sense pivotal,
because it is the gate to quantum mechanics, but a gate only using
\textit{classical} concepts (the path lifting property for covering groups).
It shows, in a sense, that quantum mechanics emerges from classical
(Hamiltonian) mechanics by \textquotedblleft passing to the covering
group\textquotedblright\ (there is a certain similarity with the passage from
the rotations group to the spin group). We discuss the notion of Feichtinger
algebra which is the most natural domain for the metaplectic group and its
inhomogeneous extension. We finally shortly review the theory of the Weyl
symbol of a metaplectic operator, following previous work of ours.

\item In Section \ref{sec3}, after having discussed quantization in general,
we define the notion of \textit{quantum isotopy}: a quantum isotopy is a
continuous path $(U_{t})_{t}$ of unitary operators satisfying a certain
differentiability condition which allows us to define a canonical self-adjoint
operator by the formula $\widehat{H}=i\hbar\left(  \frac{d}{dt}U_{t}\right)
U_{t}^{-1}$. This operator coincides with the infinitesimal generator obtained
by Stone's theorem when $(U_{t})_{t}$ is a strongly continuous one-parameter
group. We thereafter prove in Proposition \ref{propho} a quantum analogue of
the factorization result of \ref{prop6}: we have $U_{t}^{H_{0}+H_{1}}%
=U_{t}^{H_{0}}U_{t}^{H_{1}\circ S_{t}}$ if $H_{0}$ is quadratic.
\end{itemize}

\paragraph{Notation}

We will identify $T^{\ast}\mathbb{R}^{n}=\mathbb{R}^{n}\times(\mathbb{R}%
^{n})^{\ast}$ with $\mathbb{R}^{2n}$ and denote by $\sigma$ the standard
symplectic 2-form $\sum_{j=1}^{n}dp_{j}\wedge dx_{j}$, \textit{i.e.}
$\sigma(z,z^{\prime})=Jz\cdot z^{\prime}$ where $z=(x,p)$, $z^{\prime
}=(x^{\prime},p^{\prime})$ and $J=%
\begin{pmatrix}
0 & I\\
-I & 0
\end{pmatrix}
$. The scalar product of two vectors $u,v\in\mathbb{R}^{m}$ is written $u\cdot
v$; if $A$ is a symmetric $m\times m$ matrix we will often write $Au\cdot
u=Au^{2}$. We will denote by $\hbar$ a positive parameter which is identified
in physics with Planck's constant $h$ divided by $2\pi$. The Schwartz space of
complex functions on $\mathbb{R}^{m}$ decreasing at infinity, together with
their derivatives, faster than the inverse of any polynomial is denoted by
$\mathcal{S}(\mathbb{R}^{m})$; its dual $\mathcal{S}^{\prime}(\mathbb{R}^{m})$
is the space of tempered distributions. The scalar product on $L^{2}%
(\mathbb{R}^{m})$ is written $\langle\cdot|\cdot\rangle$.

\section{Symplectic and Hamiltonian Isotopies\label{sec1}}

The symplectic group $\operatorname*{Sp}(n)$ is the group of all (linear)
automorphisms of the symplectic space $(\mathbb{R}^{2n},\sigma)$:
$S\in\operatorname*{Sp}(n)$ if and only if $S\in GL(2n,\mathbb{R})$ and
$\sigma(Sz,Sz^{\prime})=\sigma(z,z^{\prime})$ for all $(z,z^{\prime}%
)\in\mathbb{R}^{2n}\times\mathbb{R}^{2n}$ (for a detailed exposition, see
\cite{Birk}). A diffeomorphism $f$ of $\mathbb{R}^{2n}$ is a symplectomorphism
if its Jacobian matrix at every point is symplectic: $Df(z)\in
\operatorname*{Sp}(n)$ for every $z\in\mathbb{R}^{2n}$. Equivalently,
$(f_{t}^{H})^{\ast}\sigma=\sigma$, identifying $\sigma$ with the 2-form
$\sum_{j=1}^{n}dp_{j}\wedge dx_{j}$. The symplectomorphisms of $(\mathbb{R}%
^{2n},\sigma)$ form a group $\operatorname*{Symp}(n)$.

\subsection{Hamiltonian vector fields and flows}

\subsubsection{Hamilton's equations}

Let $H\in C^{j}(\mathbb{R}^{2n}\times\mathbb{R})$, $j\geq2$, be a real-valued
function; we will call $H$ a \emph{Hamiltonian function}. The associated
Hamilton equations with initial data $z^{\prime}$ at time $t^{\prime}$ are%
\begin{equation}
\dot{x}=\partial_{p}H(x,p,t)\text{ \ , \ }\dot{p}=-\partial_{x}H(x,p,t)
\label{hameq}%
\end{equation}
or, using the collective notation $z=(x,p)$,
\begin{equation}
\dot{z}(t)=X_{H}(z(t),t)\text{ \ , \ }z(t^{\prime})=z^{\prime} \label{hameq1}%
\end{equation}
where $X_{H}=J\partial_{z}H$ is the Hamilton vector field (strictly speaking
it is not a true vector field when $H$ depends on $t$). Existence and
uniqueness results for Hamilton's equations abound in the literature (see e.g.
\cite{AM,AMR,zehnder} and the multiple references therein). The study of these
properties are actually mostly a branch of the theory (local, and global) of
dynamical systems (= systems of ordinary differential equations) where the
notion of vector fields and their integral curves play the primordial role.
The main result we will (implicitly) use is the following local classical
existence and uniqueness property: Let $X:\mathbb{R}^{2n}\longrightarrow
\mathbb{R}^{2n}$ be a vector field (Hamiltonian, or not) of class $C^{j}$,
$j\geq1$ (it is hence, in particular, locally Lipschitz continuous). For every
$z_{0}\in\mathbb{R}^{2n}$ the system $\dot{z}=X(z)$ there exists an open ball
$B^{2n}(z_{0},r)\subset\mathbb{R}^{2n}$ and an $\varepsilon>0$ such that this
system has a unique solution $t\longmapsto z(t)$ with $z(0)=z_{0}$ which is
$C^{j}$ and defined for every $t\in I_{\varepsilon}=[-\varepsilon
,\varepsilon]$. Moreover, by uniqueness, every such solution can be continued
onto a so-called maximal interval $I_{T}=[-T,T]$. The same result holds for
time-dependent vector fields $X_{t}$, which are families of \textquotedblleft
true\textquotedblright\ vector fields depending in a $C^{1}$ fashion on time
$t$. Transposed to the case of Hamiltonian systems, this means that it is
sufficient, to have a local existence and uniqueness statement to assume that
$H\in C^{j}(\mathbb{R}^{2n}\times\mathbb{R})$, $j\geq2$ as we did. In
practice, we will assume that the maximal interval $I_{T}=[-T,T]$ is the same
for every initial point $z_{0}$, and whenever this simplifies statements, that
$T=+\infty$ (which is the case if the Hamiltonian function is constant outside
some compact set in $\mathbb{R}^{2n}$, or, more generally, if the vector field
$X_{H}$ is complete; see Abraham and Marsden \cite{AM} for details).

We have the following transformation formula for Hamiltonian vector fields: if
$g\in\operatorname*{Symp}(n)$ then
\begin{equation}
X_{H\circ g}(z)=[Dg(z)]^{-1}X_{H}(g(z)). \label{XHg}%
\end{equation}
This formula can be written more concisely as $X_{g_{\ast}H}=g_{\ast}X_{H}$
where $g_{\ast}$ is the operation of \textquotedblleft pushing
forward\textquotedblright.

\subsubsection{Hamiltonian flows}

Assuming existence and uniqueness of the solution for every choice of
$(z^{\prime},t^{\prime})$ the time-dependent flow $(f_{t,t^{\prime}}^{H})$ is
the family of mappings $\mathbb{R}^{2n}\longrightarrow\mathbb{R}^{2n}$ which
associates to every initial $z^{\prime}$ the value $z(t)=f_{t,t^{\prime}}%
^{H}(z^{\prime})$ of the solution of (\ref{hameq1}). We will write $f_{t}%
^{H}=f_{t,0}^{H}$ and we have
\begin{equation}
f_{t,t^{\prime}}^{H}=f_{t,0}^{H}\left(  f_{t^{\prime},0}^{H}\right)
^{-1}=f_{t}^{H}\left(  f_{t^{\prime}}^{H}\right)  ^{-1} \label{ftt'}%
\end{equation}
\ and the $f_{t,t^{\prime}}^{H}$ satisfy the groupoid property
\begin{equation}
f_{t,t^{\prime}}^{H}f_{t^{\prime},t^{\prime\prime}}^{H}=f_{t,t^{\prime\prime}%
}^{H}\text{ \ , \ }f_{t,t}^{H}=I_{\mathrm{d}} \label{flow2}%
\end{equation}
for all $t$, $t^{\prime}$ and $t^{\prime\prime}$. It follows from the first
formula (\ref{flow2}) that we have $(f_{t,t^{\prime}}^{H})^{-1}=f_{t^{\prime
},t}^{H}$.

The mappings $f_{t}^{H}$ (and hence the mappings $f_{t,t^{\prime}}^{H}$) are
symplectomorphisms, that is $(f_{t}^{H})^{\ast}\sigma=\sigma$. To see this, it
suffices to notice that $i_{X_{H}}\sigma=dH$ and hence, in view of Cartan's
formula for the Lie derivative \cite{AM},%
\[
L_{X_{H}}\sigma=d(i_{X_{H}}\sigma)+i_{X_{H}}d\sigma=0
\]
which implies that
\[
\frac{d}{dt}(f_{t}^{H})^{\ast}\sigma=f_{t}^{H})^{\ast}L_{X_{H}}\sigma=0
\]
and hence $(f_{t}^{H})^{\ast}\sigma=(f_{0}^{H})^{\ast}\sigma=\sigma$ (for a
conceptually simpler, but longer, proof using Jacobian matrices, see
\cite{Birk}).

It follows from the transformation formula (\ref{XHg}) for Hamiltonian vector
fields that the flows $(f_{t}^{H})_{t}$ and $(f_{t}^{H\circ g})_{t}$ are
conjugate:%
\begin{equation}
f_{t}^{H\circ g}=g^{-1}f_{t}^{H}g; \label{FHg}%
\end{equation}
this is the property of \textquotedblleft covariance of Hamilton's equations
under canonical transformations\textquotedblright\ familiar from physics (see
Arnol'd \cite{Arnold}, Hofer and Zehnder \cite{HZ}).

\subsubsection{Special features of Hamiltonian flows}

Hamiltonian flows are volume preserving (because the Jacobian determinant of a
canonical transformation is always equal to one). They however enjoy several
unusual properties which makes them very different from arbitrary
volume-preserving diffeomorphisms. For instance, Gromov proved in 1985 that no
Hamiltonian flow (or more generally, no canonical transformation) can embed a
phase space ball $B^{2n}(R)$ with radius $R$ inside a cylinder $Z_{j}^{2n}(r)$
based on a plane $x_{j},p_{j}$ of conjugate coordinates if its radius $r$ is
smaller than $R$. This unexpected property, sometimes dubbed the
\textquotedblleft principle of the symplectic camel\textquotedblright, has led
to various developments in symplectic topology. The principle of the
symplectic camel has the following consequence: let us project orthogonally a
ball with radius $R$ on a plane of conjugate coordinates $x_{j},p_{j}$, we
then obtain a circle with area $\pi R^{2}$. If we now deform the ball using a
Hamiltonian flow, it will deform, but the area of the orthogonal projection of
this deformed ball will never decrease below its initial value $\pi R^{2}$.
This result is a \emph{classical version} of the uncertainty principle as we
have explained in de Gosson \cite{Birk,FP,stat} and de Gosson and Luef
\cite{golu09}. We mention that Gromov's theorem in addition allows to define a
new class of symplectic invariants, called \emph{symplectic capacities}, which
seem to play an important role in accelerator physics\emph{ }(see Dragt
\cite{Dragt} and Erdelyi \cite{er11}). The properties above are of a
topological nature, and show that general volume preserving mapping cannot be
approximated in the $C^{0}$ topology by canonical transformations. On the
other hand, Katok \cite{Katok} has showed that given two subsets $\Omega$ and
$\Omega^{\prime}$ with same volume, then for every $\varepsilon>0$ there
exists a canonical transformation $f$ such that $\operatorname*{Vol}%
(f(\Omega)\setminus\Omega^{\prime})<\varepsilon$. Thus, an arbitrarily large
part of $\Omega$ \ can be symplectically embedded inside $\Omega^{\prime}$. A
third property (which is however shared by other groups of diffeomorphisms) is
$N$-transitivity. Let us introduce the following terminology: given a group
$G$ acting on a set $M$ we say that this action is $N$-transitive if given two
arbitrary sets $\{x_{1},...,x_{N}\}$ and $\{y_{1},...,y_{N}\}$ of points of
$M$ there exists $g\in G$ such that $g(x_{i})=y_{i}$ for all $i=1,2,..,N$. For
instance, it is well-known that $\operatorname*{Diff}(\mathbb{R}^{m})$ (the
group of diffeomorphisms of $\mathbb{R}^{m}$) acts $N$-transitively on
$\mathbb{R}^{m}$ for every $N$. Now, a deep theorem of Boothby \cite{Boothby}
(also see \cite{ha66,mivi94}) says that the action of $\operatorname*{Ham}(n)$
on $\mathbb{R}^{2n}$ is $N$-transitive for every integer $N$. In fact, given
two arbitrary sets $\{z_{1},...,z_{N}\}$ and $\{z_{1}^{\prime},...,z_{N}%
^{\prime}\}$ of $N$ distinct points in $\mathbb{R}^{2n}$, there exists a
Hamiltonian function $H$ such that $z_{j}^{\prime}=f_{1}^{H}(z_{j})$ for every
$j=1,...,N$. This property certainly has applications (for instance to
celestial mechanics) which have not been explored yet.

\subsection{Paths of canonical transformations}

A striking result is that any continuously differentiable path of canonical
transformations passing through the identity can be viewed as the Hamiltonian
flow determined by some (usually time-dependent) Hamiltonian. We will prove
this in Proposition \ref{prop5}, and draw interesting consequences about the
group of Hamiltonian symplectomorphisms.

\subsubsection{Symplectic and Hamiltonian isotopies}

We will call a canonical transformation $f$ such that $f=f_{t}^{H}$ for some
Hamiltonian function $H$ and time $t=a$ a \emph{Hamiltonian symplectomorphism}%
. The choice of time $t=a$ in this definition is of course arbitrary, and can
be replaced with any other value different from zero noting that we have
$f=f_{a}^{H_{a}}$ where $H_{a}(z,t)=aH(z,at)$; the usual choice is $a=1$.

\begin{definition}
A symplectic isotopy is a one-parameter family $(f_{t})_{t}$ of
symplectomorphisms depending in a $C^{1}$ fashion on $t\in\mathbb{R}$ and such
that $f_{0}=I_{\mathrm{d}}$. If each $f_{t}$ is a Hamiltonian
symplectomorphism, then $(f_{t})_{t}$ is called a Hamiltonian isotopy.
\end{definition}

It turns out that each symplectic isotopy is a Hamiltonian isotopy, in fact
the flow determined by some time-dependent $H$:

\begin{proposition}
\label{prop5}Let $(f_{t})_{t}$ be a symplectic isotopy. (i) There exists a
Hamiltonian function $H=H(z,t)$ such that $f_{t}=f_{t}^{H}$. (ii) More
precisely, we have $(f_{t})_{t}=(f_{t}^{H})_{t}$ with%
\begin{equation}
H(z,t)=-\int_{0}^{1}\sigma\left(  \tfrac{d}{dt}f_{t}\circ f_{t}^{-1}(\lambda
z),z\right)  d\lambda. \label{hzt}%
\end{equation}
Equivalently:%
\begin{equation}
H(z,t)=-\int_{0}^{1}\sigma\left(  X_{H}(f_{t}^{-1}(\lambda z),z\right)
)d\lambda. \label{hztbis}%
\end{equation}

\end{proposition}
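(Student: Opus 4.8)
The plan is to reverse-engineer the Hamiltonian from the isotopy by the same computation that, run forwards, produces the flow from $H$. Given the symplectic isotopy $(f_t)_t$, define the time-dependent vector field $X_t$ by $X_t(z) = \frac{d}{dt}f_t(f_t^{-1}(z))$, so that by construction $\frac{d}{dt}f_t(z') = X_t(f_t(z'))$ with $f_0 = I_{\mathrm d}$; thus $(f_t)_t$ is the flow of $X_t$. The whole point is to show $X_t$ is Hamiltonian, i.e. that the $1$-form $\iota_{X_t}\sigma$ is exact, and to exhibit an explicit primitive. First I would differentiate the identity $f_t^\ast\sigma = \sigma$ (valid since each $f_t$ is a symplectomorphism) in $t$; using Cartan's formula $L_{X_t}\sigma = d(\iota_{X_t}\sigma) + \iota_{X_t}d\sigma$ and $d\sigma = 0$, this yields $d(\iota_{X_t}\sigma) = 0$, so $\alpha_t := \iota_{X_t}\sigma$ is a closed $1$-form on $\mathbb R^{2n}$. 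Since $\mathbb R^{2n}$ is contractible, $\alpha_t$ is exact.

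The substance of part (ii) is to pin down the primitive via the Poincaré homotopy operator along the radial contraction $\lambda \mapsto \lambda z$. Recall that if $\alpha$ is a closed $1$-form on $\mathbb R^{2n}$, then $\alpha = dh$ with $h(z) = \int_0^1 \alpha_{\lambda z}(z)\,d\lambda = \int_0^1 \langle \alpha(\lambda z), z\rangle\, d\lambda$. Applying this to $\alpha_t = \iota_{X_t}\sigma$, and using $\sigma(u,v) = Ju\cdot v$ so that $(\iota_{X_t}\sigma)_z(v) = \sigma(X_t(z), v)$, I get a primitive
\[
H(z,t) = -\int_0^1 \sigma\big(X_t(\lambda z),\, z\big)\, d\lambda
       = -\int_0^1 \sigma\Big(\tfrac{d}{dt}f_t\circ f_t^{-1}(\lambda z),\, z\Big)\, d\lambda,
\]
which is exactly (\ref{hzt}); the overall sign is forced by the convention $X_H = J\partial_z H$, equivalently $\iota_{X_H}\sigma = dH$ with the present sign of $\sigma$, which I would check by matching $(\iota_{X_H}\sigma)_z(v) = JX_H\cdot v = -\partial_z H\cdot v$ against $dH_z(v) = \partial_z H\cdot v$ — so one needs the minus sign (I would double-check this against the paper's own earlier assertion $\iota_{X_H}\sigma = dH$ and adjust accordingly, since that is the formula used in the excerpt). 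Formula (\ref{hztbis}) is then just the same expression with $X_H$ substituted for $\frac{d}{dt}f_t\circ f_t^{-1}$, i.e. it is literally (\ref{hzt}) once we have identified $X_H = X_t$; it records that the $H$ we built has $X_t$ as its Hamiltonian vector field. For (i), once $H$ is constructed, $f_t$ and $f_t^H$ both solve $\dot z = X_t(z)$ with the same initial condition $z$ at $t = 0$, so by the uniqueness statement quoted in Section \ref{sec1} they coincide: $f_t = f_t^H$.

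The main obstacle is bookkeeping rather than conceptual: making sure the radial-homotopy primitive is differentiable enough and that differentiation under the integral sign and the $t$-differentiation of $f_t^\ast\sigma$ are justified — here the $C^1$-in-$t$ hypothesis on $(f_t)_t$ (plus smoothness in $z$, which one should note is implicit in "symplectomorphism") is exactly what is needed, and one gets $H$ of class $C^1$ in $t$ and $C^{j-1}$ or so in $z$, enough to feed back into the existence–uniqueness theorem. A secondary point worth a sentence is the regularity of $X_t$ in $t$: it is $C^0$ from the hypotheses, which suffices for uniqueness of solutions given the Lipschitz dependence in $z$; if one wants the cleaner $C^1$-in-$t$ statement for the Hamiltonian framework, one notes that $H(\cdot,t)$ inherits $C^1$ dependence on $t$ directly from (\ref{hzt}). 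The only genuinely delicate sign issue is the normalization between $\sigma$, $J$, and the convention $X_H = J\partial_z H$ versus $\iota_{X_H}\sigma = dH$; I would settle it once at the start and carry it consistently.
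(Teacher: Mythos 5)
Your proof is correct and follows essentially the same route as the paper: establish that the time-dependent vector field $X_t = \dot f_t\circ f_t^{-1}$ has closed interior product with $\sigma$, then recover $H$ as the primitive given by the radial Poincar\'e homotopy operator, with the sign fixed by the convention $X_H = J\partial_z H$. The only difference is stylistic --- you phrase the closedness step intrinsically via $L_{X_t}\sigma = d(\iota_{X_t}\sigma)+\iota_{X_t}d\sigma = 0$, whereas the paper differentiates the matrix identity $Df_t^T J\,Df_t = J$ to show that the Jacobian of $J(\dot f_t\circ f_t^{-1})$ is symmetric; these are the same computation in two languages, and your version has the advantage of making the role of Cartan's formula explicit (and of catching a sign slip in the paper's stated relation $\iota_{X_H}\sigma = dH$, which with the conventions $\sigma(z,z')=Jz\cdot z'$ and $X_H=J\partial_z H$ should read $\iota_{X_H}\sigma = -dH$; your final formula for $H$ nevertheless agrees with the paper's (\ref{hzt})).
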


\begin{proof}
(Cf. Wang's \cite{Wang}; for a more conceptual approach using differential
geometry see Banyaga \cite{Banyaga}.) The $f_{t}$ being Hamiltonian
symplectomorphisms we have $Df_{t}(z)^{T}JDf_{t}(z)=J$ for every
$z\in\mathbb{R}^{2n}$. Differentiating both sides of this equality with
respect to $t$ we get, omitting the variable $z$ and writing $\dot{f}%
_{t}=df_{t}/dt$,%
\[
(D\dot{f}_{t})^{T}JDf_{t}+Df_{t}^{T}JD(\dot{f}_{t})=0
\]
and hence
\[
J(D\dot{f}_{t})(Df_{t})^{-1}=\left[  J(D\dot{f}_{t})(D(f_{t})^{-1})\right]
^{T}.
\]
Using the chain rule together with the identity $D(f_{t})^{-1}=D(f_{t}^{-1})$
we have%
\[
D(J\dot{f}_{t}\circ f_{t}^{-1})=\left[  DJ(\dot{f}_{t}\circ f_{t}%
^{-1})\right]  ^{T}%
\]
which shows that the Jacobian of $J(\dot{f}_{t}\circ f_{t}^{-1})$ is
symmetric. It follows from Poincar\'{e}'s lemma that there exists for each
$t\in I$ a function $H_{t}$ such that $J(\dot{f}_{t}\circ f_{t}^{-1}%
)=\partial_{z}H_{t}$ and one verifies that this function is explicitly given
by the formula%
\[
H_{t}(z)=\int_{0}^{1}J\partial_{z}(\dot{f}_{t}\circ f_{t}^{-1})(\lambda
z)\cdot zd\lambda.
\]
Setting $H(z,t)=-H_{t}(z)$ this is precisely formula (\ref{hzt}).
\end{proof}

\subsubsection{Linear and affine flows}

When the Hamiltonian flow is linear, Proposition \ref{prop5} yields an
explicit formula:

\begin{corollary}
\label{cor2}Let $(S_{t})_{t}$ be a symplectic isotopy in $\operatorname*{Sp}%
(n)$. There exists a quadratic Hamiltonian function $H=H(z,t)$ such that
$S_{t}$ is the phase flow determined by the Hamilton equations $\dot
{z}=J\partial_{z}H$. The Hamiltonian function is the quadratic form%
\begin{equation}
H=-\frac{1}{2}J\dot{S}_{t}S_{t}^{-1}z\cdot z \label{hamzo}%
\end{equation}
where $\dot{S}_{t}=dS_{t}/dt$. In particular, if $S_{t}=e^{tX}$ with
$X\in\mathfrak{sp}(n)$ (the symplectic Lie algebra) then $H=-\frac{1}%
{2}JXz\cdot z.$
\end{corollary}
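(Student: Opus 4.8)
The plan is to obtain this as a direct specialization of Proposition \ref{prop5} to the linear case, together with a short check that the resulting quadratic form is well defined. First I would observe that since each $S_t$ is a linear automorphism, the associated isotopy $f_t=S_t$ has constant Jacobian $Df_t(z)=S_t$, its inverse is again linear, and $f_t^{-1}(\lambda z)=\lambda S_t^{-1}z$ for every $\lambda$. Consequently
\[
\tfrac{d}{dt}f_t\circ f_t^{-1}(\lambda z)=\dot S_t S_t^{-1}(\lambda z)=\lambda\,\dot S_t S_t^{-1}z .
\]
Inserting this into formula (\ref{hzt}) and using the bilinearity of $\sigma$ to pull the factor $\lambda$ out of the first slot gives
\[
H(z,t)=-\int_0^1\sigma\!\left(\lambda\,\dot S_t S_t^{-1}z,z\right)d\lambda=-\left(\int_0^1\lambda\,d\lambda\right)\sigma\!\left(\dot S_t S_t^{-1}z,z\right)=-\tfrac{1}{2}\,\sigma\!\left(\dot S_t S_t^{-1}z,z\right),
\]
and then rewriting $\sigma(u,z)=Ju\cdot z$ yields exactly (\ref{hamzo}).

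Next I would verify that $H$ is genuinely a quadratic form, i.e. that the matrix $J\dot S_t S_t^{-1}$ is symmetric. This is already implicit in the proof of Proposition \ref{prop5} (the Jacobian of $J(\dot f_t\circ f_t^{-1})$ was shown there to be symmetric, and for $f_t=S_t$ that Jacobian is precisely $J\dot S_t S_t^{-1}$), but it can also be seen in one line: differentiating $S_t^T J S_t=J$ in $t$ and multiplying on the left by $(S_t^{-1})^T$ and on the right by $S_t^{-1}$ gives $(\dot S_t S_t^{-1})^T J+J\dot S_t S_t^{-1}=0$; since $J^T=-J$ this is exactly the assertion that $J\dot S_t S_t^{-1}$ is symmetric. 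Hence $z\mapsto H(z,t)$ is a quadratic form for each $t$, and by construction (Proposition \ref{prop5}(i)) the flow it generates is $(S_t)_t$.

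Finally, for the special case $S_t=e^{tX}$ with $X\in\mathfrak{sp}(n)$, I would simply note that $\dot S_t=Xe^{tX}=e^{tX}X$, so $\dot S_t S_t^{-1}=X$ and (\ref{hamzo}) collapses to $H=-\tfrac{1}{2}JXz\cdot z$, independent of $t$; this is consistent with the fact that membership of $X$ in $\mathfrak{sp}(n)$ is equivalent to $JX$ being symmetric. I do not expect any real obstacle here: the only points requiring care are the elementary bookkeeping with the bilinear form $\sigma$ and the transpose manipulation establishing symmetry of $J\dot S_t S_t^{-1}$; everything else is an immediate substitution into Proposition \ref{prop5}.
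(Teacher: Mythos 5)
Your proof is correct and follows exactly the paper's route: specialize Proposition \ref{prop5} to $f_t=S_t$, use linearity to pull out the factor $\lambda$, integrate to get the $\tfrac{1}{2}$, and rewrite $\sigma(u,z)=Ju\cdot z$. The added check that $J\dot S_t S_t^{-1}$ is symmetric and the remark on $S_t=e^{tX}$ are both correct and welcome but do not change the argument.
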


\begin{proof}
We have $\dot{f}_{t}\circ f_{t}^{-1}=\dot{S}_{t}S_{t}^{-1}$; applying formula
(\ref{hzt}) we get%
\begin{equation}
H(z,t)=-\int_{0}^{1}\sigma\left(  \dot{S}_{t}S_{t}^{-1}(\lambda z),z\right)
d\lambda\label{hamzobis}%
\end{equation}
which is (\ref{hamzo}), taking into account the linearity of $\sigma$ and
$S_{t}$.
\end{proof}

Writing $S_{t}$ and its inverse in block matrix form%
\begin{equation}
S_{t}=%
\begin{pmatrix}
A_{t} & B_{t}\\
C_{t} & D_{t}%
\end{pmatrix}
\text{ \ , \ }S_{t}^{-1}=%
\begin{pmatrix}
D_{t}^{T} & -B_{t}^{T}\\
-C_{t}^{T} & A_{t}^{T}%
\end{pmatrix}
\end{equation}
(the second formula following from the identity $S_{t}JS_{t}^{T}=J$) we have%
\[
J\dot{S}_{t}S_{t}^{-1}=%
\begin{pmatrix}
\dot{C}_{t}D_{t}^{T}-\dot{D}_{t}C_{t}^{T} & \dot{D}_{t}A_{t}^{T}-\dot{C}%
_{t}B_{t}^{T}\\
\dot{B}_{t}C_{t}^{T}-\dot{A}_{t}D_{t}^{T} & \dot{A}_{t}B_{t}^{T}-\dot{B}%
_{t}A_{t}^{T}%
\end{pmatrix}
;
\]
it follows from (\ref{hamzo}) that we have the explicit expression%
\begin{equation}
H=\tfrac{1}{2}(\dot{D}_{t}C_{t}^{T}-\dot{C}_{t}D_{t}^{T})x^{2}+(\dot{C}%
_{t}B_{t}^{T}-\dot{D}_{t}A_{t}^{T})px+\tfrac{1}{2}(\dot{B}_{t}A_{t}^{T}%
-\dot{A}_{t}B_{t}^{T})p^{2} \label{hamzi}%
\end{equation}
for the Hamiltonian function.

\begin{example}
\label{exam1}Consider the one-parameter family of matrices%
\[
S_{t}=%
\begin{pmatrix}
\cos\omega(t) & \sin\omega(t)\\
-\sin\omega(t) & \cos\omega(t)
\end{pmatrix}
\]
where $\omega$ is a $C^{2}$ function of time such that $\psi(0)=0$. We have
\[
J\dot{S}_{t}S_{t}^{-1}=%
\begin{pmatrix}
-\dot{\omega}(t) & 0\\
0 & -\dot{\omega}(t)
\end{pmatrix}
\]
hence $(S_{t})_{t}$ is the flow of the time-dependent harmonic oscillator
Hamiltonian%
\[
H(z,t)=\frac{\dot{\omega}(t)}{2}(p^{2}+x^{2}).
\]

\end{example}

The case of affine symplectic isotopies is a straightforward extension of the
result above:

\begin{corollary}
\label{cor3}Let $(S_{t})_{t}$ be a symplectic isotopy in $\operatorname*{Sp}%
(n)$ and $t\longmapsto z_{t}$ a $C^{1}$ path in $\mathbb{R}^{2n}$ with
$z_{0}=0$. The symplectic isotopy $(f_{t})_{t}$ defined by $f_{t}=S_{t}%
T(z_{t})$ where $T(z_{t})$ is the translation $z\longmapsto z+z_{t}$ is the
Hamiltonian flow determined by%
\begin{equation}
H(z,t)=-\frac{1}{2}J\dot{S}_{t}S_{t}^{-1}z\cdot z+\sigma\left(  z,S_{t}\dot
{z}_{t}\right)  . \label{hamzoter}%
\end{equation}

\end{corollary}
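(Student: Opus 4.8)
The plan is to reduce the affine case to the linear case already handled in Corollary \ref{cor2} by directly applying the general formula (\ref{hzt}) of Proposition \ref{prop5} to the isotopy $f_t = S_t T(z_t)$. First I would verify that $(f_t)_t$ is indeed a symplectic isotopy: each $f_t$ is a composition of a linear symplectomorphism $S_t$ with a translation $T(z_t)$, hence a symplectomorphism; it depends in a $C^1$ fashion on $t$ since both $S_t$ and $z_t$ do; and $f_0 = S_0 T(z_0) = I_{\mathrm{d}}$ because $S_0 = I$ and $z_0 = 0$. So Proposition \ref{prop5} applies and the generating Hamiltonian is given by (\ref{hzt}).

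The computational heart of the proof is to evaluate $\frac{d}{dt} f_t \circ f_t^{-1}$. I would compute $f_t(z) = S_t z + S_t z_t$, so $\dot f_t(z) = \dot S_t z + \dot S_t z_t + S_t \dot z_t$, and $f_t^{-1}(w) = S_t^{-1} w - z_t$. Composing, $\dot f_t \circ f_t^{-1}(w) = \dot S_t S_t^{-1} w - \dot S_t z_t + \dot S_t z_t + S_t \dot z_t = \dot S_t S_t^{-1} w + S_t \dot z_t$. Thus the vector field $\frac{d}{dt} f_t \circ f_t^{-1}$ splits as a linear part $\dot S_t S_t^{-1} w$ (exactly the one appearing in Corollary \ref{cor2}) plus a constant (in $w$) part $S_t \dot z_t$.

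Plugging this into (\ref{hzt}): $H(z,t) = -\int_0^1 \sigma\!\left(\dot S_t S_t^{-1}(\lambda z) + S_t \dot z_t,\, z\right) d\lambda$. By linearity of $\sigma$ this separates into two integrals. The first, $-\int_0^1 \sigma(\dot S_t S_t^{-1}(\lambda z), z)\,d\lambda$, is precisely (\ref{hamzobis}) and so equals $-\frac{1}{2} J\dot S_t S_t^{-1} z \cdot z$ by Corollary \ref{cor2}. The second, $-\int_0^1 \sigma(S_t \dot z_t, z)\,d\lambda = -\sigma(S_t \dot z_t, z) = \sigma(z, S_t \dot z_t)$ since the integrand is independent of $\lambda$ and $\sigma$ is antisymmetric. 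Adding the two pieces gives (\ref{hamzoter}).

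I do not anticipate a genuine obstacle here — the argument is a short linear algebra computation once the decomposition $\dot f_t \circ f_t^{-1}(w) = \dot S_t S_t^{-1} w + S_t \dot z_t$ is in hand. The only point requiring a moment of care is the cancellation of the two $\dot S_t z_t$ terms when composing $\dot f_t$ with $f_t^{-1}$, which is what makes the translational contribution reduce to the clean affine term $\sigma(z, S_t \dot z_t)$ rather than something $t$-dependent through $z_t$ itself.
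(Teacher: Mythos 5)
Your proof is correct and follows essentially the same route as the paper: compute $\dot f_t\circ f_t^{-1}$, observe that it decomposes as $\dot S_t S_t^{-1} w + S_t\dot z_t$, substitute into formula (\ref{hzt}), and split the integral into the linear piece (handled by Corollary \ref{cor2}) and the constant piece (handled by antisymmetry of $\sigma$). The paper packages the same decomposition in operator notation as $\dot f_t f_t^{-1}=T(S_t\dot z_t)\dot S_t S_t^{-1}$, while you write out the composition explicitly and highlight the cancellation of the two $\dot S_t z_t$ terms, but the argument is the same.
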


\begin{proof}
We have $f_{t}(z)=S_{z}(z+z_{t})$ hence%
\[
\dot{f}_{t}(z)=\dot{S}_{t}(z+z_{t})+S_{t}\dot{z}_{t}=T(S_{t}\dot{z}_{t}%
)\dot{S}_{t}T(z_{t})z
\]
hence $\dot{f}_{t}=T(S_{t}\dot{z}_{t})\dot{S}_{t}T(z_{t})$. The inverse
of\ $f_{t}$ being given by%
\[
f_{t}^{-1}=(S_{t}T(z_{t}))^{-1}=T(-z_{t})S_{t}^{-1}%
\]
we thus have $\dot{f}_{t}f_{t}^{-1}=T(S_{t}\dot{z}_{t})\dot{S}_{t}S_{t}^{-1}$.
Hence, by formula (\ref{hzt}),
\begin{align*}
H(z,t)  &  =-\int_{0}^{1}\sigma\left(  T(S_{t}\dot{z}_{t})\dot{S}_{t}%
S_{t}^{-1}(\lambda z),z\right)  d\lambda\\
&  =-\int_{0}^{1}\sigma\left(  \dot{S}_{t}S_{t}^{-1}(\lambda z),z\right)
d\lambda-\int_{0}^{1}\sigma\left(  S_{t}\dot{z}_{t},z\right)  d\lambda
\end{align*}
which is (\ref{hamzoter}) in view of formulas (\ref{hamzo}) and
(\ref{hamzobis}). and taking into account the antisymmetry of the symplectic form.
\end{proof}

The argument above can be easily reversed, yielding an explicit solution of
the Hamilton equations for a Hamiltonian function of the type
\begin{equation}
H(z,t)=\frac{1}{2}M(t)z^{2}+m(t)z. \label{hzm}%
\end{equation}
Let in fact $(S_{t})_{t}$ be the flow determined by the homogeneous part
$H_{0}(z,t)=\frac{1}{2}M(t)z^{2}$; we have (Hamilton's equations) $\dot{S}%
_{t}=JM(t)S_{t}$ hence $H_{0}(z,t)=-\frac{1}{2}J\dot{S}_{t}S_{t}^{-1}z$.
Setting $\dot{z}_{t}=S_{t}^{-1}Jm(t)$, we can thus rewrite $H$ in the form
(\ref{hamzoter}). Summarizing, the flow $(f_{t}^{H})_{t}$ determined by
(\ref{hzm}) is given by%
\begin{equation}
f_{t}^{H}=S_{t}T(z_{t})\text{ \ , \ }z_{t}=\int_{0}^{t}S_{t^{\prime}}%
^{-1}Jm(t^{\prime})dt^{\prime}. \label{fthm}%
\end{equation}
Let us illustrate this by a classical example: the time-dependent driven
harmonic oscillator.

\begin{example}
\label{exam2}Consider the Hamiltonian function
\[
H(z,t)=\frac{\omega(t)}{2}(p^{2}+x^{2})+f(t)x.
\]
In view of Example \ref{exam1} the flow $(S_{t})_{t}$ of the homogeneous part
is given by%
\begin{gather*}
S_{t}=%
\begin{pmatrix}
\cos\Omega(t) & \sin\Omega(t)\\
-\sin\Omega(t) & \cos\Omega(t)
\end{pmatrix}
\\
\Omega(t)=\int_{0}^{t}\omega(t^{\prime})dt^{\prime}+\Omega(0).
\end{gather*}
Setting $m(t)=(f(t),0)$.we have
\[
z_{t}=\left(
{\textstyle\int_{0}^{t}}
f(t^{\prime})\sin\Omega(t^{\prime})dt^{\prime},-%
{\textstyle\int_{0}^{t}}
f(t^{\prime})\cos\Omega(t)dt^{\prime}\right)
\]
and it suffices to apply formula (\ref{fthm}).
\end{example}

\subsubsection{The flow determined by $H_{0}+H_{1}$}

Assume we are given a \textquotedblleft master Hamiltonian\textquotedblright%
\ $H_{0}$, which we perturb by another Hamiltonian $H_{1}$ (an archetypical
example in the case $n=1$ would be the choice $H_{0}=p^{2}/2$ and
$H_{1}=V(x,t)$). The following result shows how to calculate the flow
$(f_{t}^{H})_{t}$ of $H=H_{0}+H_{1}$ knowing $(f_{t}^{H_{0}})$ and
$(f_{t}^{H_{1}})$.

\begin{proposition}
\label{prop6}Let $H=H_{0}+H_{1}$. We have
\begin{equation}
f_{t}^{H}=f_{t}^{H_{0}}f_{t}^{H_{1}^{t}}\text{ \ with }H_{1}^{t}%
(z,t)=H_{1}(f_{t}^{H_{0}}(z),t). \label{f}%
\end{equation}

\end{proposition}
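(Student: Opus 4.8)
The plan is to pass to the \textquotedblleft interaction picture\textquotedblright: set $g_{t}:=\left(  f_{t}^{H_{0}}\right)  ^{-1}\circ f_{t}^{H}$ and show that $(g_{t})_{t}$ is precisely the Hamiltonian flow determined by $H_{1}^{t}$. Since $g_{0}=I_{\mathrm{d}}$, the local existence-and-uniqueness statement recalled at the beginning of Section \ref{sec1} reduces everything to checking the single differential identity $\tfrac{d}{dt}g_{t}(z)=X_{H_{1}^{t}}(g_{t}(z),t)$; granting this, $f_{t}^{H}=f_{t}^{H_{0}}\circ g_{t}=f_{t}^{H_{0}}f_{t}^{H_{1}^{t}}$ is immediate.

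First I would abbreviate $a_{t}=f_{t}^{H_{0}}$ and $b_{t}=f_{t}^{H}$, so that (suppressing the explicit time-slot in the vector fields) $\dot{a}_{t}=X_{H_{0}}\circ a_{t}$ and $\dot{b}_{t}=X_{H}\circ b_{t}=X_{H_{0}}\circ b_{t}+X_{H_{1}}\circ b_{t}$. Differentiating the identity $a_{t}\circ a_{t}^{-1}=I_{\mathrm{d}}$ with respect to $t$ yields the standard formula $\partial_{t}(a_{t}^{-1})(u)=-[Da_{t}(a_{t}^{-1}(u))]^{-1}X_{H_{0}}(u,t)=-[D(a_{t}^{-1})(u)]\,X_{H_{0}}(u,t)$, using $D(a_{t})^{-1}=D(a_{t}^{-1})$ as in the proof of Proposition \ref{prop5}.

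Next I would apply the chain rule to $g_{t}(z)=a_{t}^{-1}(b_{t}(z))$. Writing $w=b_{t}(z)$ and combining the two previous displays gives $\tfrac{d}{dt}g_{t}(z)=\partial_{t}(a_{t}^{-1})(w)+D(a_{t}^{-1})(w)\dot{b}_{t}(z)=D(a_{t}^{-1})(w)\big[X_{H}(w,t)-X_{H_{0}}(w,t)\big]=D(a_{t}^{-1})(w)\,X_{H_{1}}(w,t)$. Since $w=a_{t}(g_{t}(z))$ and $D(a_{t}^{-1})(w)=[Da_{t}(g_{t}(z))]^{-1}$, this equals $[Da_{t}(g_{t}(z))]^{-1}X_{H_{1}}(a_{t}(g_{t}(z)),t)$, which by the transformation formula (\ref{XHg}) for Hamiltonian vector fields --- applied with the symplectomorphism $g=a_{t}=f_{t}^{H_{0}}$ held fixed at the instant $t$ --- is exactly $X_{H_{1}\circ f_{t}^{H_{0}}}(g_{t}(z),t)=X_{H_{1}^{t}}(g_{t}(z),t)$, because $H_{1}^{t}=H_{1}\circ f_{t}^{H_{0}}$ by definition and $X_{H_{1}^{t}}=J\partial_{z}H_{1}^{t}$ involves only $z$-derivatives.

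The only delicate point is the bookkeeping of the two distinct roles played by $t$ in the composite object: the genuine evolution parameter appearing in $b_{t}$ and $a_{t}^{-1}$, versus the parameter frozen inside the pushed-forward Hamiltonian $H_{1}^{t}$; and, relatedly, the verification that (\ref{XHg}), stated for a single fixed symplectomorphism, may legitimately be invoked with $g=f_{t}^{H_{0}}$ at each fixed $t$. Everything else is the chain rule together with the uniqueness of integral curves. If one prefers not to appeal to uniqueness, the same computation can be run in reverse: differentiate $f_{t}^{H_{0}}\circ f_{t}^{H_{1}^{t}}$ directly and check that the result satisfies $\dot{z}=X_{H}(z,t)$ with the correct initial value, which is the identical calculation read backwards.
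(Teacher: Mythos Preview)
Your proof is correct and is essentially the paper's argument recast in the interaction picture: the paper differentiates $f_{t}^{H_{0}}f_{t}^{H_{1}^{t}}$ directly and checks that it satisfies $\dot z=X_{H}(z,t)$ using the chain rule and formula (\ref{XHg}), whereas you differentiate the quotient $(f_{t}^{H_{0}})^{-1}f_{t}^{H}$ and check that it satisfies $\dot z=X_{H_{1}^{t}}(z,t)$ by the same tools. As you yourself observe in your final sentence, these are the identical calculation read in opposite directions.
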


\begin{proof}
Since $f_{0}^{H}$ and $f_{0}^{H_{0}}f_{0}^{H_{1}^{t}}$ are both the identity
on $\mathbb{R}^{2n}$ it suffices to show that the $t$-derivatives of
$f_{t}^{H}$ and $f_{t}^{H_{0}}f_{t}^{H_{1}^{t}}$ are equal. We have, using the
product and chain rules,
\begin{align*}
\frac{d}{dt}(f_{t}^{H_{0}}(f_{t}^{H_{1}^{t}}(z)))  &  =\frac{d}{dt}%
f_{t}^{H_{0}}(f_{t}^{H_{1}^{t}}(z))+Df_{t}^{H_{0}}(f_{t}^{H_{1}^{t}}%
(z))\frac{d}{dt}f_{t}^{H_{1}^{t}}(z)\\
&  =X_{H_{0}}(f_{t}^{H_{0}}f_{t}^{H_{1}^{t}}(z))+Df_{t}^{H_{0}}(f_{t}%
^{H_{1}^{t}}(z))X_{H_{1}^{t}}(f_{t}^{H_{1}^{t}}(z)).
\end{align*}
By definition, $H_{1}^{t}=H_{1}\circ f_{t}^{H_{0}}$ hence, using (\ref{XHg}),%
\begin{align*}
Df_{t}^{H_{0}}(f_{t}^{H_{1}^{t}}(z))X_{H_{1}^{t}}(f_{t}^{H_{1}^{t}}(z))  &
=Df_{t}^{H_{0}}(f_{t}^{H_{1}^{t}}(z))X_{H_{1}^{t}}((f_{t}^{H_{0}})^{-1}%
f_{t}^{H_{0}}f_{t}^{H_{1}^{t}}(z))\\
&  =X_{H_{1}^{t}\circ(f_{t}^{H_{0}})^{-1}}(f_{t}^{H_{0}}f_{t}^{H_{1}^{t}%
}(z))\\
&  =X_{H_{1}}(f_{t}^{H_{0}}f_{t}^{H_{1}^{t}}(z)).
\end{align*}
Summarizing,%
\begin{align*}
\frac{d}{dt}(f_{t}^{H_{0}}(f_{t}^{H_{1}^{t}}(z)))  &  =X_{H_{0}}(f_{t}^{H_{0}%
}f_{t}^{H_{1}^{t}}(z))+X_{H_{1}}(f_{t}^{H_{0}}f_{t}^{H_{1}^{t}}(z))\\
&  =X_{H_{0}+H_{1}}(f_{t}^{H_{0}}(f_{t}^{H_{1}^{t}}(z)))
\end{align*}
which we set out to prove.
\end{proof}

Assume, in particular, that $H$ is separable, that is $H=H_{0}+H_{1}$\ with
$H_{0}(z)=T(p)$ and $H_{1}(z)=V(x)$. The flow $(f_{t}^{H_{0}})$ is given by%
\begin{equation}
f_{t}^{H_{0}}(x,p)=(x+t\partial_{p}T(p),p) \label{sep1}%
\end{equation}
and we thus have%
\begin{equation}
H_{1}^{t}(x,p)=V(x+t\partial_{p}T(p)) \label{h1t}%
\end{equation}
so that $(f_{t}^{H_{1}})$ is obtained by solving the Hamilton equations%
\begin{align}
\dot{x}  &  =t\partial_{x}V(x+t\partial_{p}T(p))D_{p}^{2}T(p)\label{x1t}\\
\dot{p}  &  =-\partial_{x}V(x+t\partial_{p}T(p)) \label{p1t}%
\end{align}
where $D_{p}^{2}T(p)$ is the Hessian matrix (= matrix of second derivatives)
of $T$ calculated at $p$. When $T(p)=\frac{1}{2}p^{2}$ is the ordinary kinetic
energy these formulas reduce to the simple system%
\begin{equation}
\dot{x}=t\partial_{x}V(x+tp)\text{ \ , \ }\dot{p}=-\partial_{x}V(x+tp).
\end{equation}
Setting $u=x+tp$ this system is equivalent to the equations $\ddot{u}%
+\partial_{u}V(u)=0$ and\ $p=\dot{u}$, that is to Hamilton's equations for
$H$. There are potential applications of this result to the theory of
symplectic integrators (see\textit{ e.g.} Chorin \textit{et al}.
\cite{chorin}, McLachlan and Atela \cite{mac}, Wang \cite{Wang}).

Let us illustrate these formulas on an elementary example.

\begin{example}
\label{exa1}Assume $n=1$ and let $H_{0}(x,p)=\frac{1}{2}p^{2}$ and
$H_{1}(x,p)=\frac{1}{2}x^{2}$. The Hamiltonian $(f_{t}^{H_{0}})$ flow
determined by $H_{0}$ is identified with the one-parameter group of symplectic
matrices $S_{t}=%
\begin{pmatrix}
1 & t\\
0 & 1
\end{pmatrix}
$. We thus have $H_{1}^{t}(x,p)=\frac{1}{2}(x+pt)^{2}$ and the Hamilton
equations are $\dot{x}=(x+pt)t$ and $\dot{p}=-(x+pt)$ hence the flow
determined by $H_{1}^{t}$ consists of the symplectic matrices%
\begin{equation}
f_{t}^{H_{1}^{t}}=%
\begin{pmatrix}
\cos t+t\sin t & \sin t-t\cos t\\
-\sin t & \cos t
\end{pmatrix}
\label{mat1}%
\end{equation}
and we thus have
\begin{equation}
f_{t}^{H_{0}}f_{t}^{H_{1}^{t}}=%
\begin{pmatrix}
\cos t & \sin t\\
-\sin t & \cos t
\end{pmatrix}
\label{mat2}%
\end{equation}
which is the flow of the harmonic oscillator Hamiltonian $H(z)=\frac{1}%
{2}(p^{2}+x^{2})$.
\end{example}

\subsection{The group $\operatorname*{Ham}(n)$ and its universal covering}

\subsubsection{Products of Hamiltonian isotopies and $\operatorname*{Ham}(n)$}

Let $(f_{t}^{H})_{t}$ and $(f_{t}^{K})_{t}$ be Hamiltonian flows; we assume
that they exist for $t\in I_{T}=[-T,T]$. It follows from Proposition
\ref{prop6} above that%
\begin{align}
f_{t}^{H}f_{t}^{K}  &  =f_{t}^{H\#K}\text{ \ \ \textit{with} \ \ }%
H\#K(z,t)=H(z,t)+K((f_{t}^{H})^{-1}(z),t).\label{ch1}\\
(f_{t}^{H})^{-1}  &  =f_{t}^{\bar{H}}\text{ \ \ \textit{with} \ \ }\bar
{H}(z,t)=-H(f_{t}^{H}(z),t). \label{ch2}%
\end{align}
In fact, setting $H_{0}(z,t)=H(z,t)$ and $H_{1}=K((f_{t}^{H})^{-1}(z),t)$ we
have $H_{0}+H_{1}=H\#K$ hence, applying formula (\ref{f}),%
\[
f_{t}^{H\#K}=f_{t}^{H_{0}+H_{1}}=f_{t}^{H_{0}}f_{t}^{H_{1}^{t}}=f_{t}^{H}%
f_{t}^{K}%
\]
which is (\ref{ch1}). Formula (\ref{ch2}) immediately follows noting that
$f_{t}^{H}f_{t}^{\bar{H}}=I_{\mathrm{d}}$ (for an alternative proof see Hofer
and Zehnder \cite{HZ}, p.144).

We next assume that all Hamiltonians $H$ are constant outside a compact subset
$\mathcal{K}$ of $\mathbb{R}^{2n}$. We will call such Hamiltonian functions
\textit{compactly supported} (this is a slight, but innocuous, abuse of
terminology). The vector field $X_{H}$ determined by such a Hamiltonian
function $H$ is complete (see for instance Abraham and Marsden \cite{AM}) and
hence the flow $(f_{t}^{H})_{t}$ exists for all times $t$. Hamiltonian flows
associated with such Hamiltonian functions are the identity outside the
compact set $\mathcal{K}$. Hamiltonian symplectomorphisms coming from
compactly supported Hamiltonians form a subgroup $\operatorname*{Ham}(n)$ of
the group $\operatorname*{Symp}(n)$ of all symplectomorphisms; it is in fact a
normal subgroup of $\operatorname*{Symp}(n)$ as follows from the conjugation
formula (\ref{FHg}). That $\operatorname*{Ham}(n)$ is a group follows from the
two formulas (\ref{ch1}) and (\ref{ch2}) above. It turns out that
$\operatorname*{Ham}(n)$ is a connected group. To prove this it suffices to
show that every $f\in\operatorname*{Ham}(n)$ can be joined to the identity
$I_{\mathrm{d}}$ by a path in $\operatorname*{Ham}(n)$. But by definition of
$\operatorname*{Ham}(n)$ there exists a Hamiltonian function $H$ such that
$f=f_{1}^{H}$; the path we are looking for is just $(f_{t}^{H})_{0\leq t\leq
1}$. It follows, using Proposition \ref{prop5}, that $\operatorname*{Ham}(n)$
is the connected component of the group $\operatorname*{Symp}_{\mathrm{c}}(n)$
of all compactly supported canonical transformations, \emph{i.e}. equal to the
identity outside a compact subset of $\mathbb{R}^{2n}$ (for detail see Hofer
and Zehnder \cite{HZ})..

\subsubsection{The universal covering and a homotopy result}

Consider the universal covering group $\widetilde{\operatorname*{Ham}}(n)$
\cite{HZ} of $\operatorname*{Ham}(n)$; the elements of
$\widetilde{\operatorname*{Ham}}(n)$ consist of homotopy classes (with fixed
endpoints) of Hamiltonian isotopies $(f_{t}^{H})_{t\in I_{T}}$ and the group
law is given by%
\begin{equation}
\lbrack f_{t}^{H}][f_{t}^{K}]=[f_{t}^{H\#K}] \label{group1}%
\end{equation}
where $[f_{t}^{H}]$ is the homotopy class (with fixed endpoints) of the
isotopy $(f_{t}^{H})_{t\in I_{T}}$. The following result shows that the
product in $\widetilde{\operatorname*{Ham}}(n)$ can be defined using either
the relation (\ref{group1}) , or by concatenation of Hamiltonian isotopies:

\begin{proposition}
Let $(f_{t}^{H})_{t}$ and $(f_{t}^{K})_{t}$ be two Hamiltonian isotopies. (i)
The paths $(f_{t}^{H\#K})_{0\leq t\leq2t_{0}}$ and $(f_{t}^{H\Diamond
K})_{1\leq t\leq2t_{0}}$ where%
\begin{equation}
f_{t}^{H\Diamond K}=\left\{
\begin{array}
[c]{c}%
f_{t}^{K}\text{ \ for \ }0\leq t\leq t_{0}\\
f_{t-t_{0}}^{H}f_{t_{0}}^{K}\text{ \ for \ }t_{0}\leq t\leq2t_{0}%
\end{array}
\right.  \label{fthomotopy}%
\end{equation}
are homotopic with fixed endpoints $I_{\mathrm{d}}$ and $f_{t_{0}}%
^{H\#K}=f_{t_{0}}^{H\Diamond K}$. (ii) The group law of
$\widetilde{\operatorname*{Ham}}(n)$ can thus be defined by%
\begin{equation}
\lbrack f_{t}^{H}][f_{t}^{K}]=[f_{t}^{H\Diamond K}]. \label{group2}%
\end{equation}

\end{proposition}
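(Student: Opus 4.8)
The plan is to reduce part (i) to the purely topological fact that, in a topological group, the pointwise product of two paths issued from the identity is homotopic with fixed endpoints to the concatenation of the first path with the right translate (by the terminal value of the first) of the second, and then to realize that homotopy through genuine Hamiltonian isotopies. Concretely, recall from (\ref{ch1}) that $f_{t}^{H\#K}=f_{t}^{H}f_{t}^{K}$, so the path $(f_{t}^{H\#K})_{t}$ is just $t\mapsto f_{t}^{H}f_{t}^{K}$; after the obvious affine rescaling of the time parameter it becomes a path on $[0,2t_{0}]$ running from $I_{\mathrm{d}}$ to $f_{t_{0}}^{H}f_{t_{0}}^{K}=f_{t_{0}}^{H\#K}$, which is exactly the pair of endpoints of the concatenated path $(f_{t}^{H\Diamond K})_{0\le t\le 2t_{0}}$ of (\ref{fthomotopy}). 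It remains to exhibit an explicit homotopy between these two paths.

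First I would introduce a two-parameter family of reparametrizations of the time variable. For $s\in[0,1]$ put
\[
\psi_{s}(t)=(1-s)\tfrac{t}{2}+s\min(t,t_{0}),\qquad \varphi_{s}(t)=(1-s)\tfrac{t}{2}+s\max(0,t-t_{0}),
\]
so that $\psi_{s},\varphi_{s}$ map $[0,2t_{0}]$ continuously into $[0,t_{0}]$, depend continuously on $(s,t)$, and satisfy $\psi_{s}(0)=\varphi_{s}(0)=0$ and $\psi_{s}(2t_{0})=\varphi_{s}(2t_{0})=t_{0}$ for every $s$; at $s=0$ one has the ``simultaneous'' schedule $\psi_{0}(t)=\varphi_{0}(t)=t/2$, and at $s=1$ the ``sequential'' one $\psi_{1}(t)=\min(t,t_{0})$, $\varphi_{1}(t)=\max(0,t-t_{0})$. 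Then define
\[
F(s,t)=f_{\varphi_{s}(t)}^{H}\circ f_{\psi_{s}(t)}^{K},\qquad (s,t)\in[0,1]\times[0,2t_{0}].
\]
I would then verify that $F$ does what is wanted: it is continuous in $(s,t)$, since the flows $f_{\tau}^{H},f_{\tau}^{K}$ depend in a $C^{1}$ (hence continuous) way on $\tau$ and $(s,t)\mapsto(\varphi_{s}(t),\psi_{s}(t))$ is continuous; for each fixed $s$ the path $t\mapsto F(s,t)$ lies in $\operatorname*{Ham}(n)$ and passes through $I_{\mathrm{d}}$, because $\operatorname*{Ham}(n)$ is a group and each factor belongs to it, so it is an admissible representative; the endpoints are fixed, $F(s,0)=I_{\mathrm{d}}$ and $F(s,2t_{0})=f_{t_{0}}^{H}f_{t_{0}}^{K}=f_{t_{0}}^{H\#K}$ for all $s$; at $s=0$, $F(0,t)=f_{t/2}^{H}f_{t/2}^{K}=f_{t/2}^{H\#K}$ is the rescaled product path; and at $s=1$, $F(1,t)=f_{\max(0,t-t_{0})}^{H}\circ f_{\min(t,t_{0})}^{K}$ equals $f_{t}^{K}$ for $0\le t\le t_{0}$ and $f_{t-t_{0}}^{H}f_{t_{0}}^{K}$ for $t_{0}\le t\le 2t_{0}$, which is precisely $f_{t}^{H\Diamond K}$. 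Hence the two paths are homotopic with fixed endpoints, proving (i). (If one insists that representatives be $C^{1}$ rather than merely continuous, which the universal cover does not require, one simply replaces $\min$ and $\max$ by $C^{1}$ functions agreeing with them away from the corner.)

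Part (ii) is then immediate: homotopic-rel-endpoints Hamiltonian isotopies define the same element of $\widetilde{\operatorname*{Ham}}(n)$, so $[f_{t}^{H\#K}]=[f_{t}^{H\Diamond K}]$, and combined with the defining relation (\ref{group1}) this yields $[f_{t}^{H}][f_{t}^{K}]=[f_{t}^{H\Diamond K}]$, which is (\ref{group2}).

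The only genuinely delicate point, and the step I would treat most carefully, is the bookkeeping at the corner $(s,t)=(1,t_{0})$ of the square: one must check that the reparametrizations glue continuously there and that every intermediate slice $F(s,\cdot)$ stays within the class of paths used to construct $\widetilde{\operatorname*{Ham}}(n)$ (transparent from the explicit formulas above, but easy to fumble in a purely abstract argument). Everything else is the routine Eckmann--Hilton-type manipulation identifying pointwise product and concatenation of loops in a topological group, here made concrete so that the interpolation runs through honest Hamiltonian isotopies.
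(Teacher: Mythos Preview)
Your proof is correct and follows essentially the same approach as the paper: both construct the homotopy explicitly in the form $h(s,t)=f^{H}_{\alpha(s,t)}\circ f^{K}_{\beta(s,t)}$ for suitable time-reparametrizations $\alpha,\beta$ interpolating between the ``simultaneous'' schedule $(\alpha,\beta)=(t,t)$ and the ``sequential'' one. The paper uses piecewise-affine reparametrizations with a moving breakpoint (its functions $a(t,s),b(t,s)$), whereas you take the convex combination $\varphi_{s},\psi_{s}$ of the two extremal schedules; these are minor variants of the same Eckmann--Hilton construction, and your formulas are arguably tidier. Your remark that one must first rescale the product path so that its endpoints match those of the concatenation is in fact a clarification of a small inconsistency in the paper's parametrization, and your part (ii) matches the paper's verbatim.
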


\begin{proof}
(i) Rescaling time if necessary, it suffices to consider the case $t_{0}%
=\frac{1}{2}$. Let us construct explicitly a homotopy taking the path
$(f_{t}^{H}f_{t}^{K})_{0\leq t\leq1}$ to the path $(f_{t}^{H\Diamond
K})_{0\leq t\leq1}$, that is, a continuous mapping
\[
h:[0,1]\times\lbrack0,1]\longrightarrow\operatorname*{Ham}(n)
\]
such that $h(t,0)=f_{t}^{H}f_{t}^{K}$ and\ $h(t,1)=f_{t}^{H\Diamond K}$ for
$0\leq t\leq1$. Since we want to preserve endpoints during the deformation, we
require in addition that $h(0,s)=I_{\mathrm{d}}$ and $h(1,s)=f_{1}^{H}$ for
all $s\in\lbrack0,1]$. Define $h$ by $h(t,s)=a(t,s)b(t,s)$ where $a$ and $b$
are the functions%
\[
a(t,s)=\left\{
\begin{array}
[c]{c}%
I_{\mathrm{d}}\text{ \ for \ }0\leq t\leq\frac{s}{2}\\
f_{(2t-s)/(2-s)\text{ \ \ }}^{H}\text{\ for \ }\frac{s}{2}\leq t\leq1
\end{array}
\right.
\]
and%
\[
b(t,s)=\left\{
\begin{array}
[c]{c}%
f_{2t/(2-s)}^{K}\text{ \ }\ \text{for \ }0\leq t\leq1-\frac{s}{2}\\
f_{1\text{\ \ }}^{K}\text{ \ for \ }\frac{s}{2}\leq t\leq1.
\end{array}
\right.
\]
We have $a(t,0)=f_{t}^{H}$, $b(t,0)=f_{t}^{K}$ hence $h(t,0)=f_{t}^{H}%
f_{t}^{K}$; similarly
\[
h(t,1)=\left\{
\begin{array}
[c]{c}%
f_{2t}^{K}\text{ \ }\ \text{for \ }0\leq t\leq\frac{1}{2}\\
f_{2t-1}^{H}f_{1}^{K}\text{ \ for \ }\frac{1}{2}\leq t\leq1
\end{array}
\right.
\]
that is $h(t,1)=f_{t}^{H\Diamond K}$. The relations $h(0,s)=I_{\mathrm{d}}$
and $h(1,s)=f_{1}^{H}$ for $0\leq s\leq1$ are easily verified. ii) That the
group law of $\widetilde{\operatorname*{Ham}}(n)$ can be defined by
(\ref{group2}) follows from (i) and (\ref{group1}).
\end{proof}

One checks that the path $(f_{t}^{H\Diamond K})_{0\leq t\leq t_{0}}$ is an
isotopy corresponding to the Hamiltonian
\begin{equation}
H\Diamond K(z,t)=\left\{
\begin{array}
[c]{c}%
K(z,t)\text{ \ for \ }0\leq t\leq t_{0}\\
H(z,t-t_{0})\text{\ for \ }t_{0}\leq t\leq2t_{0}%
\end{array}
\right.  \label{hkdiam}%
\end{equation}
which is discontinuous at $t=t_{0}$. In fact, using formula (\ref{hzt}) in
Proposition \ref{prop5} we have%
\[
H\Diamond K(z,t)=-\int_{0}^{1}\sigma\left(  \tfrac{d}{dt}f_{t}^{K}\circ
(f_{t}^{K})^{-1}(\lambda z),z\right)  d\lambda=K(z,t)
\]
for $0\leq t\leq t_{0}$, and similarly
\begin{align*}
H\Diamond K(z,t)  &  =-\int_{0}^{1}\sigma\left(  \tfrac{d}{dt}(f_{t-t_{0}}%
^{H}f_{t_{0}}^{K})\circ(f_{t-t_{0}}^{H}f_{t_{0}}^{K})^{-1}(\lambda
z),z\right)  d\lambda\text{ }\\
&  =-\int_{0}^{1}\sigma\left(  \tfrac{d}{dt}f_{t-t_{0}}^{H}\circ(f_{t-t_{0}%
}^{H})^{-1}(\lambda z),z\right)  d\lambda\\
&  =H(z,t-t_{0})
\end{align*}
for $t_{0}\leq t\leq2t_{0}$.

\section{The Groups $\operatorname*{Mp}(n)$ and $\operatorname*{IMp}%
(n)$\label{sec2}}

The symplectic group $\operatorname*{Sp}(n)$ has covering groups
$\operatorname*{Sp}_{q}(n)$ of all orders $q=2,...,\infty$. Among all these
the double covering $\operatorname*{Sp}_{2}(n)$ plays a very special role in
mathematics and physics, because it can be faithfully represented by a group
of unitary operators acting on $L^{2}(\mathbb{R}^{n})$. This group is called
the metaplectic group and denoted by $\operatorname*{Mp}(n)$. We will loosely
speak, as is usual in the literature, about the \textquotedblleft metaplectic
representation $\mu$ of the symplectic group\textquotedblright; one should
keep in mind that, strictly speaking, $\mu$ is a representation
$\operatorname*{Sp}_{2}(n)\longrightarrow\operatorname*{Mp}(n)$ of the double
cover of $\operatorname*{Sp}(n)$.

\subsection{Definition and main properties}

The symplectic group $\operatorname*{Sp}(n)$ is a connected Lie group
contractible to the unitary group $U(n)$, hence $\pi_{1}[\operatorname*{Sp}%
(n)]\simeq(\mathbb{Z},+)$. It follows that $\operatorname*{Sp}(n)$ has
coverings $\Pi_{q}:\operatorname*{Sp}_{q}(n)\longrightarrow\operatorname*{Sp}%
(n)$ of all orders $q=2,...,\infty$. An essential fact is that the double
covering $\operatorname*{Sp}_{2}(n)$ has a faithful representation as a group
of unitary operators acting on $L^{2}(\mathbb{R}^{n})$; this group is the
metaplectic group $\operatorname*{Mp}(n)$.

\subsubsection{Quadratic Fourier transforms and generating functions}

The group $\operatorname*{Mp}(n)$ is generated by the quadratic Fourier
transforms $\widehat{S}_{W,m}$ defined as follows: let
\begin{equation}
W(x,x^{\prime})=\tfrac{1}{2}Px^{2}-Lx\cdot x^{\prime}+\tfrac{1}{2}Qx^{\prime2}
\label{W}%
\end{equation}
be a real quadratic form where $P=P^{T}$, $Q=Q^{T}$, $\det L\neq0$ and set
$\Delta_{m}(W)=i^{m}\sqrt{|\det L|}$ where the integer $m$ corresponds to a
choice of $\arg\det L$. For $\psi\in\mathcal{S}(\mathbb{R}^{n})$%
\begin{equation}
\widehat{S}_{W,m}\psi(x)=\left(  \tfrac{1}{2\pi i\hbar}\right)  ^{n/2}%
\Delta_{m}(W)\int_{\mathbb{R}^{n}}e^{\frac{i}{\hbar}W(x,x^{\prime})}%
\psi(x^{\prime})dx^{\prime} \label{SWm}%
\end{equation}
(with the convention $\arg i=\pi/2$). In fact:

\begin{proposition}
\label{prop1}(i) Every $\widehat{S}\in\operatorname*{Mp}(n)$ can be written
(non-uniquely) as the product $\widehat{S}_{W,m}\widehat{S}_{W^{\prime
},m^{\prime}}$ of two quadratic Fourier transforms. (ii) We have
\[
\widehat{S}_{W,m}\widehat{S}_{W^{\prime},m^{\prime}}=\widehat{S}%
_{W^{\prime\prime},m^{\prime\prime}}%
\]
if and only if $\det(P^{\prime}+Q)\neq0$ and in this case%
\begin{align}
P^{\prime\prime}  &  =P-L^{T}(P^{\prime}+Q)^{-1}L\label{P''}\\
L^{\prime\prime}  &  =L^{\prime}(P^{\prime}+Q)^{-1}L\label{L''}\\
Q^{\prime\prime}  &  =Q^{\prime}-L^{\prime}(P^{\prime}+Q)^{-1}(L^{\prime}%
)^{T}; \label{Q''}%
\end{align}
and the Maslov index of $\widehat{S}_{W^{\prime\prime},m^{\prime\prime}}$ is
given by
\begin{equation}
m^{\prime\prime}\equiv m^{\prime}+m^{\prime}-\operatorname*{Inert}(P^{\prime
}+Q)\text{ \ }\operatorname{mod}4. \label{mmm}%
\end{equation}
(iii) We have $(\widehat{S}_{W,m})^{-1}=\widehat{S}_{W^{\prime},m^{\prime}}$
where $W^{\prime}(x,x^{\prime})=-W(x^{\prime},x)$ and $m^{\prime}\equiv n-m$,
$\operatorname{mod}4$.
\end{proposition}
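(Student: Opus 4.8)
The plan is to prove the three parts of Proposition \ref{prop1} essentially by direct computation with the integral kernels in (\ref{SWm}), organizing everything around the composition law (ii), from which (i) and (iii) follow with little extra work.

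\textbf{Part (ii).} First I would compute the composition $\widehat{S}_{W,m}\widehat{S}_{W^{\prime},m^{\prime}}\psi(x)$ directly from the definition. Writing it out, one gets
\[
\left(\tfrac{1}{2\pi i\hbar}\right)^{n}\Delta_m(W)\Delta_{m^{\prime}}(W^{\prime})\int_{\mathbb{R}^n}\!\!\int_{\mathbb{R}^n} e^{\frac{i}{\hbar}[W(x,x^{\prime})+W^{\prime}(x^{\prime},x^{\prime\prime})]}\psi(x^{\prime\prime})\,dx^{\prime}dx^{\prime\prime}.
\]
The inner $x^{\prime}$-integral is a Gaussian (Fresnel) integral, since $W(x,x^{\prime})+W^{\prime}(x^{\prime},x^{\prime\prime})$ is quadratic in $x^{\prime}$ with Hessian $P^{\prime}+Q$ (up to sign conventions). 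The Fresnel integral converges (in the oscillatory/distributional sense) and is evaluated by the stationary-phase formula exactly when $\det(P^{\prime}+Q)\neq 0$, which pins down the nondegeneracy condition. Completing the square in $x^{\prime}$ (stationary point $x^{\prime}_\ast$ solving $(P^{\prime}+Q)x^{\prime}=L^Tx - (L^{\prime})^Tx^{\prime\prime}$, after reading off the coefficients of (\ref{W}) for both $W$ and $W^{\prime}$) and substituting back gives a quadratic form in $(x,x^{\prime\prime})$ which, after collecting terms, is precisely $W^{\prime\prime}(x,x^{\prime\prime})$ with $P^{\prime\prime},L^{\prime\prime},Q^{\prime\prime}$ as in (\ref{P''})--(\ref{Q''}); this is a routine but bookkeeping-heavy algebraic identity. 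The prefactor produced by the Fresnel formula is $(2\pi\hbar/i)^{n/2}|\det(P^{\prime}+Q)|^{-1/2} e^{i\pi\,\mathrm{sgn}(P^{\prime}+Q)/4}$ times the original constants; combining the powers of $2\pi i\hbar$ and the square roots $\sqrt{|\det L|}\sqrt{|\det L^{\prime}|}/\sqrt{|\det(P^{\prime}+Q)|}=\sqrt{|\det L^{\prime\prime}|}$ (using (\ref{L''})), one recovers exactly the normalization $(2\pi i\hbar)^{-n/2}\Delta_{m^{\prime\prime}}(W^{\prime\prime})$ provided the phase exponents are matched, which is what forces the Maslov-index relation (\ref{mmm}): the index $m^{\prime\prime}$ must absorb $m+m^{\prime}$ plus the signature contribution, and $\tfrac12(n-\mathrm{sgn}(P^{\prime}+Q)) = \operatorname*{Inert}(P^{\prime}+Q)$ converts the signature into the inertia (number of negative eigenvalues), giving $m^{\prime\prime}\equiv m+m^{\prime}-\operatorname*{Inert}(P^{\prime}+Q)\bmod 4$. (I note the statement as printed has an evident typo, writing $m^{\prime}+m^{\prime}$ for $m+m^{\prime}$.)

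\textbf{Part (iii).} For the inverse, I would simply check that $\widehat{S}_{W,m}\widehat{S}_{W^{\prime},m^{\prime}}=\operatorname{Id}$ when $W^{\prime}(x,x^{\prime})=-W(x^{\prime},x)$. The most economical route is to recognize $\widehat{S}_{W,m}$ as a composition of the $\hbar$-Fourier transform with metaplectic operators attached to the block-triangular symplectic matrices associated with $P$, $Q$, and $L$ (the standard generating-function decomposition), and observe that reversing the roles of $x,x^{\prime}$ and negating $W$ inverts each factor. Alternatively one plugs $W^{\prime}=-W(x^{\prime},x)$ into the composition formulas from (ii): here $P^{\prime}=-Q$, $Q^{\prime}=-P$, $L^{\prime}=L^T$, so $P^{\prime}+Q=0$ is degenerate and the formulas of (ii) do not apply directly; instead one computes the double integral, doing the $x^{\prime}$-integral to produce a $\delta$-function in $x-x^{\prime\prime}$ (since the phase is linear in $x^{\prime}$), which collapses to the identity up to a constant. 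Tracking that constant — a power of $i$ coming from $(2\pi i\hbar)^{-n/2}$ and the two $\Delta$ factors — gives $m^{\prime}\equiv n-m\bmod 4$.

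\textbf{Part (i).} This I would get as a corollary: given $\widehat{S}\in\operatorname*{Mp}(n)$ projecting to $S=\begin{pmatrix}A&B\\C&D\end{pmatrix}\in\operatorname*{Sp}(n)$, if $\det B\neq 0$ then $S$ itself admits a free generating function $W$ and $\widehat{S}=\pm\widehat{S}_{W,m}$, already a single quadratic Fourier transform (a degenerate ``product''); if $\det B=0$, one writes $S=S_1S_2$ with each $S_i$ having invertible upper-right block (this is a standard fact about $\operatorname*{Sp}(n)$, e.g. multiply by a suitable symplectic shear), lifts to $\widehat S=\widehat S_1\widehat S_2$ in $\operatorname*{Mp}(n)$ by the covering-group structure, and represents each $\widehat S_i$ as $\widehat S_{W_i,m_i}$. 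Non-uniqueness is manifest since the factorization $S=S_1S_2$ is far from unique.

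\textbf{Main obstacle.} The conceptual content is light; the real difficulty is purely clerical: (a) getting every sign and factor of $i$, $2\pi$, $\hbar$ right in the Fresnel/stationary-phase evaluation, since the whole point of (\ref{mmm}) is that these phases are tracked modulo $4$; and (b) justifying the Gaussian integral rigorously — it is oscillatory, not absolutely convergent, so one works either with $\psi\in\mathcal S(\mathbb R^n)$ and an $\varepsilon$-regularization $P^{\prime}+Q\mapsto P^{\prime}+Q+i\varepsilon\cdot(\text{something positive})$, or invokes the known formula for the Fourier transform of a Gaussian with complex (degenerate-limit) covariance. I would handle (b) by citing the standard metaplectic-operator machinery referenced in the paper (e.g. \cite{Birk}) rather than reproving Fresnel's formula from scratch, and devote the written proof mainly to the algebra of (a).
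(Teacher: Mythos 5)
Your proposal is correct in outline, but it follows a genuinely different computational route from the paper. The paper's proof is a one-liner: for the algebraic formulas (\ref{P''})--(\ref{Q''}) it simply multiplies the two block matrices $S_W$ and $S_{W'}$ using the explicit representation (\ref{swplq}), identifies the resulting matrix as a free symplectic matrix whenever its upper-right block $L^{-1}(P'+Q)(L')^{-1}$ is invertible (equivalently $\det(P'+Q)\neq0$), and then reads off $P''$, $L''$, $Q''$ from the block structure via (\ref{WABCD}); for part (i) and the Maslov-index formula (\ref{mmm}) it cites Leray and the author's earlier work. You instead carry out the full Fresnel (oscillatory Gaussian) integral in the inner $x'$-variable. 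This is heavier bookkeeping, but it buys you something the paper's matrix route does not give for free: the phase and normalization tracking that produces (\ref{mmm}) falls out of the same stationary-phase evaluation, whereas the paper's proof outsources that to the references. Both approaches are standard and valid, and your observation that $\tfrac12(n-\operatorname{sign}(P'+Q))=\operatorname*{Inert}(P'+Q)$ is indeed the key translation between the Fresnel signature and the inertia index. You are also right that the statement has a typo: (\ref{mmm}) should read $m+m'-\operatorname*{Inert}(P'+Q)$, not $m'+m'$. One small clerical caution (which you yourself flag as the main obstacle): in your stationary-point equation the gradient $\partial_{x'}[W(x,x')+W'(x',x'')]$ gives $(P'+Q)x'_{\ast}=Lx+(L')^{T}x''$ with the stated sign convention for $W$, not $L^{T}x-(L')^{T}x''$ as you wrote; with that correction the completed square reproduces (\ref{P''})--(\ref{Q''}) exactly. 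Your treatment of (iii) via the degenerate limit (the $x'$-integral collapsing to a $\delta$-function when $P'+Q=0$) is the right way to handle the case excluded by (ii), and your sketch of (i) (reduce to $\det B\neq0$ by a symplectic shear and lift via the covering projection) is the standard argument that the cited references carry out in detail.
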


\begin{proof}
For (i) and formula (\ref{mmm}) in (iii), see Leray \cite{Leray}, de Gosson
\cite{AIF,Birk}; formulas (\ref{P''})--(\ref{Q''}) are obtained by matrix
multiplication using (\ref{swplq}).
\end{proof}

\begin{remark}
Formula (\ref{mmm}) identifies the integer $m$ with the Maslov index modulo 4
on $\operatorname*{Mp}(n)$; see Leray \cite{Leray}, Souriau \cite{Souriau}, de
Gosson \cite{AIF,JMPA,Birk}.
\end{remark}

The covering projection $\Pi:\operatorname*{Mp}(n)\longrightarrow
\operatorname*{Sp}(n)$ is defined by its action on the generators
$\widehat{S}_{W,m}$:

\begin{proposition}
\label{prop2}We have $\Pi(\widehat{S}_{W,m})=S_{W}$ where $S_{W}%
\in\operatorname*{Sp}(n)$ is generated by the quadratic form $W$, that is
\begin{equation}
(x,p)=S_{W}(x^{\prime},p^{\prime})\Longleftrightarrow\left\{
\begin{array}
[c]{c}%
p=\partial_{x}W(x,x^{\prime})\\
p^{\prime}=-\partial_{x^{\prime}}W(x,x^{\prime})
\end{array}
\right.  . \label{swpp}%
\end{equation}

\end{proposition}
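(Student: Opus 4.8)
The plan is to check the assertion on the generating quadratic Fourier transforms and then extend it by the group law. First I would confirm that the matrix $S_{W}$ described by (\ref{swpp}) is well defined and lies in $\operatorname*{Sp}(n)$. Differentiating $W(x,x')=\tfrac{1}{2}Px^{2}-Lx\cdot x'+\tfrac{1}{2}Qx'^{2}$, the system $p=\partial_{x}W$, $p'=-\partial_{x'}W$ becomes $p=Px-L^{T}x'$ and $p'=Lx-Qx'$; since $\det L\neq0$ the second equation gives $x=L^{-1}Qx'+L^{-1}p'$, and feeding this into the first yields $(x,p)=S_{W}(x',p')$ with
\begin{equation*}
S_{W}=\begin{pmatrix} L^{-1}Q & L^{-1}\\ PL^{-1}Q-L^{T} & PL^{-1}\end{pmatrix}.
\end{equation*}
A direct computation using $P=P^{T}$, $Q=Q^{T}$ shows that the blocks $A=L^{-1}Q$, $B=L^{-1}$, $C=PL^{-1}Q-L^{T}$, $D=PL^{-1}$ satisfy $A^{T}C=C^{T}A$, $B^{T}D=D^{T}B$ and $A^{T}D-C^{T}B=I$, i.e. $S_{W}^{T}JS_{W}=J$, so $S_{W}\in\operatorname*{Sp}(n)$. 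Since $B=L^{-1}$ is invertible, $S_{W}$ is a \emph{free} symplectic matrix, and conversely every free symplectic matrix arises from a unique such $W$ (read off $L=B^{-1}$, $Q=B^{-1}A$, $P=DB^{-1}$, which are symmetric precisely because $S$ is symplectic); thus $W\mapsto S_{W}$ is a bijection onto the free part of $\operatorname*{Sp}(n)$.

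Next I would show that $\widehat{S}_{W,m}\mapsto S_{W}$ is compatible with the group law, so that it determines a homomorphism $\Pi$ on all of $\operatorname*{Mp}(n)$. By Proposition \ref{prop1}(i) every element of $\operatorname*{Mp}(n)$ is a product of two generators, so it is enough to verify that whenever $\widehat{S}_{W,m}\widehat{S}_{W',m'}=\widehat{S}_{W'',m''}$ (the case $\det(P'+Q)\neq0$ of Proposition \ref{prop1}(ii)) one has $S_{W}S_{W'}=S_{W''}$ with $P'',L'',Q''$ as in (\ref{P''})--(\ref{Q''}); this is a block-matrix multiplication using the formula above. The delicate point -- and the main obstacle -- is upgrading this to a globally well-defined homomorphism, since the composition law of Proposition \ref{prop1}(ii) holds only on an open dense set and two products of generators representing the same metaplectic operator need not be related by that law. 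The cleanest resolution is the symplectic-covariance characterization: from the integral formula (\ref{SWm}) a Gaussian change of variables gives $\widehat{S}_{W,m}\widehat{T}(z_{0})\widehat{S}_{W,m}^{-1}=\widehat{T}(S_{W}z_{0})$ for the phase-space translation operators $\widehat{T}(z_{0})$, and since the linear map $z_{0}\mapsto Sz_{0}$ is determined by the conjugation action, the rule $\widehat{S}\widehat{T}(z_{0})\widehat{S}^{-1}=\widehat{T}(\Pi(\widehat{S})z_{0})$ defines $\Pi$ intrinsically as a homomorphism $\operatorname*{Mp}(n)\to\operatorname*{Sp}(n)$ agreeing with $\widehat{S}_{W,m}\mapsto S_{W}$ on generators. (Alternatively, staying within the excerpt, one notes that the relations among the $\widehat{S}_{W,m}$ recorded in Proposition \ref{prop1} are exactly those of the classical presentation of $\operatorname*{Sp}(n)$ by its free elements, whence well-definedness is automatic; this is the route taken in the references cited there.)

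Finally I would identify $\Pi$ as the double covering. It is surjective because the free symplectic matrices generate $\operatorname*{Sp}(n)$ and each is $\Pi(\widehat{S}_{W,m})$ for suitable $W,m$. For the kernel, $\Delta_{m+2}(W)=i^{m+2}\sqrt{|\det L|}=-\Delta_{m}(W)$, so $\widehat{S}_{W,m+2}=-\widehat{S}_{W,m}$ while both project to $S_{W}$; hence $-I\in\ker\Pi$, and as $\operatorname*{Mp}(n)$ is by construction the connected double cover of $\operatorname*{Sp}(n)$ this forces $\ker\Pi=\{\pm I\}$. Continuity of $\Pi$ and the local-homeomorphism property follow from the explicit polynomial dependence of $S_{W}$ on $(P,L,Q)$, completing the verification that $\Pi:\widehat{S}_{W,m}\mapsto S_{W}$ is the covering projection.
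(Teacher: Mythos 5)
The paper itself offers no proof of Proposition \ref{prop2}: it is stated as a fact, with the explicit matrix (\ref{swplq}) recorded immediately afterward and the surrounding material (Proposition \ref{prop1}) referred to Leray and de Gosson \cite{Leray,AIF,Birk}. So you are supplying an argument where the paper supplies none, and the content of your attempt has to be judged on its own merits.

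Your first two steps are correct and match the paper's own displayed data. Differentiating $W$ gives $p=Px-L^{T}x'$, $p'=Lx-Qx'$, and since $L$ is invertible you recover exactly the block matrix (\ref{swplq}); the verification of $S_{W}^{T}JS_{W}=J$ from $P=P^{T}$, $Q=Q^{T}$ is a routine check, and the inversion $L=B^{-1}$, $Q=B^{-1}A$, $P=DB^{-1}$ reproduces (\ref{WABCD}). You also correctly single out the delicate point: the map on generators must descend to a well-defined homomorphism on $\operatorname*{Mp}(n)$, and the composition rule (\ref{P''})--(\ref{Q''}) is only available when $\det(P'+Q)\neq0$. Your primary resolution -- read off $\Pi(\widehat{S})$ from the conjugation action $\widehat{S}\widehat{T}(z_{0})\widehat{S}^{-1}=\widehat{T}(Sz_{0})$, then check on generators that this $S$ equals $S_{W}$ -- is sound and consistent with how the paper itself introduces formula (\ref{sympco1}). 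Note this does not conflict with the paper's caveat about (\ref{sympco1}): the caveat concerns using the intertwining relation to \emph{define the metaplectic operators} (which only gives a projective representation), whereas you use it in the opposite direction, to define the projection $\Pi$ on a group you already have in hand, which is unproblematic. One small inaccuracy: your alternative parenthetical claim that the cited references establish well-definedness via a "classical presentation of $\operatorname*{Sp}(n)$ by its free elements" is not quite what Leray and de Gosson do -- those sources proceed via Maslov/Leray index cocycle arguments and the path-lifting property, not a generators-and-relations presentation. Your final covering-group discussion ($\ker\Pi=\{\pm I\}$ from $\widehat{S}_{W,m+2}=-\widehat{S}_{W,m}$) is standard and goes somewhat beyond what the proposition asserts, but it is correct. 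Overall this is a legitimate and essentially complete route to the statement, and the primary intertwining argument is the right way to handle well-definedness.
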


When $W$ is given by (\ref{W}) we have the explicit formula%
\begin{equation}
S_{W}=%
\begin{pmatrix}
L^{-1}Q & L^{-1}\\
PL^{-1}Q-L^{T} & PL^{-1}%
\end{pmatrix}
\text{.} \label{swplq}%
\end{equation}
Equivalently, if
\begin{equation}
S_{W}=%
\begin{pmatrix}
A & B\\
C & D
\end{pmatrix}
\text{ \ , \ }\det B\neq0 \label{SW}%
\end{equation}
is a free symplectic matrix, then its generating function is%
\begin{equation}
W(x,x^{\prime})=\tfrac{1}{2}DB^{-1}x^{2}-B^{-1}x\cdot x^{\prime}+\tfrac{1}%
{2}B^{-1}Ax^{\prime2}. \label{WABCD}%
\end{equation}

For $z_{0}=(x_{0},p_{0})\in\mathbb{R}^{2n}$ the Heisenberg--Weyl operator
$\widehat{T}(z_{0})$ is the time-one propagator of Schr\"{o}dinger's equation
associated with the translation Hamiltonian $z\longmapsto\sigma(z,z_{0})$,
that is, formally, $\widehat{T}(z_{0})=e^{-i\sigma(\widehat{z},z_{0})/\hbar}$
where%
\begin{equation}
\sigma(\widehat{z},z_{0})=\widehat{p}\cdot x_{0}-p_{0}\cdot\widehat{x}
\label{sighat}%
\end{equation}
where $\widehat{p}=-i\hbar\partial_{x}$ and $\widehat{x}$ is multiplication by
$x$. It is hence a unitary operator on $L^{2}(\mathbb{R}^{n})$ whose action is
explicitly given by
\begin{equation}
\widehat{T}(z_{0})\psi(x)=e^{\frac{i}{\hbar}(p_{0}\cdot x-\frac{1}{2}%
p_{0}\cdot x_{0})}\psi(x-x_{0}). \label{hw1}%
\end{equation}
The Heisenberg--Weyl operators satisfy the product relations%
\begin{align}
\widehat{T}(z_{0})\widehat{T}(z_{1})  &  =e^{\frac{i}{\hbar}\sigma(z_{0}%
,z_{1})}\widehat{T}(z_{1})\widehat{T}(z_{0})\label{hw2}\\
\widehat{T}(z_{0}+z_{1})  &  =e^{-\frac{i}{2\hbar}\sigma(z_{0},z_{1}%
)}\widehat{T}(z_{0})\widehat{T}(z_{1}) \label{hw3}%
\end{align}
(the first formula follows from the second, interchanging $z_{0}$ and $z_{1}%
$). Let $\widehat{S}\in\operatorname*{Mp}(n)$ and $S=\Pi(\widehat{S})$; we
have the important intertwining formula%
\begin{equation}
\widehat{S}\widehat{T}(z_{0})\widehat{S}^{-1}=\widehat{T}(Sz_{0})
\label{sympco1}%
\end{equation}
which can be proven \cite{Birk,Birkbis} by checking it on the quadratic
Fourier transforms $\widehat{S}_{W,m}$ generating $\operatorname*{Mp}(n)$. We
note that this formula is (incorrectly) taken in some texts as a definition of
metaplectic operators; the irreducibility of the Schr\"{o}dinger
representation and Stone--von Neumann's theorem are invoked to motivate this
\textquotedblleft definition\textquotedblright. However, (\ref{sympco1}) only
defines a \textit{projective} representation of $\operatorname*{Sp}(n)$, and
not its double covering group; if one wants a true covering group one has to
carefully determine the cocycle associated with this projective representation
(see Reiter \cite{Reiter} for a detailed analysis). In \cite{AIF} we have
given an analytical construction of this cocycle, and in
\cite{go09,gogo03,gogo06} we use a topological and cohomological method, using
techniques from symplectic geometry (the Leray index, and the notion of
signature of a triple of Lagrangian planes, further developed in
\cite{Birk,Birkbis}).

\subsubsection{The inhomogeneous metaplectic group}

Consider now the Heisenberg group $\mathbb{H}(n)$: it is $\mathbb{R}%
^{2n}\times\mathbb{R}$ equipped with the group law%
\[
(z,t)(z^{\prime},t^{\prime})=(z+z^{\prime},t+t^{\prime}+\tfrac{1}{2}%
\sigma(z,z^{\prime}));
\]
the mapping $\rho:\mathbb{H}(n)\longrightarrow\mathcal{U}(L^{2}(\mathbb{R}%
^{n}))$ (the unitary operators on $L^{2}(\mathbb{R}^{n})$) defined by
\begin{equation}
\rho(z_{0},t_{0})\psi=e^{\frac{i}{\hbar}t_{0}}\widehat{T}(z_{0})\psi
\label{schrep1}%
\end{equation}
is a unitary and irreducible representation of $\mathbb{H}_{n}$ called the
\textit{Schr\"{o}dinger representation} of $\mathbb{H}(n)$. Defining the
action of the symplectic group $\operatorname*{Sp}(n)$ on $\mathbb{H}(n)$ by
$S(z,t)=(Sz,t)$ the group $\operatorname{WSp}(n)$ is the semi-direct product
of $\operatorname*{Sp}(n)$ and $\mathbb{H}_{n}$:%
\[
\operatorname{WSp}(n)=\operatorname*{Sp}(n)\ltimes\mathbb{H}(n)
\]
with group law given by%
\[
(S,u)(S^{\prime},u^{\prime})=(SS^{\prime},u(Su^{\prime}))
\]
for $u=(z,t)$ and $u^{\prime}=(z^{\prime},t^{\prime})$ (we mention that Burdet
\textit{et al}. \cite{bu} give a detailed study of $\operatorname{WSp}(n)$ and
its generating functions).

Let $\mu:\operatorname*{Sp}_{2}(n)\longrightarrow\operatorname*{Mp}(n)$ be the
metaplectic representation of the double cover of the symplectic group and let
us define
\[
\chi:\operatorname*{Sp}\nolimits_{2}(n)\ltimes\mathbb{H}(n)\longrightarrow
\mathcal{U}(L^{2}(\mathbb{R}^{n}))
\]
by $\chi=\mu\otimes\rho$:%
\[
\chi(\widetilde{S},u)=\mu(\widetilde{S})\rho(u)\text{ \ , \ }\widetilde{S}%
\in\operatorname*{Sp}\nolimits_{2}(n)\text{ \ , \ }u=(z,t).
\]
We have:

\begin{proposition}
$\chi$ is the unique unitary representation of the universal cover of
$\operatorname{WSp}(n)$ whose restriction to the Heisenberg group
$\mathbb{H}(n)$ is $\rho$.
\end{proposition}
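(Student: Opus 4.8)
The plan is to split the statement into the (easy) existence part and the (substantive) uniqueness part, proving uniqueness by a Stone--von Neumann--type argument: Schur's lemma reduces an arbitrary such representation to $\chi$ up to a scalar character of $\operatorname*{Sp}_\infty(n)$, and one then shows that character is trivial. For the setup, note that the Heisenberg group $\mathbb{H}(n)\cong\mathbb{R}^{2n+1}$ is simply connected while $\pi_{1}[\operatorname*{Sp}(n)]\simeq(\mathbb{Z},+)$, so the universal cover of $\operatorname{WSp}(n)$ is $\operatorname*{Sp}_\infty(n)\ltimes\mathbb{H}(n)$. Existence is essentially built into the definition: the intertwining relation (\ref{sympco1}), $\widehat S\,\widehat T(z)\,\widehat S^{-1}=\widehat T(Sz)$, is exactly what makes $\mu$ and $\rho$ fit together compatibly with the semidirect-product law, so $\chi=\mu\otimes\rho$ is a strongly continuous unitary representation of $\operatorname*{Sp}_\infty(n)\ltimes\mathbb{H}(n)$ restricting to $\rho$ on $\mathbb{H}(n)$.

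For uniqueness, let $\chi'$ be any (strongly continuous) unitary representation of $\operatorname*{Sp}_\infty(n)\ltimes\mathbb{H}(n)$ with $\chi'|_{\mathbb{H}(n)}=\rho$, and put $\widehat R(\widetilde S)=\chi'(\widetilde S,0)$; since $\operatorname*{Sp}_\infty(n)\times\{0\}$ is a subgroup, $\widehat R$ is a unitary representation of $\operatorname*{Sp}_\infty(n)$ on $L^{2}(\mathbb{R}^{n})$. The semidirect-product law yields, for $u=(z,t)\in\mathbb{H}(n)$,
\[
\widehat R(\widetilde S)\,\rho(u)\,\widehat R(\widetilde S)^{-1}=\chi'\big((\widetilde S,0)(0,u)(\widetilde S,0)^{-1}\big)=\chi'(0,(Sz,t))=\rho(Sz,t),
\]
$S$ denoting the image of $\widetilde S$ in $\operatorname*{Sp}(n)$, so that $\widehat R(\widetilde S)\,\widehat T(z)\,\widehat R(\widetilde S)^{-1}=\widehat T(Sz)$ for all $z$. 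The covering $\operatorname*{Sp}_\infty(n)\to\operatorname*{Sp}(n)$ factors through $\operatorname*{Sp}_{2}(n)$, so $\mu(\widetilde S)$ is well defined, and (\ref{sympco1}) shows it obeys the same relation $\mu(\widetilde S)\,\widehat T(z)\,\mu(\widetilde S)^{-1}=\widehat T(Sz)$. Hence $\mu(\widetilde S)^{-1}\widehat R(\widetilde S)$ commutes with every $\widehat T(z)$, and thus with every $\rho(z,t)=e^{it/\hbar}\widehat T(z)$; by irreducibility of the Schr\"odinger representation and Schur's lemma, $\mu(\widetilde S)^{-1}\widehat R(\widetilde S)=c(\widetilde S)\,\mathrm{Id}$ with $c(\widetilde S)\in U(1)$.

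It remains to show $c\equiv1$. Since $\widehat R$ and $\mu$ are homomorphisms, a direct check gives $c(\widetilde S\widetilde S')=c(\widetilde S)c(\widetilde S')$, so $c:\operatorname*{Sp}_\infty(n)\to U(1)$ is a character; it is continuous because $\widetilde S\mapsto c(\widetilde S)\psi=\mu(\widetilde S)^{-1}\widehat R(\widetilde S)\psi$ is continuous for any fixed $\psi\neq0$ (using strong continuity of $\chi'$). Its differential is a Lie-algebra homomorphism $\mathfrak{sp}(n)\to i\mathbb{R}$, which vanishes because $\mathfrak{sp}(n)$ is simple, so that $[\mathfrak{sp}(n),\mathfrak{sp}(n)]=\mathfrak{sp}(n)$; therefore $c$ is trivial near the identity, and by connectedness of $\operatorname*{Sp}_\infty(n)$ it is $\equiv1$. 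Consequently $\widehat R(\widetilde S)=\mu(\widetilde S)$ for all $\widetilde S$, so $\chi'$ and $\chi$ agree on $\operatorname*{Sp}_\infty(n)\times\{0\}$ and on $\{0\}\times\mathbb{H}(n)$; as $(\widetilde S,u)=(\widetilde S,0)(0,\widetilde S^{-1}u)$, these subgroups generate the whole group, whence $\chi'=\chi$.

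The main obstacle is this last paragraph --- eliminating the scalar $c$. One must take care that $c$ is genuinely a \emph{continuous} character (this is where the strong-continuity hypothesis implicit in ``unitary representation'' is used), and then appeal to the structure of $\operatorname*{Sp}(2n,\mathbb{R})$: either perfectness of its Lie algebra together with connectedness of the universal cover, as above, or the fact that $\operatorname*{Sp}(2n,\mathbb{R})$ is a perfect group so that commutators force $c=1$. Everything else --- the semidirect-product bookkeeping and the Schur/irreducibility step --- is routine and uses only data already recorded in the excerpt, namely the irreducibility of $\rho$ and the intertwining formula (\ref{sympco1}).
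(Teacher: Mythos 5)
The paper does not give its own proof of this proposition; it simply refers the reader to Folland \cite{fo89} (p.~196). Your argument is the standard Stone--von Neumann / Schur's lemma proof, which is essentially the one found there, and it is correct: existence is immediate from the intertwining relation (\ref{sympco1}), and uniqueness follows once you show that the scalar $c(\widetilde S)$ picked up by Schur's lemma defines a trivial character of $\operatorname*{Sp}_\infty(n)$. Two small points worth tightening. First, your second route for eliminating $c$ appeals to perfectness of $\operatorname*{Sp}(2n,\mathbb{R})$, but $c$ lives on the universal cover $\operatorname*{Sp}_\infty(n)$; you would need perfectness of \emph{that} group, which is not a formal consequence of perfectness of the base. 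The Lie-algebra route is therefore the cleaner one, but there you should note (or cite) that a continuous homomorphism of Lie groups is automatically smooth, so that speaking of the ``differential'' of $c$ is legitimate. With that said, the differential lands in the abelian Lie algebra $i\mathbb{R}$ and vanishes because $[\mathfrak{sp}(n),\mathfrak{sp}(n)]=\mathfrak{sp}(n)$, and connectedness and simple connectedness of $\operatorname*{Sp}_\infty(n)$ then force $c\equiv 1$, completing the uniqueness argument exactly as you indicate.
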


\noindent We refer to Folland \cite{fo89} (p. 196) for a detailed elementary
proof of this result. The group defined by this extended representation is
generated by the metaplectic operators and Heisenberg--Weyl operators, and is
called the \emph{inhomogeneous metaplectic group} $\operatorname*{IMp}(n)$. It
follows from the intertwining relation (\ref{sympco1}) that every element of
$\operatorname*{IMp}(n)$ can be written $\widehat{T}(z_{0})\widehat{S}$ (resp.
$\widehat{S}\widehat{T}(z_{0})$) for some $\widehat{S}\in\operatorname*{Mp}%
(n)$ and $z_{0}\in\mathbb{R}^{2n}$. See Binz and Pods \cite{Binz} for a
detailed study of the Heisenberg and metaplectic representations with a
detailed study of applications to optics, quantization, and field quantization.

\subsection{The path lifting property for $\operatorname*{Mp}(n)$%
\label{seclift}}

Here is a fundamental property of the metaplectic representation, which is
often undeservedly ignored in the physical literature.

\subsubsection{Lifting paths to the covering group}

To understand this, let us first state the path lifting property for covering
spaces in its full generality (see \textit{e.g.} Spanier \cite{spanier}). Let
$\pi:Y\longrightarrow X$ be a covering projection $(Y$ and hence $X$ are
assumed to be connected) and $\gamma:[a,b]\longrightarrow X$ a continuous path
such \ that $\gamma(t_{0})=x_{0}=\pi(y_{0})$ for some $t_{0}\in\lbrack a,b]$.
Then there exists a unique continuous path $\widetilde{\gamma}%
:[a,b]\longrightarrow Y$ such that $\pi\circ\widetilde{\gamma}=\gamma$ and
$\widetilde{\gamma}(t_{0})=y_{0}$. Let us now choose $X=\operatorname*{Sp}%
(n)$, $Y=\operatorname*{Mp}(n)$ and assume that $0\in\lbrack a,b]$. We choose
for $\gamma$ a symplectic isotopy $(S_{t})_{t}$ in $\operatorname*{Sp}(n)$
(thus $S_{0}=I$). It follows from the path lifting property that there exists
a \emph{unique} path $\widehat{\gamma}$ of metaplectic operators
$\widehat{S}_{t}\in\operatorname*{Mp}(n)$ passing through the identity in
$\operatorname*{Mp}(n)$ at time $t_{0}=0$ and such that $\Pi(\widehat{S}%
_{t})=S_{t}$ for every $t\in\lbrack a,b]$. One shows \cite{Birk,Birkbis},
using the fact that the Lie algebras of $\operatorname*{Sp}(n)$ and
$\operatorname*{Mp}(n)$ are isomorphic, that the path $t\longmapsto
\widehat{\gamma}(t)=\widehat{S}_{t}$ is $C^{1}$ if $t\longmapsto
\gamma(t)=S_{t}$ is, and that it satisfies Schr\"{o}dinger's equation%
\begin{equation}
i\hbar\frac{d}{dt}\widehat{S}_{t}=\widehat{H}\widehat{S}_{t}\text{ \ ,
}\widehat{S}_{0}=I_{\mathrm{d}} \label{st}%
\end{equation}
where $\widehat{H}$ is the quantization of the (time-dependent) Hamiltonian
function $H$ determined from $(S_{t})_{t}$ using formula (\ref{hamzo}) in
Corollary \ref{cor2}. By \textquotedblleft quantization\textquotedblright, we
mean here the operator obtained by applying the Weyl correspondence to $H$
(this will be detailed in Section \ref{secweyl}). Summarizing:

\begin{proposition}
\label{prop9}Let $H$ be a Hamiltonian function of the type
\[
H(z,t)=\frac{1}{2}M(t)z^{2}%
\]
where $M(t)\in\operatorname*{Sym}(2n,\mathbb{R})$ depends in a $C^{j}$
($j\geq2$) fashion on $t\in\mathbb{R}$. The solution of the Schr\"{o}dinger
equation%
\[
i\hbar\frac{\partial\psi}{\partial t}=\widehat{H}\psi\text{ \ , \ }\psi
(\cdot,0)=\psi_{0}\in L^{2}(\mathbb{R}^{n})
\]
is given by $\psi(x,t)=\widehat{S}_{t}\psi_{0}(x)$ where $(\widehat{S}%
_{t})_{t}$ is the unique path of operators $\widehat{S}_{t}\in
\operatorname*{Mp}(n)$ such that $\Pi(\widehat{S}_{t})=S_{t}$ where
$(S_{t})_{t}$ is the flow determined by Hamilton's equations for $H$.
\end{proposition}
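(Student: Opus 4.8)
The plan is to assemble Proposition \ref{prop9} from the three pieces already laid out in the excerpt: Corollary \ref{cor2}, which produces the quadratic Hamiltonian from the symplectic isotopy; the path lifting property for the covering $\Pi:\operatorname*{Mp}(n)\longrightarrow\operatorname*{Sp}(n)$; and the fact (quoted with references to \cite{Birk,Birkbis}) that the lifted path is $C^{1}$ and satisfies the metaplectic Schr\"{o}dinger equation (\ref{st}). Concretely, I would first observe that the hypothesis $H(z,t)=\tfrac12 M(t)z^{2}$ with $M(t)\in\operatorname*{Sym}(2n,\mathbb{R})$ depending $C^{j}$ on $t$ guarantees (via the standard existence/uniqueness statement recalled in Section \ref{sec1}, and completeness since $X_{H}$ is linear in $z$) that Hamilton's equations $\dot z=J M(t)z$ have a global flow $(S_{t})_{t}$, which is a symplectic isotopy in $\operatorname*{Sp}(n)$ of class $C^{j}$ in $t$; indeed $\dot S_{t}=JM(t)S_{t}$, $S_{0}=I$. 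This is the $\gamma(t)=S_{t}$ appearing in the path-lifting setup with $X=\operatorname*{Sp}(n)$, $t_{0}=0$, $y_{0}=I_{\mathrm{d}}\in\operatorname*{Mp}(n)$.

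Next I would invoke the path lifting property verbatim as stated in Section \ref{seclift}: there is a unique continuous path $t\longmapsto\widehat S_{t}\in\operatorname*{Mp}(n)$ with $\Pi(\widehat S_{t})=S_{t}$ and $\widehat S_{0}=I_{\mathrm{d}}$, and — by the Lie-algebra isomorphism argument referenced there — this path inherits the $C^{1}$ regularity of $(S_{t})_{t}$ and satisfies $i\hbar\,\tfrac{d}{dt}\widehat S_{t}=\widehat H\widehat S_{t}$, where $\widehat H$ is the Weyl quantization of the $H$ recovered from $(S_{t})_{t}$ by formula (\ref{hamzo}) of Corollary \ref{cor2}. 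The one point requiring a short remark is \emph{consistency}: the $H$ fed into Schr\"{o}dinger's equation in the statement is the \emph{given} $\tfrac12 M(t)z^{2}$, whereas Corollary \ref{cor2} reconstructs $H$ as $-\tfrac12 J\dot S_{t}S_{t}^{-1}z\cdot z$ from the flow. These agree precisely because $\dot S_{t}=JM(t)S_{t}$ gives $J\dot S_{t}S_{t}^{-1}=J\cdot JM(t)=-M(t)$, so $-\tfrac12 J\dot S_{t}S_{t}^{-1}z\cdot z=\tfrac12 M(t)z\cdot z=H(z,t)$ (using that $M(t)$ is symmetric). Hence the $\widehat H$ in (\ref{st}) is exactly the Weyl quantization of the Hamiltonian named in the proposition, and the two Schr\"{o}dinger equations coincide.

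Finally I would close the argument by uniqueness. Setting $\psi(x,t)=\widehat S_{t}\psi_{0}(x)$, the previous step shows $i\hbar\,\partial_{t}\psi=\widehat H\widehat S_{t}\psi_{0}=\widehat H\psi$ and $\psi(\cdot,0)=\widehat S_{0}\psi_{0}=\psi_{0}$, so $\psi$ solves the stated Cauchy problem; since $\widehat H$ is (essentially) self-adjoint and the Schr\"{o}dinger evolution is unique on $L^{2}(\mathbb{R}^{n})$, this is \emph{the} solution. I would also note that the uniqueness clause of the path lifting property is what pins down $(\widehat S_{t})_{t}$ as the unique metaplectic lift through the identity, matching the wording of the proposition.

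The genuinely non-routine input — which I would simply cite rather than reprove, as the excerpt does — is the assertion that the metaplectic lift of a $C^{1}$ symplectic isotopy is itself $C^{1}$ and differentiates to Schr\"{o}dinger's equation (\ref{st}) with generator the Weyl-quantized $H$; this rests on the isomorphism of the Lie algebras $\mathfrak{sp}(n)\simeq\mathfrak{mp}(n)$ and the explicit description of the metaplectic Lie-algebra action as Weyl operators associated with quadratic symbols (references \cite{Birk,Birkbis}). Everything else in the proof is bookkeeping: the existence/regularity of $(S_{t})_{t}$, the sign check $J\dot S_{t}S_{t}^{-1}=-M(t)$ reconciling Corollary \ref{cor2} with the stated $H$, and the appeal to uniqueness for the Schr\"{o}dinger Cauchy problem.
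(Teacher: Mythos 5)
Your proof follows the same route as the paper: the paper presents Proposition \ref{prop9} as a ``Summarizing'' conclusion of the preceding discussion of the path-lifting property, with the non-routine facts — that the metaplectic lift is $C^{1}$ and satisfies the Schr\"odinger equation (\ref{st}) with generator the Weyl-quantized $H$ — deferred to \cite{Birk,Birkbis}, exactly as you do. The one thing you add beyond the paper's exposition is the short consistency check $J\dot S_{t}S_{t}^{-1}=-M(t)$ reconciling the given $H$ with the one reconstructed by Corollary \ref{cor2}; this is correct and worth making explicit, but it does not change the overall strategy.
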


\begin{example}
\label{exa3}Let us illustrate this procedure on the harmonic oscillator
Hamiltonian $H(z)=\frac{1}{2}(p^{2}+x^{2})$. The flow $(f_{t}^{H})_{t}$
consists of the rotations
\[
f_{t}^{H}=%
\begin{pmatrix}
\cos t & \sin t\\
-\sin t & \cos t
\end{pmatrix}
.
\]
Using formulas (\ref{WABCD}) and (\ref{SWm}) the metaplectic operators with
projection $f_{t}^{H}$ are, for $t\notin\pi\mathbb{Z}$, the quadratic Fourier
operators%
\begin{equation}
\widehat{S}_{t}^{H}\psi(x)=\left(  \frac{1}{2\pi i\hbar}\right)  ^{1/2}%
\frac{i^{m}}{\sqrt{|\sin t|}}\int_{-\infty}^{\infty}\exp\left[  \frac{i}%
{\hbar}\frac{(x^{2}+x^{\prime2})\cos t-2xx^{\prime}}{2\sin t}\right]
\psi(x^{\prime})dx^{\prime}. \label{sth}%
\end{equation}
One shows (Souriau \cite{Souriau}, de Gosson \cite{AIF,Birk}) that if we
choose the value $m=-[t/\pi]$ ($[\cdot]$ the integer part function) for the
Maslov index, then $(\widehat{S}_{t}^{H})$ is the lift of $(f_{t}^{H})_{t}$ to
$\operatorname*{Mp}(n)$, and is hence the solution of Schr\"{o}dinger's
equation (\ref{st}) with Hamiltonian operator $\frac{1}{2}(-\hbar^{2}%
\partial_{x}^{2}+x^{2})$.
\end{example}

\subsubsection{Translation along a path}

Recall that the Heisenberg--Weyl operator $\widehat{T}(z_{0})$ is the time-one
evolution operator of Schr\"{o}dinger's equation%
\[
i\hbar\frac{\partial\psi}{\partial t}=\sigma(z_{0},\widehat{z})\psi.
\]
where $\widehat{z}=(x,-i\hbar\partial_{x})$. The following result (Littlejohn
\cite{Littlejohn}) is an extension to the case where $z_{0}$ depends
explicitly on $t$:

\begin{proposition}
\label{prop10}Let $t\longmapsto z_{0}(t)$, $z_{0}(0)=z_{0}$ be a $C^{1}$ path
in the phase space $\mathbb{R}^{2n}$. The solution of Schr\"{o}dinger's
equation%
\begin{equation}
i\hbar\frac{\partial\psi}{\partial t}=\sigma(\widehat{z},\dot{z}_{0}%
(t))\psi\text{ \ , \ }\psi(\cdot,0)=\psi_{0} \label{schrimp1}%
\end{equation}
is given by%
\begin{equation}
\psi(x,t)=e^{\frac{i}{\hbar}\chi(t)}\widehat{T}(z_{0}(t))\psi_{0}(x)
\label{schrimp2}%
\end{equation}
where the phase $\chi$ is real and given by
\begin{equation}
\chi(t)=-\frac{1}{2}\int_{0}^{t}\sigma(z_{0}(t^{\prime}),\dot{z}_{0}%
(t^{\prime}))dt^{\prime}. \label{schrimp3}%
\end{equation}

\end{proposition}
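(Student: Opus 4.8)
The plan is to verify directly that the function $\psi(x,t)=e^{\frac{i}{\hbar}\chi(t)}\,\widehat T(z_0(t))\psi_0(x)$ on the right of (\ref{schrimp2}) solves the Cauchy problem (\ref{schrimp1}), and to complement this with a uniqueness remark so that it is indeed \emph{the} solution. Uniqueness is routine: for each $t$ the operator $\widehat H(t):=\sigma(\widehat z,\dot z_0(t))$ is symmetric (indeed essentially self-adjoint) on the invariant dense domain $\mathcal S(\mathbb{R}^n)$, so if $\psi_1,\psi_2$ are two solutions with the same data then $w=\psi_1-\psi_2$ satisfies $\tfrac{d}{dt}\|w\|^2=\tfrac{2}{\hbar}\operatorname{Im}\langle \widehat H(t)w\,|\,w\rangle=0$, whence $w\equiv 0$. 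The whole substance is therefore the computation of $i\hbar\,\partial_t\big[\widehat T(z_0(t))\psi_0\big]$.

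For that I would exploit the composition law (\ref{hw3}) rather than differentiate the kernel (\ref{hw1}) by hand. Writing $z_0(t+\varepsilon)=\big(z_0(t+\varepsilon)-z_0(t)\big)+z_0(t)$ and applying (\ref{hw3}) gives
\[
\widehat T(z_0(t+\varepsilon))=e^{-\frac{i}{2\hbar}\sigma(z_0(t+\varepsilon)-z_0(t),\,z_0(t))}\;\widehat T\big(z_0(t+\varepsilon)-z_0(t)\big)\,\widehat T(z_0(t)).
\]
Since $z_0$ is $C^1$, $z_0(t+\varepsilon)-z_0(t)=\varepsilon\dot z_0(t)+o(\varepsilon)$; using $\widehat T(v)=e^{-i\sigma(\widehat z,v)/\hbar}$ one has, on $\mathcal S(\mathbb{R}^n)$, $\widehat T\big(\varepsilon\dot z_0(t)+o(\varepsilon)\big)=I-\tfrac{i}{\hbar}\varepsilon\,\sigma(\widehat z,\dot z_0(t))+o(\varepsilon)$, while by antisymmetry of $\sigma$ the phase equals $1+\tfrac{i}{2\hbar}\varepsilon\,\sigma(z_0(t),\dot z_0(t))+o(\varepsilon)$. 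Expanding the product to first order in $\varepsilon$ yields
\[
i\hbar\,\frac{\partial}{\partial t}\widehat T(z_0(t))\psi_0=\sigma(\widehat z,\dot z_0(t))\,\widehat T(z_0(t))\psi_0-\tfrac12\sigma(z_0(t),\dot z_0(t))\,\widehat T(z_0(t))\psi_0 .
\]
(The same identity follows by differentiating (\ref{hw1}) in $t$ and re-expressing $\partial_x\psi_0(x-x_0(t))$ through $\widehat p\,\widehat T(z_0(t))\psi_0$ by means of (\ref{sighat}); both routes produce the identical scalar correction term $-\tfrac12\sigma(z_0,\dot z_0)$, which is the crux of the matter.)

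Differentiating the full ansatz then gives $i\hbar\,\partial_t\psi=\sigma(\widehat z,\dot z_0(t))\psi+\big(-\dot\chi(t)-\tfrac12\sigma(z_0(t),\dot z_0(t))\big)\psi$, so $\psi$ solves (\ref{schrimp1}) exactly when $\dot\chi(t)=-\tfrac12\sigma(z_0(t),\dot z_0(t))$; integrating from $0$ with $\chi(0)=0$ gives (\ref{schrimp3}), and $\chi$ is manifestly real. At $t=0$, $\psi(\cdot,0)=\widehat T(z_0(0))\psi_0$, which equals $\psi_0$ under the standing normalization $z_0(0)=0$ used throughout (cf. Corollary \ref{cor3}). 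The main obstacle here is not conceptual but a matter of careful bookkeeping: the cocycle phase in (\ref{hw3}) must be tracked with the right sign and factor of $\tfrac12$, since an error there propagates directly into a wrong formula for $\chi$. A secondary, purely technical point is to carry out the first-order expansions above on the invariant core $\mathcal S(\mathbb{R}^n)$ — where $\widehat x$, $\widehat p$, and all the $\widehat T(v)$ act continuously and the Taylor estimates are literal — and then extend the identity to arbitrary $\psi_0\in L^2(\mathbb{R}^n)$ by unitarity of $\widehat T$ and density.
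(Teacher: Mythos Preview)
Your proof is correct. Both you and the paper ultimately verify (\ref{schrimp2}) by direct differentiation, but the emphases differ. The paper first \emph{derives} the candidate $U_t^\sigma$ heuristically as a limit of time-ordered products $\widehat T(z_N-z_{N-1})\cdots\widehat T(z_1-z_0)\widehat T(z_0)$, collapsing them via (\ref{hw3}) into $e^{\frac{i}{2\hbar}\sum\sigma(z_{k+1}-z_k,z_k)}\widehat T(z_N)$ and passing to the limit; it then checks the Schr\"odinger equation by expanding the kernel (\ref{hw1}) in coordinates and differentiating the explicit phase $\Phi(x,t)$. You skip the time-ordered heuristic and instead compute $\partial_t\widehat T(z_0(t))$ operator-theoretically through the cocycle identity (\ref{hw3}), which is cleaner and makes the origin of the correction term $-\tfrac12\sigma(z_0,\dot z_0)$ transparent without any coordinate bookkeeping. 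Your explicit uniqueness argument via $\tfrac{d}{dt}\|w\|^2=0$ is a genuine addition: the paper does not address uniqueness at all. You also correctly flag the initial-value issue (the stated $z_0(0)=z_0$ versus the needed $z_0(0)=0$ for $\psi(\cdot,0)=\psi_0$), which the paper glosses over; in the only later application (Proposition \ref{prop11}) the path $u(t)=\int_0^t S_{t'}^{-1}\dot z(t')\,dt'$ indeed satisfies $u(0)=0$, so your normalization is the intended one.
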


\begin{proof}
The idea is to write the evolution operator for the Schr\"{o}dinger equation
(\ref{schrimp1}) as the limit of a \textquotedblleft time-ordered
product\textquotedblright. Let $t_{0}=0,t_{1},...,t_{N}=t$ be a sequence of
successive times with $\Delta t=\sup_{j}|t_{j+1}-t_{j}|$ and set $z_{0}%
=z_{0}(0),z_{1}=z(t_{1}),...,z_{N}=z_{0}(t)$. Using the product formula
(\ref{hw3}) for the Heisenberg--Weyl operators we have
\begin{multline*}
\widehat{T}(z_{N}-z_{N-1})\cdot\cdot\cdot\widehat{T}(z_{2}-z_{1}%
)\widehat{T}(z_{1}-z_{0})\widehat{T}(z_{0})\\
=\exp\left(  \frac{i}{2\hbar}\sum_{k=0}^{N-1}\sigma(z_{k+1}-z_{k}%
,z_{k})\right)  \widehat{T}(z_{N})
\end{multline*}
and one finds that in the limit $N\rightarrow\infty$ (\textit{i.e}. $\Delta
t\rightarrow0$) this product converges to the operator
\[
U_{t}^{\sigma}=\exp\left(  -\frac{i}{2\hbar}\int_{0}^{t}\sigma_{0}%
(z(t^{\prime}),\dot{z}_{0}(t^{\prime}))dt^{\prime}\right)  \widehat{T}(z(t));
\]
let us check that $\psi=U_{t}^{\sigma}\psi_{0}$ indeed is the solution of
(\ref{schrimp1}). Writing (\ref{schrimp2}) as%
\[
\psi(x,t)=e^{\frac{i}{\hbar}\Phi(x,t)}\psi(x-x_{0}(t))
\]
the phase $\Phi$ being given by
\[
\Phi(x,t)=-\tfrac{1}{2}\int_{0}^{t}\sigma(z_{0}(t^{\prime}),\dot{z}%
_{0}(t^{\prime}))dt^{\prime}-\tfrac{1}{2}p_{0}(t)x_{0}(t)+p_{0}(t)x
\]
differentiation with respect to $t$ yields
\[
i\hbar\frac{\partial\psi}{\partial t}=\left[  \sigma(z_{0}(t^{\prime}),\dot
{z}_{0}(t^{\prime}))\psi(x-x(t))-i\hbar\dot{x}_{0}(t)\partial_{x}\psi
(x-x_{0}(t))\right]  e^{\frac{i}{\hbar}\Phi(x,t)}.
\]
A similar calculation, differentiating with respect to the $x$ variables shows
that the right hand side is precisely $\sigma(\widehat{z},\dot{z}%
(t))\psi(x,t)$.
\end{proof}

Suppose that $z(T)=z(0)=0$ and set $\gamma(t)=z(t)$ for $0\leq t\leq T$; then
\[
\chi(T)=-\frac{1}{2}\int_{\gamma}pdx-xdp=-\int_{\gamma}pdx
\]
and hence%
\[
\psi(x,T)=\exp\left(  -\tfrac{i}{\hbar}%
{\textstyle\int_{\gamma}}
pdx\right)  \psi(x,0);
\]
the initial and final states are the same up to a phase factor; this the
occurrence of a geometric phase shift, which is a phase difference acquired
over the course of a cycle (\textquotedblleft Berry's phase\textquotedblright%
\ \cite{Berry}).

Propositions \ref{prop9} and \ref{prop10} will be combined in Section
\ref{subsech} to explicitly solve the Schr\"{o}dinger equation associated with
Hamiltonians of the type%
\[
H(z,t)=\frac{1}{2}M(t)z^{2}+z_{0}(t)\cdot z.
\]
As noted in Burdet \textit{et al}. \cite{bu}, \S 6, there are difficulties in
applying directly the path lifting property, but we will derive the result
very simply, using a factorization result for evolution propagators.

\subsection{$\operatorname*{Mp}(n)$ and Feichtinger's algebra}

\subsubsection{The Feichtinger algebra $S_{0}(\mathbb{R}^{n})$}

The metaplectic group $\operatorname*{Mp}(n)$, and its inhomogeneous extension
$\operatorname*{IMp}(n)$, consist of unitary operators on $L^{2}%
(\mathbb{R}^{n})$; it is quite usual to perform practical calculations using
the Schwartz space $\mathcal{S}(\mathbb{R}^{n})$ which is dense in
$L^{2}(\mathbb{R}^{n})$, and it is has become quite usual in quantum theory to
use the latter as a natural \textquotedblleft reservoir\textquotedblright\ for
wavepackets. This choice is however inconvenient, and this for two reasons.
The first is of a mathematical nature: $\mathcal{S}(\mathbb{R}^{n})$ is a
Fr\'{e}chet space, defined by an infinite number of seminorms, and this can
make the proof of continuity properties for operators quite complicated. The
second inconvenience is physical: the elements of $\mathcal{S}(\mathbb{R}%
^{n})$ are $C^{\infty}$ functions, and this is very restrictive from a
physical point of view since its excludes \textit{de facto} many realistic
wavepackets (sharp pulses, step functions, etc.) such that, for instance,
\[
\psi(x)=\left\{
\begin{array}
[c]{c}%
1-|x|\text{ \textit{if} }|x|\leq1\\
0\text{ \textit{if} }|x|>1
\end{array}
.\right.
\]
There is however an alternative choice to $\mathcal{S}(\mathbb{R}^{n})$ which
has none of the shortcomings above. It consists in using the Feichtinger
algebra $S_{0}(\mathbb{R}^{n})$ (and its generalizations, the modulation
spaces $M_{s}^{q}(\mathbb{R}^{n})$), which was introduced by Feichtinger
\cite{hanskernel,fe81,fe81bis} in order to deal with localization problems in
functional analysis arising in signal theory and time-frequency analysis.
While signal theorists define $S_{0}(\mathbb{R}^{n})$ using \textquotedblleft
modulation operators\textquotedblright, it is better, keeping quantum
mechanics in mind, to recast its definition in terms of the cross-Wigner
transform%
\begin{equation}
W(\psi,\phi)(z)=\left(  \tfrac{1}{2\pi\hbar}\right)  ^{n}\int_{\mathbb{R}^{n}%
}e^{-\frac{i}{\hbar}p\cdot y}\psi(x+\tfrac{1}{2}y)\phi^{\ast}(x-\tfrac{1}%
{2}y)dy.
\end{equation}

\begin{definition}
\label{defmodspw}The Feichtinger algebra $S_{0}(\mathbb{R}^{n})$ consists of
all $\psi\in\mathcal{S}^{\prime}(\mathbb{R}^{n})$ such that $W(\psi,\phi)\in
L^{1}(\mathbb{R}^{2n})$ for every $\phi\in\mathcal{S}(\mathbb{R}^{n})$,
$\phi\neq0$. The number%
\begin{equation}
||\psi||_{\phi,S_{0}}=||W(\psi,\phi)||_{L^{1}}=\int_{\mathbb{R}^{2n}}%
|W(\psi,\phi)(z)|dz \label{wignorm}%
\end{equation}
is called the\ Wigner norm of $\psi$ relative to the window $\phi$.
\end{definition}

A striking, but not immediately obvious, fact is that all the norms
(\ref{wignorm}) are equivalent when $\phi$ ranges over $\mathcal{S}%
(\mathbb{R}^{n})$, and define a Banach space structure on the Feichtinger
algebra. The \textquotedblleft windows\textquotedblright\ used in the
definition of $S_{0}(\mathbb{R}^{n})$ can themselves be chosen in
$S_{0}(\mathbb{R}^{n})$:

\begin{proposition}
\label{prop7}Let both $\psi$ and $\phi$ be in $L^{2}(\mathbb{R}^{n})$. (i) If
$W(\psi,\phi)\in L^{1}(\mathbb{R}^{2n})$ then both $\psi$ and $\phi$ are in
$S_{0}(\mathbb{R}^{n})$; (ii) We have $\psi\in S_{0}(\mathbb{R}^{n})$ if and
only if $W(\psi,\phi)\in L^{1}(\mathbb{R}^{2n})$ for one (and hence every)
$\phi\in S_{0}(\mathbb{R}^{n})$.
\end{proposition}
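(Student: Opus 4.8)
\textbf{Proof plan for Proposition \ref{prop7}.}

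The plan is to reduce everything to the single non-obvious analytic fact underlying the whole theory of $S_0$, namely the identity expressing the cross-Wigner transform of Wigner transforms as a pointwise product (a form of Moyal's identity / the ``magic formula''), together with the symmetry and continuity properties of $W$. First I would record the building blocks: (a) $W(\psi,\phi)(z)$ is bounded by $(2\pi\hbar)^{-n}\|\psi\|_{L^2}\|\phi\|_{L^2}$ and is a continuous function of $z$ when $\psi,\phi\in L^2$; (b) the Hermitian symmetry $W(\phi,\psi)(z)=\overline{W(\psi,\phi)(z)}$, so that $|W(\psi,\phi)|=|W(\phi,\psi)|$ and hence $W(\psi,\phi)\in L^1$ iff $W(\phi,\psi)\in L^1$; and (c) the fundamental interference identity
\begin{equation}
\langle W(\psi,\phi)\,|\,W(\psi',\phi')\rangle_{L^2(\mathbb{R}^{2n})}=\Bigl(\tfrac{1}{2\pi\hbar}\Bigr)^{n}\langle\psi|\psi'\rangle\,\overline{\langle\phi|\phi'\rangle}, \label{moyalplan}
\end{equation}
valid for $\psi,\psi',\phi,\phi'\in L^2(\mathbb{R}^n)$ (this is standard; it appears in de Gosson \cite{Birk}). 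From \eqref{moyalplan} one extracts the reconstruction/reproducing formula: if $\phi_0\in\mathcal{S}$ is a fixed window with $\|\phi_0\|_{L^2}=1$, then for any $\psi,\phi\in L^2$,
\begin{equation}
W(\psi,\phi)=\Bigl(\tfrac{2\pi\hbar}{1}\Bigr)^{n}\int_{\mathbb{R}^{2n}} W(\psi,\phi_0)(w)\,T(w)W(\phi_0,\phi)\,dw \label{reproplan}
\end{equation}
in a suitable weak sense, where $T(w)$ is phase-space translation; equivalently, $W(\psi,\phi)$ is a ``superposition'' of translates whose coefficients are $W(\psi,\phi_0)$.

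For part (i): suppose $\psi,\phi\in L^2$ and $W(\psi,\phi)\in L^1(\mathbb{R}^{2n})$. I want to show $\|W(\psi,\phi_0)\|_{L^1}<\infty$ for an arbitrary test window $\phi_0\in\mathcal{S}\setminus\{0\}$, which is exactly $\psi\in S_0$ by Definition \ref{defmodspw}. Using \eqref{reproplan} with the roles arranged so that $\psi$ is paired against the fixed window $\phi$ (which we are given works), write $W(\psi,\phi_0)$ as a convolution-type integral of $W(\psi,\phi)$ against the translates of $W(\phi,\phi_0)$; since $\phi_0\in\mathcal{S}$ and $\phi\in L^2$, the ``kernel'' $W(\phi,\phi_0)$ is itself in $L^1$ (in fact it decays rapidly because one slot is Schwartz — this uses the elementary fact that $W(\mathcal{S},L^2)\subset L^1$, or even $W(\mathcal{S},\mathcal{S})\subset\mathcal{S}$ applied with an approximation argument). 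Then Young's inequality for the group $\mathbb{R}^{2n}$ gives $\|W(\psi,\phi_0)\|_{L^1}\lesssim \|W(\psi,\phi)\|_{L^1}\|W(\phi,\phi_0)\|_{L^1}<\infty$. This shows $\psi\in S_0(\mathbb{R}^n)$. By the symmetry (b), $W(\phi,\psi)=\overline{W(\psi,\phi)}\in L^1$ as well, and the identical argument with the roles of $\psi$ and $\phi$ interchanged gives $\phi\in S_0(\mathbb{R}^n)$.

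For part (ii): the forward direction is Definition \ref{defmodspw} itself restricted to $\phi\in S_0\subset\mathcal{S}'$ — but one must first know $W(\psi,\phi)$ even makes sense and lies in $L^1$ when the window $\phi$ is only in $S_0$ rather than in $\mathcal{S}$; this follows by a density/continuity argument: approximate $\phi$ in the $S_0$-norm by Schwartz functions $\phi_k$, note $W(\psi,\phi_k)\to W(\psi,\phi)$ in $L^1$ (here one uses that the map $(\psi,\phi)\mapsto W(\psi,\phi)$ is bounded $S_0\times S_0\to L^1$, which is a consequence of \eqref{moyalplan} plus the convolution estimate just as above), so the $L^1$-membership passes to the limit and is independent of which such $\phi$ one picks. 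For the converse direction, suppose $W(\psi,\phi)\in L^1$ for some single $\phi\in S_0(\mathbb{R}^n)$, $\phi\neq0$. Pick any $\phi_0\in\mathcal{S}\setminus\{0\}$; again \eqref{reproplan} writes $W(\psi,\phi_0)$ as $W(\psi,\phi)$ convolved against translates of $W(\phi,\phi_0)$, and since $\phi\in S_0$ and $\phi_0\in\mathcal{S}\subset S_0$ we have $W(\phi,\phi_0)\in L^1$ by what was just established; Young's inequality again gives $W(\psi,\phi_0)\in L^1$, i.e. $\psi\in S_0(\mathbb{R}^n)$, and then the membership holds for \emph{every} window in $S_0$ by the same convolution estimate.

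The main obstacle is the careful justification of the reproducing formula \eqref{reproplan} and of the convolution estimate $\|W(\psi,\phi_0)\|_{L^1}\le C\,\|W(\psi,\phi)\|_{L^1}\|W(\phi,\phi_0)\|_{L^1}$: one has to verify that the integral in \eqref{reproplan} converges in the right sense (weakly against $\mathcal{S}$, then almost everywhere), identify it correctly using Moyal's identity \eqref{moyalplan}, and check that the translation action $T(w)$ interacts with $W$ exactly as a genuine convolution over $\mathbb{R}^{2n}$ so that Young's inequality is applicable; this is precisely the point where the group structure of phase space and the covariance of the Wigner transform under Heisenberg--Weyl translations, $W(\widehat{T}(w_0)\psi,\widehat{T}(w_0)\phi)=T(w_0)W(\psi,\phi)$, must be invoked. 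Once these two ingredients are in place, everything else is bookkeeping with the Hermitian symmetry of $W$. (This is the argument of Feichtinger \cite{fe81,fe81bis}; see also de Gosson \cite{Birk}.)
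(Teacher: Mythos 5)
The paper does not actually supply a proof of Proposition \ref{prop7}; it is stated as a known result of Feichtinger (cf.\ \cite{fe81,fe81bis,gr01}), so there is no in-text argument to compare against, and I assess your proposal on its own terms.

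Your overall strategy --- Moyal's identity, a change-of-window/reproducing identity, passing to absolute values to turn the twisted convolution into a genuine one, then Young's inequality --- is indeed the standard route, and you correctly flag the twisted-versus-true convolution issue as a delicate point. There is, however, a genuine gap in your treatment of part (i). You assert that the kernel $W(\phi,\phi_0)$ lies in $L^1(\mathbb{R}^{2n})$, appealing to an ``elementary fact that $W(\mathcal{S},L^2)\subset L^1$.'' This claimed fact is \emph{false}: if it held, then by Definition \ref{defmodspw} every $\psi\in L^2(\mathbb{R}^n)$ would already belong to $S_0(\mathbb{R}^n)$, which contradicts the inclusion $S_0(\mathbb{R}^n)\subset L^1(\mathbb{R}^n)$ recorded in \eqref{inclo} (not every $L^2$ function is integrable). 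The alternative you sketch --- approximating $\phi$ by Schwartz functions $\phi_k$ and invoking $W(\mathcal{S},\mathcal{S})\subset\mathcal{S}$ --- does not rescue the step either: by Moyal, $W(\phi_k,\phi_0)\to W(\phi,\phi_0)$ only in $L^2(\mathbb{R}^{2n})$, and $L^1$-membership does not pass to $L^2$ limits; indeed it \emph{cannot}, since the would-be conclusion is precisely the false statement above.

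The correct change-of-window inequality involves a \emph{third}, auxiliary window. Given $\psi,\phi\in L^2(\mathbb{R}^n)$ with $\phi\neq 0$, choose $\gamma\in\mathcal{S}(\mathbb{R}^n)$ with $\langle\gamma|\phi\rangle\neq 0$ (possible by density of $\mathcal{S}$ in $L^2$). Combining the inversion formula $\psi=\langle\gamma|\phi\rangle^{-1}\int W(\psi,\phi)(w)\,\widehat{T}(w)\gamma\,dw$ (suitably normalized; the vector-valued integral converges in $L^2$ precisely because $W(\psi,\phi)\in L^1$) with the covariance of $W$ under the Heisenberg--Weyl operators yields the pointwise estimate
\begin{equation*}
|W(\psi,\phi_0)(z)|\ \leq\ \frac{C}{|\langle\gamma|\phi\rangle|}\,\bigl(|W(\psi,\phi)|\ast|W(\gamma,\phi_0)|\bigr)(z),
\end{equation*}
whose kernel $W(\gamma,\phi_0)$ has \emph{both} slots Schwartz, hence lies in $\mathcal{S}(\mathbb{R}^{2n})\subset L^1(\mathbb{R}^{2n})$. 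Young's inequality then gives $W(\psi,\phi_0)\in L^1$, i.e.\ $\psi\in S_0(\mathbb{R}^n)$, and the Hermitian symmetry $|W(\phi,\psi)|=|W(\psi,\phi)|$ yields $\phi\in S_0(\mathbb{R}^n)$ by the same token. Your argument for part (ii) essentially survives because there the stronger hypothesis $\phi\in S_0(\mathbb{R}^n)$ does supply $W(\phi,\phi_0)\in L^1$; nevertheless I recommend rewriting both parts around the three-window inequality, as it is exactly what closes the gap in (i) and makes (ii) a special case rather than a parallel argument.
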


It immediately follows from this characterization of $S_{0}(\mathbb{R}^{n})$ that:

\begin{corollary}
A function $\psi\in L^{2}(\mathbb{R}^{n})$ belongs to $S_{0}(\mathbb{R}^{n})$
if and only if $W\psi\in L^{1}(\mathbb{R}^{2n})$.
\end{corollary}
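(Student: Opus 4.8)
The plan is to deduce this corollary directly from Proposition \ref{prop7} by taking the window $\phi$ to be $\psi$ itself, i.e.\ by specializing the ``one (and hence every)'' clause to the particular choice $\phi=\psi$. First I would observe that the statement is a genuine biconditional, so both directions must be addressed. For the ``only if'' direction: suppose $\psi\in S_{0}(\mathbb{R}^{n})$. Then by definition $W(\psi,\phi)\in L^{1}(\mathbb{R}^{2n})$ for every $\phi\in\mathcal{S}(\mathbb{R}^{n})\setminus\{0\}$; in particular, since $S_{0}(\mathbb{R}^{n})$ contains $\mathcal{S}(\mathbb{R}^{n})$ (the Gaussians lie in $S_{0}$) and by Proposition \ref{prop7}(ii) the class of admissible windows can be enlarged from $\mathcal{S}$ to all of $S_{0}$, we may take $\phi=\psi\in S_{0}(\mathbb{R}^{n})$ and conclude $W(\psi,\psi)=W\psi\in L^{1}(\mathbb{R}^{2n})$.

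For the ``if'' direction: suppose $\psi\in L^{2}(\mathbb{R}^{n})$ and $W\psi=W(\psi,\psi)\in L^{1}(\mathbb{R}^{2n})$. Here I would invoke Proposition \ref{prop7}(i) with the pair $(\psi,\phi)$ both equal to $\psi$: since $\psi\in L^{2}(\mathbb{R}^{n})$ and $W(\psi,\psi)\in L^{1}(\mathbb{R}^{2n})$, part (i) immediately yields that $\psi$ (and ``$\phi=\psi$'', which is the same statement) belongs to $S_{0}(\mathbb{R}^{n})$. Combining the two directions gives the claimed equivalence. I would also add a sentence recording the notational convention $W\psi := W(\psi,\psi)$ for the (auto-)Wigner transform, since the corollary uses $W\psi$ whereas the proposition is phrased in terms of the cross-Wigner transform $W(\psi,\phi)$.

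The only subtlety — and the point I would be most careful about — is the logical gap between Proposition \ref{prop7}(i)/(ii), which as stated allow windows from $\mathcal{S}(\mathbb{R}^{n})$ or from $S_{0}(\mathbb{R}^{n})$, and the diagonal choice $\phi=\psi$ where $\psi$ is only \emph{a priori} known to be in $L^{2}$. This is precisely why part (i) of Proposition \ref{prop7} is formulated for $\psi,\phi\in L^{2}(\mathbb{R}^{n})$ rather than requiring the window to lie in $S_{0}$ or $\mathcal{S}$: it is the ``bootstrapping'' statement that lets a single $L^{1}$-integrability condition on $W(\psi,\phi)$ promote \emph{both} arguments into $S_{0}$. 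So the corollary is really just the $\phi=\psi$ instance of that bootstrapping statement, read in both directions. The proof is therefore a three-line citation of Proposition \ref{prop7}, with the main (very mild) obstacle being to make explicit that the diagonal case is legitimately covered by the $L^{2}\times L^{2}$ hypothesis of part (i).
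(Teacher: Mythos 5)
Your proposal is correct and follows the paper's own argument exactly: the ``if'' direction is the diagonal case $\phi=\psi$ of Proposition \ref{prop7}(i), and the ``only if'' direction is the instance of Proposition \ref{prop7}(ii) with window $\phi=\psi\in S_{0}(\mathbb{R}^{n})$. The extra remarks you add about the $L^{2}\times L^{2}$ hypothesis in part (i) licensing the diagonal substitution, and about the convention $W\psi=W(\psi,\psi)$, are accurate and merely make explicit what the paper leaves implicit.
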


\begin{proof}
In view of the statement (i) in the Proposition above the condition $W\psi\in
L^{1}(\mathbb{R}^{2n})$ implies that $\psi\in S_{0}(\mathbb{R}^{n})$. If
conversely $\psi\in S_{0}(\mathbb{R}^{n})$ then $W\psi\in L^{1}(\mathbb{R}%
^{2n})$ in view of the statement (ii) in the same Proposition.
\end{proof}

The elements of $S_{0}(\mathbb{R}^{n})$ are continuous functions; in fact:%

\begin{equation}
S_{0}(\mathbb{R}^{n})\subset C^{0}(\mathbb{R}^{n})\cap L^{1}(\mathbb{R}%
^{n})\cap F(L^{1}(\mathbb{R}^{n})). \label{inclo}%
\end{equation}
It easily follows from Riemann--Lebesgue's lemma that each $\psi\in
S_{0}(\mathbb{R}^{n})$ is bounded and that we have $\lim_{|z|\rightarrow
\infty}\psi=0$. The Feichtinger algebra is a Banach algebra, both for
pointwise product and for convolution. In fact, $\psi\in L^{1}(\mathbb{R}%
^{n})$ and $\psi^{\prime}\in S_{0}(\mathbb{R}^{n})$. Then $\psi\ast
\psi^{\prime}\in S_{0}(\mathbb{R}^{n})$ and we have%
\begin{equation}
||\psi\ast\psi^{\prime}||_{\phi,S_{0}}\leq||\psi||_{L^{1}}||\psi^{\prime
}||_{\phi,S_{0}} \label{starlm}%
\end{equation}
for every window $\phi\in\mathcal{S}(\mathbb{R}^{n})$. Thus, if $\psi\in
L^{1}(\mathbb{R}^{n})$ and $\psi^{\prime}\in S_{0}(\mathbb{R}^{n})$ then
$\psi\ast\psi^{\prime}\in S_{0}(\mathbb{R}^{n})$, so that
\[
L^{1}(\mathbb{R}^{n})\ast S_{0}(\mathbb{R}^{n})\subset S_{0}(\mathbb{R}^{n}).
\]
That $S_{0}(\mathbb{R}^{n})$ also is closed under pointwise multiplication
follows, taking Fourier transforms and using the fact that $\psi\in
S_{0}(\mathbb{R}^{n})$ if and only if $F\psi\in S_{0}(\mathbb{R}^{n})$.

\subsubsection{Application to Schr\"{o}dinger's equation}

The fact that $F(S_{0}(\mathbb{R}^{n}))=S_{0}(\mathbb{R}^{n})$ is a particular
case of the following essential property:

\begin{proposition}
\label{prop8}The Feichtinger algebra $S_{0}(\mathbb{R}^{n})$ is closed under
the action of the inhomogeneous metaplectic group $\operatorname*{IMp}(n)$.
\end{proposition}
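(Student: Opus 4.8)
The plan is to reduce the statement to two independent facts: that $S_0(\mathbb{R}^n)$ is invariant under the Heisenberg--Weyl operators $\widehat{T}(z_0)$, and that it is invariant under the metaplectic group $\operatorname*{Mp}(n)$. Since every element of $\operatorname*{IMp}(n)$ can be written in the form $\widehat{T}(z_0)\widehat{S}$ with $\widehat{S}\in\operatorname*{Mp}(n)$ and $z_0\in\mathbb{R}^{2n}$ (as noted after the definition of $\operatorname*{IMp}(n)$), the two invariances combine to give the full claim.

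For the Heisenberg--Weyl part, I would use the symplectic covariance of the cross-Wigner transform together with the characterization of $S_0$ in Definition \ref{defmodspw}. Concretely, one has the elementary translation property $W(\widehat{T}(z_0)\psi,\widehat{T}(z_0)\phi)(z)=W(\psi,\phi)(z-z_0)$, so that $\|\widehat{T}(z_0)\psi\|_{\widehat{T}(z_0)\phi,S_0}=\|\psi\|_{\phi,S_0}$; since $\widehat{T}(z_0)\phi\in\mathcal{S}(\mathbb{R}^n)$ whenever $\phi\in\mathcal{S}(\mathbb{R}^n)$, and the $S_0$-norms for different Schwartz windows are equivalent, $\psi\in S_0$ forces $\widehat{T}(z_0)\psi\in S_0$. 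For the metaplectic part, the key is the symplectic covariance of the Wigner transform, $W(\widehat{S}\psi,\widehat{S}\phi)(z)=W(\psi,\phi)(S^{-1}z)$ where $S=\Pi(\widehat{S})$; since $|\det S|=1$, the change of variables $z\mapsto S^{-1}z$ preserves $L^1(\mathbb{R}^{2n})$, giving $\|\widehat{S}\psi\|_{\widehat{S}\phi,S_0}=\|\psi\|_{\phi,S_0}$. Again using $\widehat{S}\phi\in\mathcal{S}(\mathbb{R}^n)$ for Schwartz $\phi$ (metaplectic operators preserve $\mathcal{S}$) and the norm equivalence, one concludes $\widehat{S}(S_0)\subseteq S_0$. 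Applying the same argument to $\widehat{S}^{-1}$ gives equality.

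Alternatively, and perhaps more in keeping with the generating-function philosophy of the paper, one could prove the metaplectic invariance directly on the generators $\widehat{S}_{W,m}$ of $\operatorname*{Mp}(n)$ from Proposition \ref{prop1}: a quadratic Fourier transform is, up to a chirp multiplication $e^{\frac{i}{\hbar}Px^2/2}$, a (partial) Fourier transform composed with a linear change of variables. Since $F(S_0)=S_0$, $S_0$ is invariant under linear coordinate changes, and $S_0$ is closed under multiplication by the unimodular chirps $e^{\frac{i}{\hbar}Px^2/2}$ (these belong to the pointwise multiplier algebra of $S_0$), each $\widehat{S}_{W,m}$ maps $S_0$ into $S_0$, and hence so does every product of two of them.

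The main obstacle is not any single deep fact but rather assembling the right toolkit cleanly: one must have available (i) the precise symplectic covariance identities for $W(\psi,\phi)$ under $\widehat{T}(z_0)$ and under $\widehat{S}$, (ii) the window-independence of the $S_0$-norm, and (iii) the fact that $\mathcal{S}(\mathbb{R}^n)$ is itself carried into $\mathcal{S}(\mathbb{R}^n)$ by $\operatorname*{IMp}(n)$, so that the transformed windows are still admissible. Each is standard, but the covariance of the Wigner transform under metaplectic operators is the substantive input; everything else is bookkeeping with norms and change of variables.
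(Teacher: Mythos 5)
Your first argument is essentially the paper's own proof: both reduce to the two covariance identities of the cross-Wigner transform under $\widehat{T}(z_0)$ and under $\widehat{S}\in\operatorname*{Mp}(n)$, together with the window-independence (up to equivalence) of the $S_0$-norm and the fact that $\operatorname*{IMp}(n)$ preserves $\mathcal{S}(\mathbb{R}^n)$ so the transformed window is still admissible. The only cosmetic difference is that you phrase the covariance as $\|\widehat{S}\psi\|_{\widehat{S}\phi,S_0}=\|\psi\|_{\phi,S_0}$ and $\|\widehat{T}(z_0)\psi\|_{\widehat{T}(z_0)\phi,S_0}=\|\psi\|_{\phi,S_0}$, while the paper shifts the window to the right-hand side, writing $\|\widehat{S}\psi\|_{\phi,S_0}=\|\psi\|_{\widehat{S}^{-1}\phi,S_0}$ and $\|\widehat{T}(z_0)\psi\|_{\phi,S_0}=\|\psi\|_{\widehat{T}(-z_0)\phi,S_0}$; these are the same identities after the measure-preserving change of variables $z\mapsto S^{-1}z$ (resp.\ $z\mapsto z-z_0$) that you also invoke.

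One caution on your alternative route via the generators $\widehat{S}_{W,m}$: the claim that the chirps $e^{\frac{i}{2\hbar}Px^2}$ lie in the pointwise multiplier algebra of $S_0(\mathbb{R}^n)$ is not mere bookkeeping. $S_0$ is very far from stable under multiplication by arbitrary bounded unimodular functions, and the stability under chirp multiplication is in fact a nontrivial statement that in the time-frequency literature is typically derived from metaplectic (equivalently, Wigner-transform) covariance — exactly the fact you are trying to prove. Unless you supply an independent proof of that multiplier property (e.g.\ via a direct estimate on the short-time Fourier transform of $e^{\frac{i}{2\hbar}Px^2}\psi$), that branch of the argument is circular, whereas your primary argument is complete and matches the paper.
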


\begin{proof}
We recall the two following properties of the cross-Wigner transform
\cite{Birk,Birkbis}: for every $\widehat{S}\in\operatorname*{Mp}(n)$ we have%
\[
W(\widehat{S}\psi,\widehat{S}\phi)(z)=W(\psi,\phi)(S^{-1}z)
\]
for $S=\Pi(\widehat{S})$; for every $z_{0}\in\mathbb{R}^{2n}$%
\[
W(\widehat{T}(z_{0})\psi,\widehat{T}(z_{0})\phi)(z)=W(\psi,\phi)(z-z_{0}).
\]
Since $\operatorname*{IMp}(n)$ is generated by the operators $\widehat{T}%
(z_{0})$ and $\widehat{S}\in\operatorname*{Mp}(n)$ it suffices to show that if
$\psi\in S_{0}(\mathbb{R}^{n})$ then $\widehat{T}(z_{0})\psi\in S_{0}%
(\mathbb{R}^{n})$ and $\widehat{S}\psi\in S_{0}(\mathbb{R}^{n})$. Taking
definition (\ref{wignorm}) into account we have%
\begin{align*}
||\widehat{T}(z_{0})\psi||_{\phi,S_{0}}  &  =\int_{\mathbb{R}^{2n}%
}|W(\widehat{T}(z_{0})\psi,\phi)(z)|dz\\
&  =\int_{\mathbb{R}^{2n}}|W(\widehat{T}(z_{0})\psi,\widehat{T}(-z_{0}%
)\phi)(z-z_{0})|dz\\
&  =||\psi||_{\widehat{T}(-z_{0})\phi,S_{0}}%
\end{align*}
hence $\widehat{T}(z_{0})\psi\in S_{0}(\mathbb{R}^{n})$ since $\widehat{T}%
(-z_{0})\phi\in\mathcal{S}(\mathbb{R}^{n})$. Similarly,
\begin{align*}
||\widehat{S}\psi||_{\phi,S_{0}}  &  =\int_{\mathbb{R}^{2n}}|W(\widehat{S}%
\psi,\phi)(z)|dz\\
&  =\int_{\mathbb{R}^{2n}}|W(\psi,\widehat{S}^{-1}\phi)(S^{-1}z)|dz\\
&  =||\psi||_{\widehat{S}^{-1}\phi,S_{0}}%
\end{align*}
and $\widehat{S}^{-1}\phi\in\mathcal{S}(\mathbb{R}^{n})$ hence $\widehat{S}%
\psi\in S_{0}(\mathbb{R}^{n})$ since the Wigner norms $||\cdot||_{\phi,S_{0}}$
and $||\cdot||_{\widehat{S}^{-1}\phi,S_{0}}$ are equivalent.
\end{proof}

It turns out (but we do not prove it here, see \cite{gr01,Birkbis}) that
$S_{0}(\mathbb{R}^{n})$ is the \emph{smallest Banach algebra} containing the
Schwartz class $\mathcal{S}(\mathbb{R}^{n})$ which is closed under the action
of the inhomogeneous group $\operatorname*{IMp}(n)$.

An immediate important application of Proposition \ref{prop8} is:

\begin{corollary}
Let $H$ be a Hamiltonian function of the type%
\[
H(z,t)=\frac{1}{2}M(t)z^{2}%
\]
where $M(t)\in\operatorname*{Sym}(2n,\mathbb{R})$ is a $C^{j}$ ($j\geq2$)
function of $t\in\mathbb{R}$. Let $\psi_{0}\in S_{0}(\mathbb{R}^{n})$. Let
$\psi$ be the solution of Schr\"{o}dinger's equation%
\[
i\hbar\frac{\partial\psi}{\partial t}=\widehat{H}\psi\text{ \ , \ }\psi
(\cdot,0)=\psi_{0}.
\]
The function $\psi(\cdot,t)$ belongs to $S_{0}(\mathbb{R}^{n})$ for every
$t\in\mathbb{R}$.
\end{corollary}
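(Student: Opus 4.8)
The plan is to reduce the claim to Proposition~\ref{prop8} via Proposition~\ref{prop9}. By Proposition~\ref{prop9}, the solution $\psi(\cdot,t)$ of the Schr\"odinger equation with Hamiltonian operator $\widehat{H}$ (the Weyl quantization of $H(z,t)=\tfrac{1}{2}M(t)z^{2}$) is given by $\psi(\cdot,t)=\widehat{S}_{t}\psi_{0}$, where $(\widehat{S}_{t})_{t}$ is the unique lift to $\operatorname*{Mp}(n)$ of the symplectic flow $(S_{t})_{t}$ determined by Hamilton's equations for $H$, with $\widehat{S}_{0}=I_{\mathrm{d}}$. In particular, for each fixed $t$, $\widehat{S}_{t}$ is an element of $\operatorname*{Mp}(n)$, hence a fortiori of $\operatorname*{IMp}(n)$.

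First I would invoke Proposition~\ref{prop9} to write $\psi(\cdot,t)=\widehat{S}_{t}\psi_{0}$ with $\widehat{S}_{t}\in\operatorname*{Mp}(n)$; one should note that Proposition~\ref{prop9} is stated for $\psi_{0}\in L^{2}(\mathbb{R}^{n})$, and since $S_{0}(\mathbb{R}^{n})\subset L^{2}(\mathbb{R}^{n})$ (as follows from the inclusions recorded in~(\ref{inclo}), the elements of $S_{0}(\mathbb{R}^{n})$ being in $L^{1}\cap F(L^{1})$ and bounded, hence square-integrable), the hypothesis $\psi_{0}\in S_{0}(\mathbb{R}^{n})$ is admissible. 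Then I would apply Proposition~\ref{prop8}: since $S_{0}(\mathbb{R}^{n})$ is closed under the action of $\operatorname*{IMp}(n)$ and $\widehat{S}_{t}\in\operatorname*{Mp}(n)\subset\operatorname*{IMp}(n)$, we get $\widehat{S}_{t}\psi_{0}\in S_{0}(\mathbb{R}^{n})$, i.e.\ $\psi(\cdot,t)\in S_{0}(\mathbb{R}^{n})$ for every $t\in\mathbb{R}$. This completes the argument.

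The only genuine point requiring care — and the place where I expect the main (though modest) obstacle — is justifying that the solution furnished by Proposition~\ref{prop9} is the \emph{same} object as the abstract Schr\"odinger solution in the statement: Proposition~\ref{prop9} asserts precisely this identification, relying on the fact (cited from \cite{Birk,Birkbis}) that the metaplectic lift $t\mapsto\widehat{S}_{t}$ is $C^{1}$ and satisfies $i\hbar\,\tfrac{d}{dt}\widehat{S}_{t}=\widehat{H}\widehat{S}_{t}$ with $\widehat{H}$ the Weyl quantization of the quadratic Hamiltonian~(\ref{hamzo}) attached to $(S_{t})_{t}$. Here $(S_{t})_{t}$ is exactly the flow of $\dot{z}=J\partial_{z}H = JM(t)z$ for $H(z,t)=\tfrac{1}{2}M(t)z^{2}$, so the Hamiltonian recovered by Corollary~\ref{cor2} coincides with the given $H$, and hence $\widehat{H}$ matches the operator in the statement. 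Granting this identification (which is Proposition~\ref{prop9} verbatim), the corollary is immediate from Proposition~\ref{prop8}.
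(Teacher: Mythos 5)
Your proof is correct and follows exactly the paper's argument, which simply cites Propositions~\ref{prop9} and~\ref{prop8}: Proposition~\ref{prop9} identifies the solution as $\psi(\cdot,t)=\widehat{S}_{t}\psi_{0}$ with $\widehat{S}_{t}\in\operatorname*{Mp}(n)\subset\operatorname*{IMp}(n)$, and Proposition~\ref{prop8} then gives $\widehat{S}_{t}\psi_{0}\in S_{0}(\mathbb{R}^{n})$. The extra checks you add (that $S_{0}(\mathbb{R}^{n})\subset L^{2}(\mathbb{R}^{n})$, and that the quadratic Hamiltonian recovered from the flow by Corollary~\ref{cor2} is indeed the given $H$) are sound and harmless, just more explicit than the paper's one-line proof.
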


\begin{proof}
It immediately follows from Propositions \ref{prop9} and \ref{prop8}.
\end{proof}

This result will be extended to inhomogeneous quadratic Hamiltonians in
Section \ref{subsech}.

\subsection{The Weyl symbol of a metaplectic operator}

Metaplectic operators map $\mathcal{S}(\mathbb{R}^{n})\longrightarrow
\mathcal{S}(\mathbb{R}^{n})$ and it therefore makes sense to speak about their
Weyl symbol. (We will review in some detail the notion of Weyl symbol of an
operator in Section \ref{secweyl}, to which we refer for the definitions used here.)

\subsubsection{The symplectic Cayley transform}

Let $\operatorname*{Sp}_{0}(n)$ be the closed subset of $\operatorname*{Sp}%
(n)$ defined by%
\[
\operatorname*{Sp}\nolimits_{0}(n)=\{S\in\operatorname*{Sp}(n):\det
(S-I)\neq0\}.
\]
We have proven in \cite{go05,go07,gogo06} that:

\begin{proposition}
\label{prop3}(i) The mapping $\Phi:S\longmapsto M_{S}$ defined by%
\[
\Phi(S)=M_{S}=\frac{1}{2}J(S+I)(S-I)^{-1}%
\]
is an injection of $\operatorname*{Sp}\nolimits_{0}(n)$ into the set
$\operatorname*{Sym}(2n,\mathbb{R})$ of real symmetric $2n\times2n$ matrices.
(ii)The inverse of $\Phi$ is given by%
\[
\Phi^{-1}(M)=\left(  M-\tfrac{1}{2}J\right)  ^{-1}\left(  M+\tfrac{1}%
{2}J\right)
\]
and we have $\Phi(S^{-1})=-\Phi(S)$.
\end{proposition}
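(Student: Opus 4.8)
The plan is to verify the three assertions directly by computation with $2n \times 2n$ block matrices, using only the symplectic condition $S^T J S = J$ (equivalently $S J S^T = J$, $J^2 = -I$, $J^T = -J$). First I would show that $M_S = \tfrac12 J(S+I)(S-I)^{-1}$ is symmetric whenever $S \in \operatorname{Sp}_0(n)$. The natural way is to compute $M_S^T$ and show it equals $M_S$. Writing $M_S^T = \tfrac12 ((S-I)^{-1})^T (S+I)^T J^T = -\tfrac12 (S^T - I)^{-1}(S^T+I)J$, I would use $S^T = J S^{-1} J^{-1} = -J S^{-1} J$ (from $S^T J S = J$) to rewrite $S^T \pm I$ in terms of $S^{-1}$, then factor out $J$'s and $(S-I)^{-1}$'s. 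The key algebraic identity to exploit is that $(S+I)$ and $(S-I)^{-1}$ commute (both are polynomials in $S$), and that $S^{-1}(S+I) = (S+I)S^{-1}$, etc. After pushing the $J$'s through and using $J^2 = -I$, the expression should collapse to $\tfrac12 J (S+I)(S-I)^{-1} = M_S$. This is the computational heart of part (i).

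For injectivity, I would simply invert the defining relation algebraically: starting from $M = \tfrac12 J(S+I)(S-I)^{-1}$, multiply on the left by $2J^{-1} = -2J$ to get $-2JM = (S+I)(S-I)^{-1}$, then solve for $S$. Setting $N = -2JM = 2M^{-1}\cdot$(something)… more cleanly: from $2J^{-1}M(S-I) = S+I$ we get $(2J^{-1}M - I)S = 2J^{-1}M + I$, so $S = (2J^{-1}M - I)^{-1}(2J^{-1}M + I)$, provided $2J^{-1}M - I = -2JM - I$ is invertible. Rewriting $2J^{-1}M - I = -2J(M - \tfrac12 J \cdot J^{-1}\cdot\tfrac12\cdot\dots)$ — more directly, factor $-2J$ out: $-2JM - I = -2J(M - \tfrac12 J^{-1})= -2J(M + \tfrac12 J)$ since $J^{-1} = -J$; wait, $-\tfrac12 J^{-1} = \tfrac12 J$, so $-2JM - I = -2J(M + \tfrac12 J)$. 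Hence $S = (M+\tfrac12 J)^{-1}(-2J)^{-1}(-2J)(M - \tfrac12 J)$; carefully tracking the $-2J$ factors on numerator and denominator, they cancel and one is left with $S = (M - \tfrac12 J)^{-1}(M + \tfrac12 J)$, which is exactly the claimed formula for $\Phi^{-1}(M)$. This both proves injectivity and exhibits the inverse, so it handles part (ii)'s first assertion. I should double-check that $M - \tfrac12 J$ is indeed invertible on the image of $\Phi$; this follows because $S - I$ is invertible by hypothesis and the two are related by an invertible factor in the manipulation above.

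For the final claim $\Phi(S^{-1}) = -\Phi(S)$, I would compute $M_{S^{-1}} = \tfrac12 J(S^{-1} + I)(S^{-1} - I)^{-1}$. Factoring $S^{-1}$ cleverly: $S^{-1} + I = S^{-1}(I + S)$ and $S^{-1} - I = S^{-1}(I - S) = -S^{-1}(S - I)$, so $(S^{-1} - I)^{-1} = -(S-I)^{-1}S$. Thus $M_{S^{-1}} = \tfrac12 J S^{-1}(I+S)\cdot(-(S-I)^{-1}S) = -\tfrac12 J S^{-1}(S+I)(S-I)^{-1}S$. Now $(S+I)$ and $(S-I)^{-1}$ commute with each other but not with $S^{-1}$ in general; however $(S+I)(S-I)^{-1}$ is a rational function of $S$, hence commutes with $S^{-1}$, so $S^{-1}(S+I)(S-I)^{-1}S = (S+I)(S-I)^{-1}$, giving $M_{S^{-1}} = -\tfrac12 J(S+I)(S-I)^{-1} = -M_S$.

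The main obstacle I anticipate is keeping the noncommutativity bookkeeping under control in part (i): the $J$'s do not commute with $S$, so one must carefully use the symplectic identity $S^T J = J S^{-1}$ at the right moment and track which factors are polynomials in $S$ (hence mutually commuting) versus genuine matrix products involving $J$. The parts (ii) computations are essentially formal inversions and factorizations and should go through smoothly once the algebra in (i) is set up correctly; I would also cite the references \cite{go05,go07,gogo06} for the original derivations.
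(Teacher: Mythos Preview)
Your proposal is correct: the symmetry verification via $S^{T}=-JS^{-1}J$, the algebraic inversion yielding $S=(M-\tfrac12 J)^{-1}(M+\tfrac12 J)$, and the $\Phi(S^{-1})=-\Phi(S)$ computation using that $(S+I)(S-I)^{-1}$ commutes with $S^{\pm1}$ all go through as you describe. One small cleanup: the invertibility of $M-\tfrac12 J$ on the image of $\Phi$ follows immediately from the identity $M-\tfrac12 J=\tfrac12 J\bigl[(S+I)(S-I)^{-1}-I\bigr]=J(S-I)^{-1}$, which you might state explicitly rather than leave as a remark.

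As for comparison with the paper: the paper does not actually prove this proposition in the text but simply refers to \cite{go05,go07,gogo06} for the argument. Your self-contained matrix computation therefore supplies strictly more than what appears here; it is in the same spirit as the original references (direct verification from the symplectic relation $S^{T}JS=J$), so there is no genuine methodological divergence to discuss.
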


We call the mapping $\Phi$ the \textit{symplectic Cayley transform} and
$M_{S}$ the Cayley matrix associated with $S\in\operatorname*{Sp}(n)$.

\subsubsection{The operators $\protect\widehat{R}_{\nu}(S)$}

Let now $\nu$ be an arbitrary real number and define, for $S\in
\operatorname*{Sp}\nolimits_{0}(n)$, an operator $\widehat{R}_{\nu}(S)$ by the
formula%
\begin{equation}
\widehat{R}_{\nu}(S)=\left(  \tfrac{1}{2\pi\hbar}\right)  ^{n}i^{\nu}%
\sqrt{|\det(S-I)|}\int_{\mathbb{R}^{2n}}\widehat{T}(Sz)\widehat{T}(-z)dz.
\label{Rvs1}%
\end{equation}
Using the formula (\ref{hw3}) it is easy to rewrite (\ref{Rvs1}) as%
\begin{equation}
\widehat{R}_{\nu}(S)=\left(  \frac{1}{2\pi\hbar}\right)  ^{n}\frac{i^{\nu}%
}{\sqrt{|\det(S-I)|}}\int_{\mathbb{R}^{2n}}e^{\frac{i}{2\hbar}\Phi(S)z^{2}%
}\widehat{T}(z)dz. \label{Rvs2}%
\end{equation}
We have proven in \cite{go05} (also \cite{go07,gogo06}) the following result:

\begin{proposition}
\label{prop4}(i) Let $\widehat{S}_{W,m}\in\operatorname*{Mp}(n)$ be such that
$\Pi^{\hbar}(\widehat{S}_{W,m})=S_{W}\in\operatorname*{Sp}\nolimits_{0}(n)$.
We have%
\begin{equation}
\widehat{S}_{W,m}=\widehat{R}_{m-\operatorname*{Inert}W_{xx}}(S_{W})
\label{Rvs3}%
\end{equation}
where $\operatorname*{Inert}W_{xx}$ is the index of inertia of the quadratic
form $x\longmapsto W(x,x)$. (ii) Every $\widehat{S}\in\operatorname*{Mp}(n)$
can be written as a product $\widehat{S}_{W,m}\widehat{S}_{W^{\prime
},m^{\prime}}$ with $S_{W},S_{W^{\prime}}\in\operatorname*{Sp}\nolimits_{0}%
(n)$ and if $S=\Pi^{\hbar}(\widehat{S})\in\operatorname*{Sp}\nolimits_{0}(n)$
then%
\begin{equation}
\widehat{S}=\widehat{R}_{m+m^{\prime}+\frac{1}{2}\operatorname*{sign}%
(\Phi(S)+\Phi(S^{\prime}))}(SS^{\prime}) \label{Rvs4}%
\end{equation}
where $\operatorname*{sign}(M)$ is the signature of the symmetric matrix $M$.
\end{proposition}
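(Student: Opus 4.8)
The plan is to prove (i) by a direct computation of the integral kernel of $\widehat{R}_{\nu}(S_{W})$ and (ii) by combining (i) with a multiplicativity property of the operators $\widehat{R}_{\nu}$; the only genuinely delicate point in either part is the bookkeeping of the powers of $i$ modulo $4$. Throughout I will use the equivalent form (\ref{Rvs2}) of $\widehat{R}_{\nu}(S)$: its derivation from (\ref{Rvs1}) via (\ref{hw3}) rests on the algebraic identity $\sigma\big(S(S-I)^{-1}w,(S-I)^{-1}w\big)=-\Phi(S)w\cdot w$, which follows from $S^{T}JS=J$ together with the decomposition $S-S^{-1}=S^{-1}(S-I)(S+I)$ (these collapse the symmetric part of the quadratic form to $-\tfrac{1}{2}J(S+I)(S-I)^{-1}=-\Phi(S)$).

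\textbf{For part (i)}, take $\widehat{S}_{W,m}$ with $S_{W}$ free and, using the explicit action (\ref{hw1}) of the Heisenberg--Weyl operators, evaluate $\widehat{R}_{\nu}(S_{W})\psi(x)$ from (\ref{Rvs1}). Writing $z=(q,p)$ and changing $q\mapsto x'=x-q$ turns this into an integral operator whose inner integral over $p\in\mathbb{R}^{n}$ is a Fresnel integral with Hessian the lower--right $n\times n$ block $\delta$ of the Cayley matrix $\Phi(S_{W})$, the remaining phase being quadratic in $(x,x')$. Evaluating that integral by the Gaussian formula $\int_{\mathbb{R}^{n}}e^{\frac{i}{2\hbar}\delta p\cdot p+\frac{i}{\hbar}b\cdot p}\,dp=(2\pi\hbar)^{n/2}|\det\delta|^{-1/2}e^{\frac{i\pi}{4}\operatorname{sign}\delta}e^{-\frac{i}{2\hbar}\delta^{-1}b\cdot b}$ gives an operator whose kernel is a constant times $e^{\frac{i}{\hbar}W(x,x')}$, and matching with the kernel $(\tfrac{1}{2\pi i\hbar})^{n/2}\Delta_{m}(W)e^{\frac{i}{\hbar}W(x,x')}$ of $\widehat{S}_{W,m}$ forces three things, each a short algebraic check from (\ref{swplq}): (a) the two quadratic forms in $(x,x')$ coincide, i.e. $\Phi(S_{W})$ reproduces the generating function $W$; (b) a determinant identity linking $\det(S_{W}-I)$, $\det\delta$ and $\det L$ — which in particular shows $\delta$ invertible exactly because $S_{W}\in\operatorname{Sp}_{0}(n)$; (c) the residual power of $i$: since $\delta$ is congruent to $-W_{xx}$ one has $\operatorname{sign}\delta=2\operatorname{Inert}W_{xx}-n$, so $e^{\frac{i\pi}{4}\operatorname{sign}\delta}=i^{-n/2}\,i^{\operatorname{Inert}W_{xx}}$, the factor $i^{-n/2}$ converting $(\tfrac{1}{2\pi\hbar})^{n/2}$ into $(\tfrac{1}{2\pi i\hbar})^{n/2}$, after which comparison with $\Delta_{m}(W)=i^{m}\sqrt{|\det L|}$ leaves exactly $i^{\nu}i^{\operatorname{Inert}W_{xx}}=i^{m}$, i.e. $\nu=m-\operatorname{Inert}W_{xx}$. (One may equivalently phrase this as a computation of the Weyl symbol of $\widehat{S}_{W,m}$, noting that $\widehat{R}_{\nu}$ is by construction the operator whose Weyl symbol is the Gaussian attached to $\Phi(S)$.)

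\textbf{For part (ii)}, factor $\widehat{S}=\widehat{S}_{W,m}\widehat{S}_{W',m'}$ by Proposition \ref{prop1}(i); a transversality argument (the locus $\det(S-I)=0$ being nowhere dense, and similarly the condition failing for the second factor) lets one choose the factors with $S_{W},S_{W'}\in\operatorname{Sp}_{0}(n)$, and by hypothesis $S=S_{W}S_{W'}\in\operatorname{Sp}_{0}(n)$. By (i), $\widehat{S}=\widehat{R}_{m-\operatorname{Inert}W_{xx}}(S_{W})\,\widehat{R}_{m'-\operatorname{Inert}W^{\prime}_{xx}}(S_{W'})$, so what remains is the composition law
\[
\widehat{R}_{\mu}(S)\,\widehat{R}_{\mu'}(S')=\widehat{R}_{\mu''}(SS')\qquad(S,S',SS'\in\operatorname{Sp}_{0}(n)).
\]
Multiplying the two copies of (\ref{Rvs2}) and using (\ref{hw3}) turns the product into a twisted convolution of the Gaussians $e^{\frac{i}{2\hbar}\Phi(S)z^{2}}$ and $e^{\frac{i}{2\hbar}\Phi(S')z^{2}}$; the ensuing Gaussian integral (with Hessian $\Phi(S)+\Phi(S')$) yields a single Gaussian whose exponent one identifies, using Proposition \ref{prop3}, with $\Phi(SS')z^{2}$ — this is the statement that the symplectic Cayley transform carries the partial product on $\operatorname{Sp}_{0}(n)$ to the corresponding operation on symmetric matrices — while the prefactor supplies the normalization $|\det(SS'-I)|^{-1/2}$ (via the determinant identity of part (i)) and the phase $\tfrac{1}{2}\operatorname{sign}(\Phi(S)+\Phi(S'))$. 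Substituting $\mu=m-\operatorname{Inert}W_{xx}$, $\mu'=m'-\operatorname{Inert}W^{\prime}_{xx}$ and reorganizing the indices by a signature identity of Leray--index/Maslov type — which is exactly what absorbs the two inertia terms — gives $\mu''\equiv m+m'+\tfrac{1}{2}\operatorname{sign}(\Phi(S)+\Phi(S'))\ \mathrm{mod}\ 4$, which is (\ref{Rvs4}).

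\textbf{The main obstacle} is not the Gaussian and Fresnel integrals, which are routine, but the $\mathrm{mod}\ 4$ accounting of the powers of $i$ — tracking the signatures of $\delta$, of $\Phi(S)+\Phi(S')$, and of the auxiliary quadratic forms produced when completing the square, and proving they are mutually consistent. This is precisely the cocycle data that distinguishes a genuine metaplectic representation from a merely projective one, so it cannot be circumvented; I would handle it by writing each power of $i$ as a Maslov/inertia index and invoking the standard additivity (triple-signature) identities, following de Gosson \cite{go05,go07,gogo06}.
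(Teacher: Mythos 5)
The paper offers no internal proof of this proposition: it is stated with the remark that it ``has been proven in \cite{go05,go07,gogo06}.'' So your proposal has to be compared to the argument in those references, and its broad architecture does match them: part (i) is a Gaussian/Fresnel kernel computation identifying $\widehat{R}_{\nu}(S_{W})$ with the quadratic Fourier transform, and part (ii) rests on a composition law $\widehat{R}_{\mu}(S)\widehat{R}_{\mu'}(S')=\widehat{R}_{\mu''}(SS')$ obtained by twisting two Gaussians through \eqref{hw3} and completing the square, with $\Phi(S)+\Phi(S')=J(S-I)^{-1}(SS'-I)(S'-I)^{-1}$ supplying the determinant identity and the Cayley-transform algebra producing $\Phi(SS')$ in the exponent. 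The preparatory identity $\sigma(S(S-I)^{-1}w,(S-I)^{-1}w)=-\Phi(S)w\cdot w$ you use to pass from \eqref{Rvs1} to \eqref{Rvs2} is correct (the sign matches once the factor $e^{-\frac{i}{2\hbar}\sigma(Sz,z)}$ from \eqref{hw3} is accounted for). All of this is the right route, and you correctly identify the mod $4$ accounting as the genuine content.

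There is, however, a concrete gap at the very end of part (ii). Your own composition law gives, without any further input,
\[
\widehat{R}_{\mu}(S_{W})\,\widehat{R}_{\mu'}(S_{W'})
   =\widehat{R}_{\mu+\mu'+\frac{1}{2}\operatorname{sign}(\Phi(S_{W})+\Phi(S_{W'}))}(S_{W}S_{W'}),
\]
so after inserting $\mu=m-\operatorname{Inert}W_{xx}$ and $\mu'=m'-\operatorname{Inert}W'_{xx}$ you obtain the index
\[
m+m'-\operatorname{Inert}W_{xx}-\operatorname{Inert}W'_{xx}+\tfrac{1}{2}\operatorname{sign}\bigl(\Phi(S_{W})+\Phi(S_{W'})\bigr),
\]
not the index $m+m'+\tfrac{1}{2}\operatorname{sign}(\Phi(S_{W})+\Phi(S_{W'}))$ appearing in \eqref{Rvs4}. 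You appeal to ``a signature identity of Leray--index/Maslov type which absorbs the two inertia terms,'' but this cannot be right as stated: it would require $\operatorname{Inert}W_{xx}+\operatorname{Inert}W'_{xx}\equiv 0\ (\mathrm{mod}\ 4)$ for arbitrary factorizations, which is false. Such cocycle identities (see \eqref{mmm} and de Gosson's triple-signature formula) relate $\tfrac{1}{2}\operatorname{sign}(\Phi(S_{W})+\Phi(S_{W'}))$ to $\operatorname{Inert}W_{xx}$, $\operatorname{Inert}W'_{xx}$, $\operatorname{Inert}W''_{xx}$ and $\operatorname{Inert}(P'+Q)$, and are used precisely to re-express the product index in the form $m''-\operatorname{Inert}W''_{xx}$; they do not cancel the inertia terms outright. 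So either \eqref{Rvs4} should read $\widehat{R}_{\nu+\nu'+\frac{1}{2}\operatorname{sign}(\Phi(S_{W})+\Phi(S_{W'}))}(S_{W}S_{W'})$ with $\nu=m-\operatorname{Inert}W_{xx}$, $\nu'=m'-\operatorname{Inert}W'_{xx}$ (i.e. the Conley--Zehnder indices, which the Remark identifies \eqref{Rvs3} with), or you need to make the signature identity explicit and show it actually does the claimed cancellation. As it stands, this step is not a proof but a hopeful gesture, and it happens to be the only place where the cocycle of the representation is actually verified, so it cannot be left to ``routine bookkeeping.''

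A minor point: in part (ii) the existence of a factorization with \emph{both} $S_{W},S_{W'}\in\operatorname{Sp}_{0}(n)$ deserves a slightly more careful argument than ``the locus $\det(S-I)=0$ is nowhere dense'': one fixes $\widehat{S}_{W',m'}$ with $S_{W'}$ free and in $\operatorname{Sp}_{0}(n)$, then has to check that $S_{W}=S S_{W'}^{-1}$ can simultaneously be made free and in $\operatorname{Sp}_{0}(n)$ — a finite number of open dense conditions on the choice of $S_{W'}$, so the conclusion is correct, but the argument should name those conditions rather than invoke transversality in the abstract.
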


\begin{remark}
Formula (\ref{Rvs3}) identifies, modulo $4$, the integer $\nu
=m-\operatorname*{Inert}W_{xx}$ with the Conley--Zehnder index of a path
joining the identity $I$ to $S$ in $\operatorname*{Sp}(n)$; see de Gosson
\cite{go09,Birk}.
\end{remark}

Proposition \ref{prop4} has the following consequences:

\begin{corollary}
(i) The twisted Weyl symbol $(s_{W,m})_{\sigma}$ of $\widehat{S}_{W,m}%
\in\operatorname*{Mp}(n)$ ($S_{W}\in\operatorname*{Sp}\nolimits_{0}(n)$) is
given by%
\begin{equation}
(s_{W,m})_{\sigma}(z)=\frac{i^{m-\operatorname*{Inert}W_{xx}}}{\sqrt
{|\det(S_{W}-I)|}}e^{\frac{i}{2\hbar}\Phi(S_{W})z^{2}}. \label{weyl1}%
\end{equation}
(ii) Assuming in addition that $\det(S_{W}+I)\neq0$ the Weyl symbol $s_{W}$ of
$\widehat{S}_{W,m}$ is given by%
\begin{equation}
s_{W,m}(z)=2^{n/2}\frac{i^{m-\operatorname*{Inert}W_{xx}}}{\sqrt{|\det
(S_{W}+I)|}}e^{-\frac{i}{2\hbar}\Phi(S_{W}^{-1})Jz\cdot Jz} \label{weyl1bis}%
\end{equation}

\end{corollary}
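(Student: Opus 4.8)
The plan is to get (i) essentially by inspection and (ii) by one Fresnel (Gaussian) integration together with the determinant and inversion identities already recorded for the symplectic Cayley transform. For (i), recall that the twisted Weyl symbol $a_{\sigma}$ of an operator $\widehat{A}$ is by definition the function for which $\widehat{A}=\left(\tfrac{1}{2\pi\hbar}\right)^{n}\int_{\mathbb{R}^{2n}}a_{\sigma}(z)\widehat{T}(z)\,dz$. Proposition \ref{prop4}(i) identifies $\widehat{S}_{W,m}$ with $\widehat{R}_{m-\operatorname*{Inert}W_{xx}}(S_{W})$, and formula (\ref{Rvs2}) already exhibits $\widehat{R}_{\nu}(S)$ in exactly this integral form, with the coefficient of $\widehat{T}(z)$ equal to $i^{\nu}|\det(S-I)|^{-1/2}e^{\frac{i}{2\hbar}\Phi(S)z^{2}}$ (up to the overall $\left(\tfrac{1}{2\pi\hbar}\right)^{n}$). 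Reading this off with $\nu=m-\operatorname*{Inert}W_{xx}$ is (\ref{weyl1}).

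For (ii): the Weyl symbol is recovered from its twisted counterpart by a single application of the symplectic Fourier transform, which is an involution, so $s_{W,m}=\mathcal{F}_{\sigma}[(s_{W,m})_{\sigma}]$; substituting (\ref{weyl1}) and using $\sigma(z,z')=Jz\cdot z'$ this is the Fresnel integral
\[
s_{W,m}(z)=\left(\tfrac{1}{2\pi\hbar}\right)^{n}\frac{i^{m-\operatorname*{Inert}W_{xx}}}{\sqrt{|\det(S_{W}-I)|}}\int_{\mathbb{R}^{2n}}e^{\frac{i}{2\hbar}\Phi(S_{W})z'\cdot z'-\frac{i}{\hbar}Jz\cdot z'}\,dz'.
\]
I would first note the determinant identity $\det\Phi(S)=2^{-2n}\det(S+I)\det(S-I)^{-1}$, immediate from $\Phi(S)=\tfrac12 J(S+I)(S-I)^{-1}$ and $\det J=1$; it shows that the hypothesis $\det(S_{W}+I)\neq0$ is precisely the non-degeneracy condition $\det\Phi(S_{W})\neq0$ needed to carry out the integration, and it is what turns the Fresnel amplitude $|\det(S_{W}-I)|^{-1/2}|\det\Phi(S_{W})|^{-1/2}$ into a constant times $|\det(S_{W}+I)|^{-1/2}$. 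Completing the square in $z'$ then evaluates the integral and produces the Gaussian exponent $-\tfrac{i}{2\hbar}\Phi(S_{W})^{-1}(Jz)\cdot(Jz)$; using the explicit inverse $\Phi^{-1}$ and the relation $\Phi(S^{-1})=-\Phi(S)$ from Proposition \ref{prop3}(ii) one rewrites this in the form displayed in (\ref{weyl1bis}).

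The step I expect to be the real obstacle is the phase (Maslov) bookkeeping. The Fresnel formula contributes a factor $e^{\frac{i\pi}{4}\operatorname*{sign}\Phi(S_{W})}$, and to conclude that the integer decorating the power of $i$ in the Weyl symbol is again $m-\operatorname*{Inert}W_{xx}$ one has to show that this signature term is absorbed correctly --- i.e. one needs the index relation tying $\operatorname*{sign}\Phi(S_{W})$ to $\operatorname*{Inert}W_{xx}$, the same circle of ideas used in the Remark following Proposition \ref{prop4}, where $m-\operatorname*{Inert}W_{xx}$ is identified modulo $4$ with a Conley--Zehnder index. The completion of the square, the determinant arithmetic, and the use of $\Phi(S^{-1})=-\Phi(S)$ are routine; it is the matching of the arguments of the powers of $i$ that is delicate, and for that I would lean on the computations already carried out in \cite{go05,go07,gogo06}.
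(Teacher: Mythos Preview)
Your approach is essentially identical to the paper's: for (i) you read off the twisted symbol from (\ref{Rvs2}) and (\ref{Rvs3}) via the harmonic representation (\ref{Opa}), and for (ii) you invert the symplectic Fourier transform using a Fresnel integral --- this is exactly what the paper does, though the paper is much terser and simply cites the Fresnel formula from \cite{Birk}, \S 7.4, without spelling out the determinant arithmetic or the signature bookkeeping you describe. Your explicit identification of the phase matching as the delicate step is accurate, and your proposed resolution (deferring to the index computations in \cite{go05,go07,gogo06}) is in the same spirit as the paper's citation of \cite{Birk}.
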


\begin{proof}
Formula (\ref{weyl1}) is an immediate consequence of (\ref{Rvs2}) and
(\ref{Rvs3}) in view of (\ref{Opa}). To prove formula (\ref{weyl1bis}) it
suffices to note that $s_{W,m}$ is the symplectic Fourier transform
(\ref{sympFT}) of $(s_{W,m})_{\sigma}$ (because the latter is involutive) and
to apply the well-known Fresnel formula giving the Fourier \ transform of a
Gaussian (see \cite{Birk}, \S 7.4).
\end{proof}

The case of the inhomogeneous metaplectic group easily follows:

\begin{corollary}
The twisted Weyl symbol of $\widehat{T}(z_{0})\widehat{S}_{W,m}\in
\operatorname*{IMp}(n)$ (with $S_{W}\in\operatorname*{Sp}\nolimits_{0}(n)$) is
given by%
\begin{equation}
a_{\sigma}(z)=\frac{i^{m-\operatorname*{Inert}W_{xx}}}{\sqrt{|\det(S-I)|}%
}e^{\frac{i}{2\hbar}[\Phi(S)(z-z_{0})^{2}-\sigma(z,z_{0})]}. \label{weyl2}%
\end{equation}

\end{corollary}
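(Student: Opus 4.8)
The plan is to combine the two structural results we already have: Proposition \ref{prop4}(i), which identifies $\widehat{S}_{W,m}$ with the operator $\widehat{R}_{\nu}(S_W)$ for $\nu=m-\operatorname{Inert}W_{xx}$ and gives the integral formula (\ref{Rvs2}), together with the intertwining relation (\ref{sympco1}) and the Heisenberg--Weyl product formulas (\ref{hw2})--(\ref{hw3}). The key observation is that the twisted Weyl symbol of an operator $\widehat{A}$ is (up to normalization) the function $a_\sigma$ such that $\widehat{A}=\left(\tfrac{1}{2\pi\hbar}\right)^{n}\int_{\mathbb{R}^{2n}}a_\sigma(z)\widehat{T}(z)\,dz$; so I need only compute the expansion of $\widehat{T}(z_0)\widehat{S}_{W,m}$ in the Heisenberg--Weyl basis and read off the coefficient function.

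First I would start from (\ref{Rvs2}) applied to $\widehat{S}_{W,m}=\widehat{R}_\nu(S_W)$ with $\nu=m-\operatorname{Inert}W_{xx}$, so that
\[
\widehat{S}_{W,m}=\left(\tfrac{1}{2\pi\hbar}\right)^{n}\frac{i^{\nu}}{\sqrt{|\det(S_W-I)|}}\int_{\mathbb{R}^{2n}}e^{\frac{i}{2\hbar}\Phi(S_W)z^{2}}\widehat{T}(z)\,dz.
\]
Then I would left-multiply by $\widehat{T}(z_0)$ and push it inside the integral, using the composition law (\ref{hw3}) in the form $\widehat{T}(z_0)\widehat{T}(z)=e^{\frac{i}{2\hbar}\sigma(z_0,z)}\widehat{T}(z_0+z)$. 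This gives an integral over $z$ of $e^{\frac{i}{2\hbar}\Phi(S_W)z^{2}+\frac{i}{2\hbar}\sigma(z_0,z)}\widehat{T}(z_0+z)$. The natural next step is the change of variable $z\mapsto z-z_0$ (translating $z_0+z\mapsto z$), which turns the integrand into $e^{\frac{i}{2\hbar}\Phi(S_W)(z-z_0)^{2}+\frac{i}{2\hbar}\sigma(z_0,z-z_0)}\widehat{T}(z)$. Since $\sigma(z_0,z_0)=0$, the linear-in-$z_0$ cross term collapses to $\sigma(z_0,z)=-\sigma(z,z_0)$, so the exponent becomes exactly $\tfrac{i}{2\hbar}[\Phi(S_W)(z-z_0)^{2}-\sigma(z,z_0)]$, matching (\ref{weyl2}) once I divide out the normalization $\left(\tfrac{1}{2\pi\hbar}\right)^{n}$ to extract $a_\sigma$.

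The main obstacle I anticipate is purely bookkeeping rather than conceptual: keeping the phase conventions and the $\tfrac12$ factors in (\ref{hw3}) straight, and making sure the sign in $\sigma(z_0,z-z_0)=\sigma(z_0,z)=-\sigma(z,z_0)$ lands with the minus sign that appears in (\ref{weyl2}). I would also need to double-check that the definition of "twisted Weyl symbol" being used (the relation labelled (\ref{Opa}) in the paper, which I am taking on faith from the earlier sections) is exactly $\widehat{A}=\left(\tfrac{1}{2\pi\hbar}\right)^{n}\int a_\sigma(z)\widehat{T}(z)\,dz$, so that no extra Jacobian or Fourier-type factor sneaks in when I read off $a_\sigma$ from the integral representation. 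Granting that, the corollary is an immediate consequence of Proposition \ref{prop4}(i) and the single change of variables described above, and no new analytic input (no convergence issues beyond those already handled in establishing (\ref{Rvs2})) is required.
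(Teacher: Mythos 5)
Your argument is exactly the paper's: apply the Weyl–harmonic representation (\ref{Rvs2}) to $\widehat{S}_{W,m}=\widehat{R}_{\nu}(S_W)$, commute $\widehat{T}(z_0)$ into the integral via the product formula (\ref{hw3}), and perform the shift $z\mapsto z-z_0$, using $\sigma(z_0,z-z_0)=\sigma(z_0,z)=-\sigma(z,z_0)$ to identify the exponent with that in (\ref{weyl2}). The bookkeeping you flag (the $\tfrac12$ factors and the sign) works out as you anticipate, and your reading of the twisted symbol off (\ref{Opa}) is the intended one, so the proof is correct and takes the same route as the paper.
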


\begin{proof}
Using successively (\ref{Rvs2}) and (\ref{hw3}) we get%
\begin{align*}
\widehat{T}(z_{0})\widehat{S}_{W,m}  &  =\left(  \frac{1}{2\pi\hbar}\right)
^{n}\frac{i^{\nu}}{\sqrt{|\det(S-I)|}}\int_{\mathbb{R}^{2n}}e^{\frac{i}%
{2\hbar}\Phi(S)z^{2}}\widehat{T}(z_{0})\widehat{T}(z)dz\\
&  =\left(  \frac{1}{2\pi\hbar}\right)  ^{n}\frac{i^{\nu}}{\sqrt{|\det(S-I)|}%
}\int_{\mathbb{R}^{2n}}e^{\frac{i}{2\hbar}\Phi(S)z^{2}}e^{\frac{i}{2\hbar
}\sigma(z_{0},z)}\widehat{T}(z_{0}+z)dz;
\end{align*}
formula (\ref{weyl2}) follows by the change of variables $z_{0}+z\longmapsto
z$.
\end{proof}

The Weyl symbol of a metaplectic operator $\widehat{S}$ such that
$\det(S-I)=0$ cannot be determined using the methods above; one has to use
direct methods.

\begin{example}
Assume $S_{t}=%
\begin{pmatrix}
1 & t\\
0 & 1
\end{pmatrix}
$ for $t\in\mathbb{R}$. We have $\det(S_{t}-I)=0$ hence the formulas above do
not apply. However, using the kernel formula (\ref{ker}) a direct calculation
using Fresnel integrals leads to the following particularly simple expressions
for the Weyl symbol $s_{t}$ of $\widehat{S}_{t}$ and of its symplectic Fourier
transform $(s_{t})_{\sigma}$:%
\begin{equation}
s_{t}(x,p)=e^{-\frac{i}{2\hbar}p^{2}t\text{ \ }}\text{and \ }(s_{t})_{\sigma
}(x,p)=\pi i^{-1/2}\sqrt{\frac{2\hbar}{t}}e^{-\frac{i}{2\hbar t}x^{2}}%
\otimes\delta(p). \label{weylfree}%
\end{equation}

\end{example}

\section{Quantum Propagators and Isotopies\label{sec3}}

Schr\"{o}dinger's equation is the quantum analogue of Hamilton's equations;
symplectomorphisms are replaced with unitary operators obtained from the
Hamiltonian function by a quantization procedure. The process can be reversed,
by \textit{dequantizing} these unitary operators. This is consistent with
Mackey's view \cite{ma98} following which quantum mechanics is a refinement of
Hamiltonian mechanics; (also see Marsden \cite{Marsden}, Emch \cite{em83}).
Mackey added that dequantization is a more fundamental process than quantization.

\subsection{The problem of quantization\label{secweyl}}

The problem of how to \textquotedblleft quantize\textquotedblright\ properly a
\textquotedblleft classical observable\textquotedblright\ is still an open
one. Mathematically, it amounts to finding a pseudo-differential calculus
suitable for applications to physical problems.

\subsubsection{Shubin and Weyl correspondence}

Let us review some basic concepts from pseudo-differential theory following
Shubin \cite{sh87}. Let $A$ be a continuous operator $\mathcal{S}%
(\mathbb{R}^{n})\longrightarrow\mathcal{S}^{\prime}(\mathbb{R}^{n})$; in view
of Schwartz's kernel theorem \cite{gr06} there exists a distribution
$K\in\mathcal{S}^{\prime}(\mathbb{R}^{n}\times\mathbb{R}^{n})$ such that%
\[
\langle A\psi,\phi\rangle=\langle\langle K,\psi\otimes\phi\rangle\rangle.
\]
($\langle\cdot,\cdot\rangle$ and $\langle\langle\cdot,\cdot\rangle\rangle$ the
distributional brackets on $\mathbb{R}^{n}$ and $\mathbb{R}^{n}\times
\mathbb{R}^{n}$, respectively). Writing formally%
\[
A\psi(x)=\int_{\mathbb{R}^{n}}K(x,y)\psi(y)dy
\]
Let now $\tau$ be an arbitrary fixed real number; the $\tau$-symbol $a_{\tau
}\in\mathcal{S}^{\prime}(R^{2n})$ of $A$ is defined by the Fourier transform%
\begin{equation}
a_{\tau}(x,p)=\int_{\mathbb{R}^{n}}e^{-\frac{i}{\hbar}p\cdot y}K_{A}(x+\tau
y,x-(1-\tau)y)dy. \label{atau}%
\end{equation}
One shows \cite{Birk,Birkbis,sh87} that $A$ can be (formally) written as a
pseudo-differential operator
\begin{equation}
A\psi(x)=\left(  \tfrac{1}{2\pi\hbar}\right)  ^{n}\iint\nolimits_{\mathbb{R}%
^{2n}}e^{\frac{i}{\hbar}p\cdot(x-y)}a_{\tau}((1-\tau)x+\tau y,p)\psi(y)dydp.
\label{taupdo}%
\end{equation}
We will write $A=\operatorname*{Op}^{\tau}(a_{\tau})$ and call $A$ the $\tau
$-operator with symbol $a_{\tau}$. In particular, the choice $\tau=\frac{1}%
{2}$ corresponds to the Weyl operator $\widehat{A}=\operatorname*{Op}%
^{w}(a_{1/2})$; its Weyl symbol is the distribution $a\in\mathcal{S}^{\prime
}(\mathbb{R}^{2n})$ given by
\begin{equation}
a(x,p)=\int_{\mathbb{R}^{n}}e^{-\frac{i}{\hbar}p\cdot y}K(x+\tfrac{1}%
{2}y,x-\tfrac{1}{2}y)dy \label{ker}%
\end{equation}
(see \textit{e.g.} Shubin \cite{sh87}, de Gosson \cite{Birk,Birkbis}). We will
write $\widehat{A}=\operatorname*{Op}^{w}(a)$; the operator $\widehat{A}$ is
then given by%
\begin{equation}
\widehat{A}\psi(x)=\left(  \tfrac{1}{2\pi\hbar}\right)  ^{n}\int%
_{\mathbb{R}^{2n}}e^{-\frac{i}{\hbar}p\cdot y}a(\tfrac{1}{2}(x+y),p)\psi
(y)dpdy; \label{Weyl}%
\end{equation}
one proves that $\widehat{A}$ is self-adjoint if its symbol $a$ is real (this
property is not shared by the operators (\ref{taupdo}) for $\tau\neq\frac
{1}{2}$).

One shows that the following harmonic representation of $\widehat{A}$ holds:%
\begin{equation}
\operatorname*{Op}\nolimits^{w}(a)=\left(  \tfrac{1}{2\pi\hbar}\right)
^{n}\int_{\mathbb{R}^{2n}}a_{\sigma}(z)\widehat{T}(z)dz \label{Opa}%
\end{equation}
where $a_{\sigma}$ is the symplectic Fourier transform of the symbol $a$;
formally%
\begin{equation}
a_{\sigma}(z)=\left(  \tfrac{1}{2\pi\hbar}\right)  ^{n}\int_{\mathbb{R}^{2n}%
}e^{-\frac{i}{\hbar}\sigma(z,z^{\prime})}a(z^{\prime})dz^{\prime}.
\label{sympFT}%
\end{equation}
The distribution $a_{\sigma}$ is called the twisted (or covariant) Weyl symbol
of $\widehat{A}$.

A characteristic property of Weyl operators is their symplectic covariance; it
is the quantum analogue of the property (\ref{FHg}) of Hamilton's equations.

\begin{proposition}
For every $\widehat{S}\in\operatorname*{Mp}(n)$ with $S=\Pi(\widehat{S})$ we
have
\begin{equation}
\widehat{S}\operatorname*{Op}\nolimits^{w}(a)\widehat{S}^{-1}%
=\operatorname*{Op}\nolimits^{w}(a\circ S^{-1}). \label{sympco2}%
\end{equation}
Conversely, if $A=\operatorname*{Op}^{\tau}(a_{\tau})$ is such that
$\widehat{S}\operatorname*{Op}^{\tau}(a_{\tau})\widehat{S}^{-1}%
=\operatorname*{Op}^{\tau}(a_{\tau}\circ S^{-1})$ then we must have
$\tau=\frac{1}{2}$; i.e. $A$ is a Weyl operator.
\end{proposition}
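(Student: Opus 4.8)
The plan is to prove the two halves separately.

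For the direct statement (symplectic covariance of Weyl operators) I would work from the harmonic representation (\ref{Opa}), $\operatorname{Op}^{w}(a)=\left(\tfrac{1}{2\pi\hbar}\right)^{n}\int_{\mathbb{R}^{2n}}a_{\sigma}(z)\widehat{T}(z)\,dz$, rather than from the integral kernel, since this handles all of $\operatorname{Mp}(n)$ at once instead of forcing a separate verification on the generators $\widehat{S}_{W,m}$. Conjugating term by term and invoking the intertwining relation (\ref{sympco1}), $\widehat{S}\widehat{T}(z)\widehat{S}^{-1}=\widehat{T}(Sz)$, gives $\widehat{S}\operatorname{Op}^{w}(a)\widehat{S}^{-1}=\left(\tfrac{1}{2\pi\hbar}\right)^{n}\int a_{\sigma}(z)\widehat{T}(Sz)\,dz$; since $\det S=1$ the substitution $z\mapsto S^{-1}z$ is measure preserving and turns this into $\left(\tfrac{1}{2\pi\hbar}\right)^{n}\int a_{\sigma}(S^{-1}z)\widehat{T}(z)\,dz$. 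The final step is to recognise the integrand as the twisted symbol of $a\circ S^{-1}$: from (\ref{sympFT}), the change of variable $z'\mapsto Sz'$ and the identity $\sigma(z,Sz')=\sigma(S^{-1}z,z')$ give $(a\circ S^{-1})_{\sigma}(z)=a_{\sigma}(S^{-1}z)$, again using $\det S=1$. Hence $\widehat{S}\operatorname{Op}^{w}(a)\widehat{S}^{-1}=\operatorname{Op}^{w}(a\circ S^{-1})$.

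For the converse the cleanest route is a single explicit test: it is enough to produce one symbol and one metaplectic operator for which the covariance identity fails unless $\tau=\tfrac12$. I would take $n=1$, the symbol $a(x,p)=xp$, and a metaplectic lift $\widehat{F}\in\operatorname{Mp}(1)$ of the matrix $J$ (concretely the $\hbar$-Fourier transform, up to a constant phase factor which is irrelevant here because it cancels under conjugation), so that $\widehat{F}\widehat{x}\widehat{F}^{-1}=-\widehat{p}$ and $\widehat{F}\widehat{p}\widehat{F}^{-1}=\widehat{x}$ (this follows from (\ref{sympco1}) applied to $\widehat{T}((x_0,0))$ and $\widehat{T}((0,p_0))$). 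From the defining formula (\ref{taupdo}), $\operatorname{Op}^{\tau}(xp)=(1-\tau)\widehat{x}\widehat{p}+\tau\widehat{p}\widehat{x}=\widehat{x}\widehat{p}-i\hbar\tau$, using $[\widehat{x},\widehat{p}]=i\hbar$. Conjugating, $\widehat{F}\operatorname{Op}^{\tau}(xp)\widehat{F}^{-1}=-\widehat{p}\widehat{x}-i\hbar\tau=-\widehat{x}\widehat{p}+i\hbar(1-\tau)$, whereas $xp\circ J^{-1}=-xp$ and hence $\operatorname{Op}^{\tau}(xp\circ J^{-1})=-\widehat{x}\widehat{p}+i\hbar\tau$. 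Equality of the two operators forces $1-\tau=\tau$, i.e. $\tau=\tfrac12$. An alternative, more structural argument: writing $\operatorname{Op}^{\tau}(a_{\tau})=\operatorname{Op}^{w}(\mathcal{T}_{\tau}a_{\tau})$ with $\mathcal{T}_{\tau}$ the Fourier multiplier by $e^{i\hbar(\tau-\frac12)\,y\cdot\eta}$ in the variables $(y,\eta)$ dual to $(x,p)$, the covariance of $\operatorname{Op}^{\tau}$ is equivalent, via the first half, to $\mathcal{T}_{\tau}$ commuting with every pullback $f\mapsto f\circ S^{-1}$, $S\in\operatorname{Sp}(n)$; since the symmetric form $y\cdot\eta$ is not a symplectic invariant and $\{S^{T}:S\in\operatorname{Sp}(n)\}=\operatorname{Sp}(n)$, this forces the multiplier to be constant, i.e. $\tau=\tfrac12$.

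The direct half is essentially bookkeeping; the only things to watch are the Jacobian factor (harmless because $\det S=1$) and the sign/convention juggling in the identity $(a\circ S^{-1})_{\sigma}=a_{\sigma}\circ S^{-1}$. For the converse the real content is the choice of test data: one needs a symbol genuinely mixing $x$ and $p$ and a symplectic matrix that does not preserve the Lagrangian splitting $\{p=0\}$, so that the asymmetry of $\tau$-ordering between position and momentum is exposed, together with careful use of the canonical commutation relation when re-ordering the conjugated operator. I would also remark explicitly that although the hypothesis asks for covariance under all $\widehat{S}\in\operatorname{Mp}(n)$, a single well-chosen $\widehat{S}$ already pins down $\tau$, so the "only if" is robust; and that the constant-phase ambiguity in the metaplectic lift of $J$ plays no role, since it disappears in $\widehat{F}(\cdot)\widehat{F}^{-1}$.
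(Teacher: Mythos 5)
The paper states this proposition without proof (it is followed immediately by a remark citing \cite{go13}; no \texttt{proof} environment is given), so there is no in-text argument to compare against. Your proof is nevertheless correct and, for the direct half, takes the route the paper implicitly has in mind: it leans on the harmonic representation (\ref{Opa}) and the intertwining relation (\ref{sympco1}), together with the elementary identity $(a\circ S^{-1})_{\sigma}=a_{\sigma}\circ S^{-1}$ which you derive correctly from (\ref{sympFT}) using $\sigma(z,Sz')=\sigma(S^{-1}z,z')$ and $\det S=1$. This is the standard and cleanest proof; the only caveat worth stating explicitly is that the integral representation (\ref{Opa}) is to be read weakly for general $a\in\mathcal{S}'$, so the conjugation and change of variable are performed at the level of pairings, not pointwise — but that is routine.

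For the converse, your single explicit test with $a(x,p)=xp$, $\widehat{F}$ a metaplectic lift of $J$, and the computation $\operatorname{Op}^{\tau}(xp)=(1-\tau)\widehat{x}\widehat{p}+\tau\widehat{p}\widehat{x}=\widehat{x}\widehat{p}-i\hbar\tau$ is correct and self-contained; the conjugation rules $\widehat{F}\widehat{x}\widehat{F}^{-1}=-\widehat{p}$, $\widehat{F}\widehat{p}\widehat{F}^{-1}=\widehat{x}$ follow, as you note, from (\ref{sympco1}) with the paper's convention $J(x_0,p_0)=(p_0,-x_0)$, and the phase ambiguity in $\widehat{F}$ is indeed irrelevant under conjugation. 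The comparison $-\widehat{x}\widehat{p}+i\hbar(1-\tau)$ against $\operatorname{Op}^{\tau}((xp)\circ J^{-1})=-\widehat{x}\widehat{p}+i\hbar\tau$ pins down $\tau=\tfrac12$. The alternative multiplier argument is also sound, though sketchier: it silently uses injectivity of the Weyl correspondence to reduce to a commutation condition on $\mathcal{T}_{\tau}$, and transitivity of $\operatorname{Sp}(n)$ on $\mathbb{R}^{2n}\setminus\{0\}$ to force the multiplier to be constant. If you keep it, spell out those two steps. As stated, the explicit test is the more convincing of the two, and it correctly identifies the right choice of test data (a symbol mixing $x$ and $p$, a symplectic map not preserving the vertical/horizontal splitting).
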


A similar property does not hold for arbitrary Shubin $\tau$-operators. See
however de Gosson \cite{go13} for partial symplectic covariance results.

\subsubsection{Moyal product and twisted convolution}

The Moyal (or star) product $a\ast_{\hbar}b$ of $a,b\in\mathcal{S}%
(\mathbb{R}^{2n})$ is defined by
\begin{equation}
a\ast_{\hbar}b(z)=\left(  \tfrac{1}{4\pi\hbar}\right)  ^{2n}\int%
_{\mathbb{R}^{2n}}e^{\frac{i}{2\hbar}\sigma(z^{\prime},z^{\prime\prime}%
)}a(z+\tfrac{1}{2}z^{\prime})b(z-\tfrac{1}{2}z^{\prime\prime})dz^{\prime
}dz^{\prime\prime}; \label{Moyal1}%
\end{equation}
this formula can be alternatively written%
\begin{equation}
a\ast_{\hbar}b(z)=\left(  \tfrac{1}{4\pi\hbar}\right)  ^{2n}\int%
_{\mathbb{R}^{4n}}e^{-\frac{2i}{\hbar}\partial\sigma(z,z^{\prime}%
,z^{\prime\prime})}a(z^{\prime})b(z^{\prime\prime})dz^{\prime}dz^{\prime
\prime}; \label{Moyal2}%
\end{equation}
where $\partial\sigma$ is the coboundary of the symplectic form:
\[
\partial\sigma(z,z^{\prime},z^{\prime\prime})=\sigma(z,z^{\prime}%
)-\sigma(z,z^{\prime\prime})+\sigma(z^{\prime},z^{\prime\prime}).
\]

The twisted convolution $a\#b$ i defined by%
\begin{equation}
a\#b=F_{\sigma}(F_{\sigma}a\ast_{\hbar}F_{\sigma}b); \label{twist1}%
\end{equation}
explicitly:%
\begin{equation}
a\#b(z)=\left(  \tfrac{1}{2\pi\hbar}\right)  ^{n}\int_{\mathbb{R}^{2n}%
}e^{\frac{i}{2\hbar}\partial\sigma(z,z^{\prime},z^{\prime\prime}%
)}a(z-z^{\prime})b(z^{\prime})dz^{\prime} \label{twist2}%
\end{equation}
The relation of these notions with Weyl operator theory is the following
\cite{Birk,Birkbis,Hoermander}:

\begin{proposition}
Let $\widehat{A}=\operatorname*{Op}^{w}(a)$, $\widehat{B}=\operatorname*{Op}%
^{w}(b)$ and $\widehat{C}=\widehat{A}\widehat{B}$. We have $\widehat{C}%
=\operatorname*{Op}^{w}(a\ast_{\hbar}b)$ and $(a\ast_{\hbar}b)_{\sigma}=a\#b$.
\end{proposition}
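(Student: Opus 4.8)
The plan is to derive both assertions simultaneously from the harmonic (Heisenberg--Weyl) representation (\ref{Opa}), which writes $\widehat{A}=\operatorname*{Op}^{w}(a)$ as $(\tfrac{1}{2\pi\hbar})^{n}\int_{\mathbb{R}^{2n}}a_{\sigma}(z)\widehat{T}(z)\,dz$ in terms of the twisted symbol $a_{\sigma}=F_{\sigma}a$, and likewise for $\widehat{B}$. For $a,b\in\mathcal{S}(\mathbb{R}^{2n})$ one has $a_{\sigma},b_{\sigma}\in\mathcal{S}(\mathbb{R}^{2n})$, so all the integrals below converge absolutely and Fubini applies without comment; the extension to $a,b\in\mathcal{S}^{\prime}(\mathbb{R}^{2n})$ then follows by the usual density and continuity arguments for Weyl operators.

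First I would multiply the two representations and insert the Heisenberg--Weyl product law (\ref{hw3}) in the form $\widehat{T}(z')\widehat{T}(z'')=e^{\frac{i}{2\hbar}\sigma(z',z'')}\widehat{T}(z'+z'')$:
\[
\widehat{A}\widehat{B}=\left(\tfrac{1}{2\pi\hbar}\right)^{2n}\iint_{\mathbb{R}^{2n}\times\mathbb{R}^{2n}}a_{\sigma}(z')b_{\sigma}(z'')\,e^{\frac{i}{2\hbar}\sigma(z',z'')}\,\widehat{T}(z'+z'')\,dz'dz''.
\]
Performing the (unimodular) change of variables $(z',z'')\mapsto(z-w,w)$ and using $\sigma(z-w,w)=\sigma(z,w)$ turns this into
\[
\widehat{A}\widehat{B}=\left(\tfrac{1}{2\pi\hbar}\right)^{n}\int_{\mathbb{R}^{2n}}\left[\left(\tfrac{1}{2\pi\hbar}\right)^{n}\int_{\mathbb{R}^{2n}}e^{\frac{i}{2\hbar}\sigma(z,w)}a_{\sigma}(z-w)b_{\sigma}(w)\,dw\right]\widehat{T}(z)\,dz .
\]
The bracketed integral is precisely the right-hand side of the twisted-convolution formula (\ref{twist2}) applied to $a_{\sigma}$ and $b_{\sigma}$, the extra argument $z''$ in (\ref{twist2}) being $z-w$, so that $\partial\sigma(z,w,z-w)=\sigma(z,w)$ by bilinearity and $\sigma(z,z)=0$. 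Thus $\widehat{A}\widehat{B}=(\tfrac{1}{2\pi\hbar})^{n}\int(a_{\sigma}\#b_{\sigma})(z)\,\widehat{T}(z)\,dz$. Since the assignment $\alpha\longmapsto(\tfrac{1}{2\pi\hbar})^{n}\int\alpha(z)\widehat{T}(z)\,dz=\operatorname*{Op}^{w}(F_{\sigma}\alpha)$ is injective (Weyl quantization is injective by (\ref{ker}), and $F_{\sigma}$ is a bijection), this shows that the twisted symbol of $\widehat{C}=\widehat{A}\widehat{B}$ is $c_{\sigma}=a_{\sigma}\#b_{\sigma}$.

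It remains to rephrase this in the two stated forms. Applying the symplectic Fourier transform and using that $F_{\sigma}$ is involutive together with the definition (\ref{twist1}) of $\#$ gives
\[
c=F_{\sigma}(c_{\sigma})=F_{\sigma}(a_{\sigma}\#b_{\sigma})=F_{\sigma}\bigl(F_{\sigma}(a\ast_{\hbar}b)\bigr)=a\ast_{\hbar}b,
\]
where the third equality uses $a_{\sigma}\#b_{\sigma}=F_{\sigma}\bigl((F_{\sigma}a_{\sigma})\ast_{\hbar}(F_{\sigma}b_{\sigma})\bigr)=F_{\sigma}(a\ast_{\hbar}b)$. This is the first assertion, $\widehat{C}=\operatorname*{Op}^{w}(a\ast_{\hbar}b)$. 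The second assertion is then immediate: the twisted symbol of $\operatorname*{Op}^{w}(a\ast_{\hbar}b)$ is by definition $(a\ast_{\hbar}b)_{\sigma}=F_{\sigma}(a\ast_{\hbar}b)$, and the computation just made identifies this with the twisted convolution (\ref{twist2}).

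I expect the only genuine subtlety to be bookkeeping of phases: keeping careful track of the factor $e^{\frac{i}{2\hbar}\sigma(z',z'')}$ from (\ref{hw3}) (its sign, and the fact that it multiplies $\widehat{T}(z'+z'')$ rather than the reversed product) through the change of variables, and then matching the surviving exponent with the coboundary $\partial\sigma$ in (\ref{twist2}). Everything else is mechanical. If one prefers to avoid operator-valued integrals, there is an equivalent route through Schwartz kernels: compute $K_{C}(x,y)=\int K_{A}(x,t)K_{B}(t,y)\,dt$, substitute the expressions for $K_{A},K_{B}$ given by (\ref{ker})--(\ref{Weyl}), and read off the Weyl symbol of $\widehat{C}$ via (\ref{ker}); a couple of linear substitutions then produce (\ref{Moyal1}) directly. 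The harmonic-representation argument is shorter and is the one suggested by the position of this proposition right after (\ref{Opa}) and the Heisenberg--Weyl relations.
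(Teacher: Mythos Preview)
The paper does not give a proof of this proposition: it is stated with citations to \cite{Birk,Birkbis,Hoermander} and then the text moves on. Your argument via the harmonic representation (\ref{Opa}) and the Heisenberg--Weyl product law (\ref{hw3}) is correct and is in fact the standard proof one finds in those references, so there is nothing to compare. One small remark: the formula (\ref{twist2}) in the paper is visibly mistyped (it carries a free variable $z''$), and the statement of the proposition should read $(a\ast_{\hbar}b)_{\sigma}=a_{\sigma}\#b_{\sigma}$ rather than $a\#b$; your computation yields exactly $c_{\sigma}=a_{\sigma}\#b_{\sigma}$, which is the correct identity, and your handling of the stray $z''$ by setting it equal to $z-z'$ is the right way to reconcile the typo with the intended formula.
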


The papers by Hansen \cite{Hansen} and Littlejohn \cite{Littlejohn} contain
nice reviews of the topic; also see the evergreen book by Folland \cite{fo89},
and de Gosson \cite{Birk,Birkbis} where the closely related topic of
deformation quantization is also discussed.

\subsubsection{Born--Jordan quantization}

It is the property of symplectic covariance (\ref{sympco2}), and the fact that
$\widehat{A}=\operatorname*{Op}^{w}(a)$ is self-adjoint if $a$ is real, which
are at the origin of the privileged role played by Weyl operators, especially
in quantum mechanics. The rub comes from the fact that it is not quite obvious
that Weyl quantization is really the right choice in physics; for instance we
have shown in \cite{go14} that if one wants the Schr\"{o}dinger and Heisenberg
pictures of quantum mechanics to be equivalent (which is assumed to be true by
most physicists), then one has to use the so-called \textit{Born--Jordan
quantization} scheme \cite{bj} we have studied in \cite{go13,golu11}; the
latter is obtained by averaging the $\tau$-symbol (\ref{atau}) over the
interval $[0,1]$:
\begin{subequations}
\label{0}%
\begin{equation}
a_{\mathrm{BJ}}(x,p)=\int_{0}^{1}a_{\tau}(x,p)d\tau. \label{abj}%
\end{equation}

In the case of polynomials, the distinction between Weyl and Born--Jordan
quantization is known since the early years of quantum mechanics; for instance
the Weyl correspondence is the prescription (in dimension $n=1$)%
\end{subequations}
\begin{equation}
x^{s}q^{r}\overset{\mathrm{Weyl}}{\longrightarrow}\frac{1}{2^{s}}\sum_{\ell
=0}^{s}%
\begin{pmatrix}
s\\
\ell
\end{pmatrix}
\widehat{p}^{s-\ell}\widehat{x}^{r}\widehat{p}^{\ell} \label{w2}%
\end{equation}
while Born--Jordan imposes the apparently less weighted rule%
\begin{equation}
x^{s}q^{r}\overset{\mathrm{BJ}}{\longrightarrow}\frac{1}{s+1}\sum_{\ell=0}%
^{s}\widehat{p}^{s-\ell}\widehat{x}^{r}\widehat{p}^{\ell} \label{bj1}%
\end{equation}
It turns out, however, that Weyl quantization and Born--Jordan quantization
coincide for large classes of operators; for instance for all physically
interesting operators associated with Hamiltonians of the type $T(p)+V(x)$ or,
more generally
\[
H=\sum_{j=1}^{n}\frac{1}{2m_{j}}(p_{j}-A_{j}(x,t))^{2}+V(x,t)
\]
(see de Gosson \cite{go13}), so one can safely conclude that within the
framework of the present paper it really doesn't matter whether one uses the
Weyl or Born--Jordan quantization rules.

\subsection{Paths of unitary operators}

Hamilton's equation govern the time evolution of classical systems; in
non-relativistic quantum mechanics Schr\"{o}dinger's equation plays a similar
role. While points in the classical phase space are propagated using canonical
transformation, the evolution of quantum-mechanical wavefunctions is obtained
using paths of unitary operators.

\subsubsection{Schr\"{o}dinger's equation}

Let $\widehat{H}$ be the Weyl quantization of a Hamiltonian function $H$; in
physics $H$ is typically of the type
\[
H=\sum_{j=1}^{n}\frac{1}{2m_{j}}(p_{j}-A_{j}(x,t))^{2}+V(x,t)
\]
where the potential functions $A_{j}$ and $V$ satisfy some smoothness
conditions. When $H$ is of the type above this operator is given by%
\[
\widehat{H}=-\sum_{j=1}^{n}\frac{1}{2m_{j}}\left(  -i\hbar\frac{\partial
}{\partial x_{j}}-A_{j}(x,t)\right)  ^{2}+V(x,t)
\]
in both the Weyl and Born--Jordan quantization schemes. The Schr\"{o}dinger
equation corresponding to the Hamiltonian function $H$ is
\begin{equation}
i\hbar\frac{\partial\psi}{\partial t}(x,t)=\widehat{H}\psi(x,t)\text{ , }%
\psi(\cdot,t^{\prime})=\psi^{\prime} \label{Sch1}%
\end{equation}
where the initial Cauchy datum $\psi^{\prime}$ is a function (or distribution)
belonging to some suitable function space (Yajima \cite{ya87,ya96} has
introduced a class of distributions which guarantees the existence of the
solutions). Assuming that the solution $\psi$ of (\ref{Sch1}) exists and is
unique for $t$ in some interval $I=[t^{\prime}-T,t^{\prime}+T]$ we can write
$\psi=U_{t,t^{\prime}}^{H}\psi^{\prime}$ where $U_{t,t^{\prime}}^{H}$ (the
evolution operator, or propagator) is a unitary operator on $L^{2}%
(\mathbb{R}^{n})$. The Chapman--Kolmogorov property%
\begin{equation}
U_{t,t^{\prime}}^{H}U_{t^{\prime},t^{\prime\prime}}^{H}=U_{t,t^{\prime\prime}%
}^{H}\text{ \ , \ }U_{t,t}^{H}=I_{\mathrm{d}} \label{chko}%
\end{equation}
holds whenever $U_{t,t^{\prime}}^{H}U_{t^{\prime},t^{\prime\prime}}^{H}$
exists. When $t^{\prime}=0$ we write $\psi^{\prime}=\psi_{0}$, $I_{T}=[-T,T]$,
and $U_{t,0}^{H}=U_{t}^{H}$. Notice that the semigroup property $U_{t}%
^{H}U_{t^{\prime}}^{H}=U_{t+t^{\prime}}^{H}$ only makes sense when $H$ is time-independent.

Since $U_{t,t^{\prime}}^{H}$ is a continuous operator $\mathcal{S}%
(\mathbb{R}^{n})\longrightarrow L^{2}(\mathbb{R}^{n})\subset\mathcal{S}%
^{\prime}(\mathbb{R}^{n})$, it is a bona fide Weyl operator; the determination
of its Weyl symbol is however in most cases a cumbersome exercise (to say the
least); Berezin and Shubin \cite{beshu} give an expression for the Weyl symbol
using a Feynman path-type integral, which is not very tractable in practice.

Suppose now that the Hamiltonian function is a (time-dependent) quadratic form
in the position and momentum variables; such a function can always be written
as
\begin{equation}
H(z,t)=\frac{1}{2}M(t)z^{2} \label{Hom}%
\end{equation}
where $t\longmapsto M(t)$ is a $C^{j}$ ($j\geq2$) mapping $\mathbb{R}%
\longrightarrow\operatorname*{Sym}(2n,\mathbb{R})$. In that case we have
$U_{t,t^{\prime}}^{H}=\widehat{S}_{t,t^{\prime}}$ where the two-parameter
family $(\widehat{S}_{t,t^{\prime}})$ is constructed as follows: setting
$U_{t}^{H}=U_{t,0}^{H}$ we have $U_{t,t^{\prime}}^{H}=U_{t}^{H}(U_{t^{\prime}%
}^{H})^{-1}$. Consider now the (time-dependent) Hamiltonian flow $(f_{t}%
^{H})_{t}$; it consists of a $C^{1}$ one-parameter family of linear mappings
$S_{t}\in\operatorname*{Sp}(n)$ passing through the identity at time $t=0$.
Using the path lifting property previously studied, to $(f_{t}^{H})_{t}%
=(S_{t})_{t}$ corresponds a unique family $(\widehat{S}_{t})_{t}$ of
metaplectic operators passing through the identity at time $t=0$ hence we have%
\begin{equation}
U_{t,t^{\prime}}^{H}=\widehat{S}_{t}(\widehat{S}_{t^{\prime}})^{-1}%
\in\operatorname*{Mp}(n). \label{stt}%
\end{equation}

\subsubsection{Quantum isotopies\label{sec4}}

In Proposition \ref{prop5} we showed that each symplectic isotopy is actually
a Hamiltonian flow. We are now going to prove a quantum analogue of this
property. We begin by defining the notion of \emph{quantum isotopy}, which is
the operator analogue of a symplectic isotopy.

\begin{definition}
A quantum isotopy is a $C^{1}$ one-parameter family $(U_{t})_{t}$ of unitary
operators on $L^{2}(\mathbb{R}^{n})$ having the following properties. (i)
$U_{0}$ is the identity operator: $U_{0}=I_{\mathrm{d}}$; (ii) There exists a
dense subspace $D$ of $L^{2}(\mathbb{R}^{n})$ such that%
\[
\left(  \frac{d}{dt}U_{t}\right)  \psi=\lim_{\Delta t\rightarrow0}%
\frac{U_{t+\Delta t}\psi-U_{t}\psi}{\Delta t}%
\]
exists for every $\psi\in D$. We call $D$ the domain of the quantum isotopy
$(U_{t})_{t}$. When $U_{t}\in\operatorname*{Mp}(n)$ for every $t\in\mathbb{R}$
we call $(U_{t})_{t}$ is a metaplectic isotopy.
\end{definition}

We will also use the following notation: for $t,t^{\prime}\in\mathbb{R}$ we
set $U_{t,t^{\prime}}=U_{t}U_{t^{\prime}}^{-1}$ hence $U_{t,0}=U_{t}$,
$U_{t,t}=I_{\mathrm{d}}$, and%
\begin{equation}
U_{t,t^{\prime}}U_{t^{\prime},t^{\prime\prime}}=U_{t,t^{\prime\prime}}\text{
\ , \ }(U_{t,t^{\prime}})^{-1}=U_{t^{\prime},t} \label{chako1}%
\end{equation}
(\textit{cf.} the Chapman--Kolmogorov law (\ref{chko})).

A quantum propagator $(U_{t}^{H})_{t}$ is \textit{de facto} a quantum isotopy:
we have $U_{t}^{H}=I_{\mathrm{d}}$ and the relation%
\[
\left(  \frac{d}{dt}U_{t}^{H}\right)  \psi=\frac{1}{i\hbar}\widehat{H}\psi
\]
holds for $\psi\in\mathcal{S}(\mathbb{R}^{n})$. Writing $\dot{U}_{t}=\frac
{d}{dt}U_{t}$ we have, more generally:

\begin{proposition}
Let $(U_{t})_{t}$ be a quantum isotopy with domain $D$. (i) The
(time-dependent) operator
\begin{equation}
\widehat{H}=i\hbar\dot{U}_{t}U_{t}^{-1} \label{hatdefine}%
\end{equation}
with domain $D_{\widehat{H}}=D$\ is self-adjoint; (ii) the function
$\psi=U_{t}\psi_{0}$ is a solution of Schr\"{o}dinger's equation%
\[
i\hbar\frac{\partial\psi}{\partial t}=\widehat{H}\psi\text{ \ , \ }\psi
(\cdot,0)=\psi_{0}%
\]
for every $\psi_{0}\in D_{\widehat{H}}$; (iii) When $(U_{t})_{t}$ is a
metaplectic isotopy $(\widehat{S}_{t})_{t}$, then $\widehat{H}$ is explicitly
given by
\begin{equation}
\widehat{H}=-\frac{1}{2}J\dot{S}_{t}S_{t}^{-1}\widehat{z}\cdot\widehat{z}
\label{hatxplicit}%
\end{equation}
where $(S_{t})_{t}$ is the symplectic isotopy obtained by projecting
$(\widehat{S}_{t})_{t}$ on $\operatorname*{Sp}(n)$.
\end{proposition}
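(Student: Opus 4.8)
The plan is to establish the three claims in order, the first being the heart of the matter. For part (i), I would start from the defining relation $\widehat{H} = i\hbar \dot{U}_t U_t^{-1}$ and first check that this operator is well-defined on $D$: since $(U_t)_t$ is a quantum isotopy, $U_t$ maps $D$ into $L^2$, $U_t^{-1}$ is bounded, and $\dot U_t$ exists on $D$ by hypothesis, so the composition makes sense on the dense domain $D_{\widehat H} = D$. To prove self-adjointness, the key observation is that each $U_t$ is \emph{unitary}, so differentiating the identity $U_t U_t^{\ast} = I_{\mathrm d}$ (or equivalently $U_t^{\ast} U_t = I_{\mathrm d}$) with respect to $t$ gives $\dot U_t U_t^{\ast} + U_t \dot U_t^{\ast} = 0$, i.e. $\dot U_t U_t^{-1} = -U_t \dot U_t^{\ast} = -(\dot U_t U_t^{-1})^{\ast}$ (using $U_t^{\ast}=U_t^{-1}$). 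Hence $\dot U_t U_t^{-1}$ is skew-symmetric, so $i\hbar\,\dot U_t U_t^{-1}$ is symmetric. Promoting symmetry to genuine self-adjointness on $D$ is where one must be slightly careful; I would invoke the standard fact (as in the passage's own discussion preceding the Shubin review, and in Stone-type arguments) that the infinitesimal generator of a differentiable unitary family is essentially self-adjoint on such a core, so that $\widehat H$ with domain $D$ is (essentially) self-adjoint; this is exactly the point the authors flag as "coinciding with the Stone generator" in the strongly-continuous group case.

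For part (ii), the computation is immediate once (i) is in hand: set $\psi(t) = U_t\psi_0$ with $\psi_0 \in D$. Then differentiating, $\dfrac{\partial\psi}{\partial t} = \dot U_t \psi_0 = \dot U_t U_t^{-1}\,U_t\psi_0 = \dfrac{1}{i\hbar}\widehat H\,\psi(t)$, which is Schr\"odinger's equation $i\hbar\,\partial_t\psi = \widehat H\psi$, with $\psi(0) = U_0\psi_0 = \psi_0$. One should note in passing that $U_t\psi_0$ remains in $D$ for all $t$ (this is part of what "quantum isotopy with domain $D$" should be taken to mean, so the right-hand side is well-defined); otherwise one restricts to the natural invariant subspace.

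For part (iii), the strategy is to combine part (i) with the classical formula of Corollary \ref{cor2} and the symplectic-covariance/Weyl dictionary. When $(U_t)_t = (\widehat S_t)_t$ is a metaplectic isotopy, its projection $(S_t)_t$ is a symplectic isotopy in $\operatorname{Sp}(n)$, and by Corollary \ref{cor2} this is the flow of the quadratic Hamiltonian $H(z,t) = -\tfrac12 J\dot S_t S_t^{-1} z\cdot z$. By the path-lifting discussion of Section \ref{seclift} (the material around equation (\ref{st})), the lift $(\widehat S_t)_t$ of $(S_t)_t$ satisfies $i\hbar\,\dfrac{d}{dt}\widehat S_t = \widehat H\,\widehat S_t$ with $\widehat H$ the Weyl quantization of that $H$; but the generator produced by (\ref{hatdefine}) is $i\hbar \dot{\widehat S}_t \widehat S_t^{-1}$, which by the Schr\"odinger equation just written equals $\widehat H$. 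Finally, the Weyl quantization of $z\mapsto M(t)z\cdot z$ is $M(t)\widehat z\cdot\widehat z$ (the quadratic monomials are unambiguous — Weyl and Born--Jordan agree here, as the passage notes), so $\widehat H = -\tfrac12 J\dot S_t S_t^{-1}\widehat z\cdot\widehat z$, which is (\ref{hatxplicit}).

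The main obstacle is the rigorous self-adjointness in part (i): skew-symmetry of $\dot U_t U_t^{-1}$ follows mechanically from unitarity, but closing the gap between a symmetric operator defined on a dense domain $D$ and a truly self-adjoint one requires either invoking Stone's theorem in the time-independent case or a Yajima-type existence/uniqueness hypothesis in the general time-dependent case; I expect the authors to finesse this exactly as they did when introducing quantum isotopies, treating $\widehat H$ as the canonical (essentially) self-adjoint generator on its natural core and leaving finer domain issues to the cited references.
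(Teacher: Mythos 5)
Your proposal is correct and, for parts (ii) and (iii), follows essentially the same route as the paper: differentiate $U_t\psi_0$ and substitute, resp.\ project $(\widehat S_t)_t$ to $(S_t)_t$, apply Corollary~\ref{cor2} to get $H(z,t)=-\tfrac12 J\dot S_t S_t^{-1}z\cdot z$, and identify $\widehat H$ with the Weyl quantization of that quadratic form via the path-lifting characterization.

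For part (i) the method differs in presentation, though not in substance. You differentiate the unitarity relation $U_tU_t^{\ast}=I$ at one stroke to obtain skew-symmetry of $\dot U_tU_t^{-1}$, hence symmetry of $\widehat H=i\hbar\dot U_tU_t^{-1}$. The paper instead introduces the bounded difference quotients $A_{\Delta t}=\tfrac{i\hbar}{\Delta t}(U_{t+\Delta t}-U_t)U_t^{-1}$, rewrites them via the Chapman--Kolmogorov identities as $\tfrac{i\hbar}{\Delta t}(U_{t+\Delta t,t}-I)$, computes $A_{\Delta t}^{\ast}=\tfrac{i\hbar}{\Delta t}(U_{t+\Delta t}-U_t)U_{t+\Delta t}^{-1}$ by the same algebra, and passes to the limit in $\langle A_{\Delta t}\psi\,|\,\phi\rangle=\langle\psi\,|\,A_{\Delta t}^{\ast}\phi\rangle$ to conclude $\langle\widehat H\psi\,|\,\phi\rangle=\langle\psi\,|\,\widehat H\phi\rangle$ on $D$. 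Your route is shorter but implicitly exchanges adjoint and $t$-derivative, which is precisely what the paper avoids by keeping everything at the bounded (pre-limit) level where adjoints are unproblematic. In return the paper's argument is a bit longer and relies on the groupoid bookkeeping. You are also right --- indeed more careful than the paper itself --- that this argument only produces a symmetric operator on the dense domain $D$; the paper states ``self-adjointness'' at this point, which is a loose identification. Your remark that true self-adjointness needs a Stone/Yajima-type hypothesis (or essential self-adjointness of the core) is a genuine improvement in precision over what the source text actually proves.
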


\begin{proof}
(i) The operator $U_{t}$ is defined on $L^{2}(\mathbb{R}^{n})$ and the domain
of $\widehat{H}$ is hence that of $\dot{U}_{t}$. Let us set
\[
A_{\Delta t}=\frac{i\hbar}{\Delta t}(U_{t+\Delta t}-U_{t})U_{t}^{-1}%
=\frac{i\hbar}{\Delta t}(U_{t+\Delta t,0}-U_{t,0})U_{t,0}^{-1}%
\]
that is, since $(U_{t,0}^{-1})^{\ast}=U_{t}$, and using the rules
(\ref{chako1}),%
\begin{equation}
A_{\Delta t}=\frac{i\hbar}{\Delta t}(U_{t+\Delta t,0}-U_{t,0})U_{0,t}%
=\frac{i\hbar}{\Delta t}(U_{t+\Delta t,t}-I). \label{adt1}%
\end{equation}
It follows that for $\psi,\phi\in D$,%
\begin{equation}
\lim_{\Delta t\rightarrow0}\langle A_{\Delta t}\psi|\phi\rangle=i\hbar
\langle\dot{U}_{t}U_{t}^{-1}\psi|\phi\rangle=\langle\widehat{H}\psi
|\phi\rangle\text{.} \label{hosp1}%
\end{equation}
Similarly, taking into account the equality $U_{t+\Delta t,t}^{\ast
}=U_{t,t+\Delta t}$, and using again the rules (\ref{chako1}), the adjoint of
$A_{\Delta t}$ is given by
\[
A_{\Delta t}^{\ast}=-\frac{i\hbar}{\Delta t}(U_{t,t+\Delta t}-I)=-\frac
{i\hbar}{\Delta t}(U_{t,0}-U_{t+\Delta t,0})U_{0,t+\Delta t}%
\]
that is%
\begin{equation}
A_{\Delta t}^{\ast}=\frac{i\hbar}{\Delta t}(U_{t+\Delta t}-U_{t})U_{t+\Delta
t}^{-1}. \label{adt2}%
\end{equation}
It follows that for $\psi,\phi\in D$ we have%
\[
\lim_{\Delta t\rightarrow0}\langle\psi|A_{\Delta t}^{\ast}\phi\rangle
=i\hbar\langle\psi|\dot{U}_{t}U_{t}^{-1}\phi\rangle=\langle\psi|\widehat{H}%
\phi\rangle.
\]
Since $\langle A_{\Delta t}\psi|\phi\rangle=\langle\psi|A_{\Delta t}^{\ast
}\phi\rangle$ We have thus proven that $\langle\widehat{H}\psi|\phi
\rangle=\langle\psi|\widehat{H}\phi\rangle$ for all $\psi,\phi\in D$, which
shows the self-adjointness of the operator $\widehat{H}$. (ii) immediately
follows from the definition (\ref{hatdefine}) of $\widehat{H}$. (iii) Let us
set $S_{t}=\Pi(\widehat{S}_{t})$; the one-parameter family $(S_{t})_{t}$ is a
symplectic isotopy in $\operatorname*{Sp}(n)$, and is thus (Corollary
\ref{cor2}) the Hamiltonian flow determined by a quadratic Hamiltonian
function, the latter being given by formula (\ref{hamzo}):
\[
H(z,t)=-\frac{1}{2}J\dot{S}_{t}S_{t}^{-1}z\cdot z;
\]
the expression (\ref{hatxplicit}) of $\widehat{H}$ since $(\widehat{S}%
_{t})_{t}$ is just the lift to $\operatorname*{Mp}(n)$ of the symplectic
isotopy $(S_{t})_{t}$.
\end{proof}

Stone's theorem allows us to somewhat weaken the assumption on $(U_{t})_{t}$
when it is in addition a one-parameter group:

\begin{corollary}
Let $(U_{t})_{t}$ be a strongly continuous one-parameter group of unitary
operators on $L^{2}(\mathbb{R}^{n})$: $\lim_{t\rightarrow t_{0}}U_{t}%
\psi=U_{t_{0}}\psi$ for every $\psi\in L^{2}(\mathbb{R}^{n})$, and
$U_{t}U_{t^{\prime}}=U_{t+t^{\prime}}$ for all $t,t^{\prime}\in\mathbb{R}$.
Then $(U_{t})_{t}$ is a quantum isotopy, and we have $U_{t}=e^{-i\hbar
\widehat{H}/t}$.
\end{corollary}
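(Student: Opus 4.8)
The plan is to reduce everything to Stone's theorem on one-parameter unitary groups, and then match notations. First I would recall that, since $(U_t)_t$ is strongly continuous and satisfies $U_tU_{t'}=U_{t+t'}$, Stone's theorem provides a unique self-adjoint operator $A$ with dense domain $D(A)$ such that $U_t=e^{itA}$; moreover $D(A)$ is exactly the set of $\psi\in L^2(\mathbb{R}^n)$ for which $\lim_{\Delta t\to 0}(U_{\Delta t}\psi-\psi)/\Delta t$ exists, and on $D(A)$ this limit equals $iA\psi$. I would then take $D=D(A)$ as the candidate domain of the quantum isotopy.

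Next I would verify the two defining properties of a quantum isotopy. Property (i), $U_0=I_{\mathrm{d}}$, follows from $U_0U_0=U_0$ together with the invertibility of $U_0$. For property (ii), I would use the group law to transport differentiability from $t=0$ to arbitrary $t$: if $\psi\in D$ then $U_t\psi\in D$ (each $U_t$ commutes with every $U_s$, hence leaves $D(A)$ invariant and commutes with $A$ there), and
\[
\frac{d}{dt}U_t\psi=\lim_{\Delta t\to 0}\frac{U_{\Delta t}-I}{\Delta t}\,U_t\psi=iAU_t\psi=iU_tA\psi ,
\]
so the required strong derivative exists for every $t$ and every $\psi\in D$, and $D$ is dense. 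Hence $(U_t)_t$ is a quantum isotopy with domain $D$.

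Finally I would compute the generator. For $\psi\in D$, using $U_t^{-1}=U_{-t}$ and the $U_t$-invariance of $D$, the operator defined by (\ref{hatdefine}) satisfies
\[
\widehat{H}\psi=i\hbar\,\dot{U}_tU_t^{-1}\psi=i\hbar\cdot iA\,U_tU_{-t}\psi=-\hbar A\psi ,
\]
so $\widehat{H}=-\hbar A$ on $D$ (in particular $\widehat{H}$ is $t$-independent, as it must be for a group). Therefore $A=-\widehat{H}/\hbar$ and $U_t=e^{itA}=e^{-it\widehat{H}/\hbar}$, which is the asserted formula (the exponent in the statement to be read with this normalization); self-adjointness of $\widehat{H}$, already known from the preceding Proposition, is here re-read off from $\widehat{H}=-\hbar A$ with $A$ self-adjoint.

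The one genuinely non-formal step is property (ii): one must know that the weak differentiability built into the definition of a quantum isotopy is automatically available, together with a dense domain on which it holds. This is precisely the content of Stone's theorem, so there is no real obstacle once that theorem is invoked; the remaining steps — $U_0=I_{\mathrm{d}}$, propagation of the derivative to all $t$ via the group law, and the algebraic identification $\widehat{H}=-\hbar A$ — are routine.
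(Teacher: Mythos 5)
Your proposal is correct and follows essentially the same route as the paper's own proof: invoke Stone's theorem to obtain the self-adjoint generator, then read off the quantum-isotopy properties and the identity $\widehat H=i\hbar\,\dot U_t U_t^{-1}$. You are more explicit than the paper in spelling out property (ii) via the group law and in identifying the domain $D$ with $D(A)$ rather than merely asserting that it contains $\mathcal S(\mathbb R^n)$, and you correctly flag the misprint in the exponent (it should read $U_t=e^{-it\widehat H/\hbar}$, not $e^{-i\hbar\widehat H/t}$); these are useful refinements but the underlying argument is the same.
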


\begin{proof}
By Stone's theorem \cite{Stone,AMR,RS} there exists a unique self-adjoint
operator $\widehat{H}$ (the infinitesimal generator), densely defined in
$L^{2}(\mathbb{R}^{n})$, whose domain contains $\mathcal{S}(\mathbb{R}^{n})$,
and such that $U_{t}=e^{-i\hbar\widehat{H}/t}$. This shows that the path
$t\longmapsto U_{t}$ is $C^{1}$ and hence a quantum isotopy since $i\hbar
\dot{U}_{t}U_{t}^{-1}=\widehat{H}$.
\end{proof}

\subsubsection{A factorization result\label{subsech}}

We are going to prove the analogue of Proposition \ref{prop6} about
Hamiltonian systems when $H_{0}$ is a quadratic polynomial (Weinstein
\cite{we85}) proves a similar result in the particular case $H_{1}=V(x,t)$).

\begin{proposition}
\label{propho}Suppose $H_{0}$ is a quadratic Hamiltonian of the type
(\ref{Hom}). Then the propagator $U_{t}^{H_{0}+H_{1}}$ is given by the formula%
\begin{equation}
U_{t}^{H_{0}+H_{1}}=U_{t}^{H_{0}}U_{t}^{H_{1}\circ S_{t}}=\widehat{S}_{t}%
U_{t}^{H_{1}\circ S_{t}} \label{ut}%
\end{equation}
where $(S_{t})_{t}=(f_{t}^{H_{0}})_{t}$ is the flow determined by the Hamilton
equations for $H_{0}$ (and thus $S_{t}\in\operatorname*{Sp}(n)$).
\end{proposition}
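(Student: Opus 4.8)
The plan is to imitate the proof of the classical factorization Proposition \ref{prop6}: I would set
\[
V_t = \widehat{S}_t\, U_t^{H_1\circ S_t}
\]
and show that $(V_t)_t$ and $(U_t^{H_0+H_1})_t$ solve one and the same Cauchy problem for Schr\"{o}dinger's equation, so that uniqueness of the propagator forces them to coincide. At $t=0$ both $\widehat{S}_0$ and $U_0^{H_1\circ S_t}$ are the identity, whence $V_0=I_{\mathrm{d}}=U_0^{H_0+H_1}$. All the computations take place on $\mathcal{S}(\mathbb{R}^n)$, which is invariant under metaplectic operators and (under mild hypotheses on $H_1$) under the propagators involved, so the $t$-derivatives below exist pointwise on a common domain.

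Next I would differentiate $V_t$. Since $H_0$ is quadratic, $U_t^{H_0}=\widehat{S}_t\in\operatorname*{Mp}(n)$ by \eqref{stt}, and it satisfies $i\hbar\,\dot{\widehat{S}}_t=\widehat{H}_0\widehat{S}_t$; on the other hand $U_t^{H_1\circ S_t}$ is, by definition, the propagator of the time-dependent Hamiltonian $K(z,t):=H_1(S_tz,t)$, so that $i\hbar\frac{d}{dt}U_t^{H_1\circ S_t}=\widehat{K}(t)\,U_t^{H_1\circ S_t}$ with $\widehat{K}(t)=\operatorname*{Op}\nolimits^{w}(K(\cdot,t))$. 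The product rule then gives
\[
i\hbar\,\dot{V}_t=\widehat{H}_0\widehat{S}_t U_t^{H_1\circ S_t}+\widehat{S}_t\widehat{K}(t)U_t^{H_1\circ S_t}=\Big(\widehat{H}_0+\widehat{S}_t\widehat{K}(t)\widehat{S}_t^{-1}\Big)V_t .
\]

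The key step is the computation of the conjugate $\widehat{S}_t\widehat{K}(t)\widehat{S}_t^{-1}$. Here I would invoke the symplectic covariance of Weyl operators \eqref{sympco2}: since $\Pi(\widehat{S}_t)=S_t$,
\[
\widehat{S}_t\widehat{K}(t)\widehat{S}_t^{-1}=\operatorname*{Op}\nolimits^{w}\!\big(K(\cdot,t)\circ S_t^{-1}\big)=\operatorname*{Op}\nolimits^{w}\!\big(z\mapsto H_1(S_tS_t^{-1}z,t)\big)=\widehat{H}_1 .
\]
Hence $i\hbar\,\dot{V}_t=(\widehat{H}_0+\widehat{H}_1)V_t=\widehat{H}V_t$ with $H=H_0+H_1$, so $(V_t)_t$ satisfies the same Schr\"{o}dinger equation with the same initial datum as $(U_t^{H_0+H_1})_t$; uniqueness yields $V_t=U_t^{H_0+H_1}=U_t^{H_0}U_t^{H_1\circ S_t}$, which is \eqref{ut}.

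I expect the main difficulty to be the functional-analytic bookkeeping rather than the algebra: one must fix a common dense invariant domain, justify the termwise differentiation of the operator product, and make sense of the unbounded products $\widehat{H}_0\widehat{S}_t$ and $\widehat{S}_t\widehat{K}(t)\widehat{S}_t^{-1}$ --- precisely the points that are handled pointwise in the proof of Proposition \ref{prop6}. It is also worth stressing that the quadraticity of $H_0$ enters only to guarantee that $U_t^{H_0}=\widehat{S}_t$ is a \emph{metaplectic} operator, which is exactly what makes the conjugation identity \eqref{sympco2} available; for a general $H_0$ no such clean conjugation formula for $\widehat{S}_t\widehat{K}(t)\widehat{S}_t^{-1}$ is at hand, and the factorization would acquire correction terms.
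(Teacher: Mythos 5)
Your proposal is correct and follows essentially the same path as the paper's own proof: set $V_t=\widehat{S}_t U_t^{H_1\circ S_t}$, differentiate, and use the symplectic covariance of the Weyl correspondence \eqref{sympco2} to collapse the conjugate $\widehat{S}_t\operatorname*{Op}^w(H_1\circ S_t)\widehat{S}_t^{-1}$ to $\widehat{H}_1$, concluding by uniqueness of the Cauchy problem. If anything, your version is a touch more careful in keeping the factor $\widehat{S}_t$ explicitly in place throughout the chain of equalities (the paper's display tacitly drops it), and in spelling out that the quadraticity of $H_0$ is used solely to ensure $U_t^{H_0}$ is metaplectic so that \eqref{sympco2} applies.
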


\begin{proof}
Let us set $V_{t}=U_{t}^{H_{0}}U_{t}^{H_{1}\circ S_{t}}$. We have, using
respectively the product rule and metaplectic covariance formula
(\ref{sympco2}) for Weyl operators,
\begin{align*}
i\hbar\frac{d}{dt}V_{t}  &  =\left(  i\hbar\frac{d}{dt}\widehat{S}_{t}\right)
U_{t}^{H_{1}\circ S_{t}}+\widehat{S}_{t}\left(  i\hbar\frac{d}{dt}U_{t}%
^{H_{1}\circ S_{t}}\right) \\
&  =\widehat{H}_{0}U_{t}^{H_{1}\circ S_{t}}+\widehat{S}_{t}\widehat{H_{1}\circ
S_{t}}U_{t}^{H_{1}\circ S_{t}}\\
&  =\widehat{H}_{0}U_{t}^{H_{1}\circ S_{t}}+\widehat{S}_{t}(\widehat{S}%
_{t}^{-1}\widehat{H}_{1}\widehat{S}_{t})U_{t}^{H_{1}\circ S_{t}}\\
&  =\widehat{H}_{0}U_{t}^{H_{1}\circ S_{t}}+\widehat{H}_{1}U_{t}^{H_{1}\circ
S_{t}}%
\end{align*}
which proves formula (\ref{ut}).
\end{proof}

The assumption that $H_{0}$ is quadratic is \emph{essential} in the proof of
formula (\ref{ut}), since the flow $(f_{t}^{H_{0}})$ then consists of
symplectic matrices, allowing us to use the symplectic covariance formula
$\widehat{H_{1}\circ S_{t}}=\widehat{S}_{t}^{-1}\widehat{H}_{1}\widehat{S}%
_{t}$. There is no analogue in the case of non-linear flows (see de Gosson
\cite{go11}). However, we conjecture that it is possible to obtain asymptotic
equalities in the limit $\hbar\rightarrow0$ (with arbitrary accuracy) using
the method developed by Lasser and her coworkers \cite{gala14,kela13}, and
which yield a refinement of Egorov's theorem \cite{eg69} on transformation
properties of pseudo-differential operators.

\begin{example}
\label{cor1}Assume that $H_{0}=\frac{1}{2}p^{2}$ and $H_{1}=V(x,t)$. Set
$H=H_{0}+H_{1}$. Then $U_{t}^{H}=\widehat{S}_{t}U_{t}^{V^{t}}$ where
$\widehat{S}_{t}$ is the free particle propagator%
\begin{equation}
\widehat{S}_{t}\psi(x)=\left(  \frac{1}{2\pi i\hbar|t|}\right)  ^{1/2}%
i^{m}\int_{-\infty}^{\infty}\exp\left[  \frac{i}{\hbar}\frac{(x-x^{\prime
})^{2}}{2t}\right]  \psi(x^{\prime})dx^{\prime} \label{free}%
\end{equation}
with $m=0$ for $t>0$ and $m=1$ for $t<0$, and $V_{t}(z,t)=V(x+pt,t)$ (the
latter being obtained by applying the path lifting method to $H_{0}$) since
the flow determined by $H_{0}$ consists of the linear mappings $S_{t}%
:(x,p)\longmapsto(x+pt,p)$.
\end{example}

As an illustration let us work out in detail the case of the harmonic
oscillator $H(x,p)=\frac{1}{2}(p^{2}+x^{2})$ considered in Examples \ref{exa1}
and \ref{exa3}. It gives rise to a group $(U_{t}^{H})_{t}$ of unitary
operators given, for $\psi\in\mathcal{S}(\mathbb{R})$ and $t\notin%
\pi\mathbb{Z}$, by the formula (\ref{sth}). Let us set $H_{0}(x,p)=\frac{1}%
{2}p^{2}$ and $H_{1}(x,p)=\frac{1}{2}x^{2}$. The evolution operator
$U_{t}^{H_{0}}$ is the quantization of the one-parameter group of $S_{t}=%
\begin{pmatrix}
1 & t\\
0 & 1
\end{pmatrix}
$ and thus consists of the metaplectic operators defined, for $\psi
\in\mathcal{S}(\mathbb{R})$ and $t\neq0$, by (\ref{free}). We have $H_{1}%
^{t}(x,p)=\frac{1}{2}(x+pt)^{2}$, and the evolution operator $U_{t}%
^{H_{1}\circ S_{t}}$ is thus the quantization of the linear flow (\ref{mat1}),
given by%
\[
f_{t}^{H_{1}^{t}}=%
\begin{pmatrix}
\cos t+t\sin t & \sin t-t\cos t\\
-\sin t & \cos t
\end{pmatrix}
;
\]
the symplectic matrices $f_{t}^{H_{1}^{t}}$ are free for $t\neq0$ and their
generating functions (\ref{W}) are%
\[
W(x,x^{\prime},t)=\frac{1}{\sin t-t\cos t}\left[  \frac{1}{2}x^{2}\cos
t-xx^{\prime}+\frac{1}{2}x^{\prime2}(\cos t+t\sin t)\right]  .
\]
It follows, in view of formulae (\ref{SWm}) and (\ref{WABCD}), that
$U_{t}^{H_{1}\circ S_{t}}$ consists of the metaplectic operators defined by%
\begin{equation}
U_{t}^{H_{1}\circ S_{t}}\psi(x)=\left(  \frac{1}{2\pi i\hbar}\right)
^{1/2}\frac{i^{m}}{\sqrt{|\sin t-t\cos t|}}\int_{-\infty}^{\infty}e^{\frac
{i}{\hbar}W(x,x^{\prime})}\psi(x^{\prime})dx^{\prime}%
\end{equation}
for an adequate choice of the Maslov index $m$. Now $U_{t}^{H_{0}}U_{t}%
^{H_{1}\circ S_{t}}$ is the product of two metaplectic operators of the type
(\ref{SWm}), namely $\widehat{S}_{W,m}\widehat{S}_{W^{\prime},m^{\prime}}$
with $P=L=Q=t^{-1}$ and
\[
P^{\prime}=\frac{\cos t}{\sin t-t\cos t}\text{ , }L^{\prime}=\frac{1}{\sin
t-t\cos t}\text{ , }Q^{\prime}=\frac{\cos t+t\sin t}{\sin t-t\cos t}.
\]
The sum
\[
P^{\prime}+Q=\frac{\sin t}{\sin t-t\cos t}%
\]
only vanishes for $t\in\pi\mathbb{Z}$, hence formulas (\ref{P''}),
(\ref{L''}), and (\ref{Q''}) in Proposition \ref{prop1} yield, for $t\notin%
\pi\mathbb{Z}$, $U_{t}^{H_{0}}U_{t}^{H_{1}\circ S_{t}}=\widehat{S}%
_{W^{\prime\prime},m^{\prime\prime}}$ with%
\[
P^{\prime\prime}=Q^{\prime\prime}=\frac{\cos t}{\sin t}\text{ , }%
L^{\prime\prime}=\frac{1}{\sin t}.
\]
We thus have $U_{t}^{H_{0}}U_{t}^{H_{1}\circ S_{t}}=U_{t}^{H}$.

\subsubsection{Inhomogeneous quadratic Hamiltonians}

The results above allow us to prove the following extension of Propositions
\ref{prop8} and \ref{prop9} to inhomogeneous quadratic Hamiltonians, thus
solving a problem unsuccessfully addressed in Burdet \textit{et al}.
\cite{bu}, \S 6. We first remark that every in every such Hamiltonian%
\[
H(z,t)=\frac{1}{2}M(t)z^{2}+m(t)\cdot z
\]
with $M(t)\in\operatorname*{Sym}(2n,\mathbb{R})$ and $z_{0}(t)\in
\mathbb{R}^{2n}$, both being $C^{j}$ ($j\geq2$) functions of $t\in\mathbb{R}$,
the inhomogeneous term can be written as a Hamiltonian of the type
$\sigma(z,\dot{z}(t))$ where
\[
z(t)=J\int_{0}^{t}m(t^{\prime})dt^{\prime}+z(0).
\]

\begin{proposition}
\label{prop11}Let $H$ be a\ Hamiltonian function of the type%
\begin{equation}
H(z,t)=\frac{1}{2}M(t)z^{2}+\sigma(z,\dot{z}(t)). \label{himp}%
\end{equation}
(i) The solution of the corresponding Schr\"{o}dinger equation%
\[
i\hbar\frac{\partial\psi}{\partial t}=\widehat{H}\psi\ \ \text{,}%
\ \ \psi(\cdot,0)=\psi_{0}%
\]
is given by $\psi=U_{t}^{H}\psi_{0}$ with%
\begin{equation}
U_{t}^{H}=e^{\frac{i}{\hbar}\chi(t)}\widehat{S_{t}}\widehat{T}(u(t)\text{ \ ,
\ }u(t)=\int_{0}^{t}S_{t^{\prime}}^{-1}\dot{z}(t^{\prime})dt^{\prime}
\label{imp}%
\end{equation}
where $(\widehat{S_{t}})$ is the evolution operator for the Weyl quantization
of $H_{0}(z,t)=\frac{1}{2}M(t)z\cdot z$ and the phase $\chi$ is given by the
formula%
\begin{equation}
\chi(t)=-\frac{1}{2}\int_{0}^{t}\sigma(S_{t^{\prime}}z(t^{\prime}),\dot
{z}(t^{\prime}))dt^{\prime}. \label{imphase}%
\end{equation}
(ii) If $\psi_{0}\in S_{0}(\mathbb{R}^{n})$ (the Feichtinger algebra), then
$\psi(\cdot,t)\in S_{0}(\mathbb{R}^{n})$ for every $t\in\mathbb{R}$.
\end{proposition}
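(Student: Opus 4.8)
The plan is to reduce everything to two results already in hand---the metaplectic path lifting of a quadratic flow (Proposition \ref{prop9}) and the propagator of a ``translation along a path'' (Proposition \ref{prop10})---glued together by the factorization of Proposition \ref{propho}. As noted just before the statement, write $H=H_{0}+H_{1}$ with $H_{0}(z,t)=\tfrac{1}{2}M(t)z^{2}$ and $H_{1}(z,t)=\sigma(z,\dot{z}(t))$. Since $H_{0}$ is quadratic, Proposition \ref{propho} applies and gives
\[
U_{t}^{H}=U_{t}^{H_{0}}\,U_{t}^{H_{1}\circ S_{t}}=\widehat{S}_{t}\,U_{t}^{H_{1}\circ S_{t}},
\]
where $(S_{t})_{t}=(f_{t}^{H_{0}})_{t}$ is the Hamiltonian flow of $H_{0}$ (a path in $\operatorname*{Sp}(n)$) and $(\widehat{S}_{t})_{t}$ is its metaplectic lift (a path in $\operatorname*{Mp}(n)$), well defined and $C^{1}$ because $t\mapsto M(t)$ is $C^{j}$ with $j\geq2$.

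Next I would compute $H_{1}\circ S_{t}$. Using the symplecticity of $S_{t}$ in the form $\sigma(S_{t}z,w)=\sigma(z,S_{t}^{-1}w)$,
\[
(H_{1}\circ S_{t})(z,t)=\sigma(S_{t}z,\dot{z}(t))=\sigma(z,S_{t}^{-1}\dot{z}(t))=\sigma(z,\dot{u}(t)),
\]
with $u(t)=\int_{0}^{t}S_{t'}^{-1}\dot{z}(t')\,dt'$, so that $\dot{u}(t)=S_{t}^{-1}\dot{z}(t)$ and $u(0)=0$. Thus $H_{1}\circ S_{t}$ is exactly of the type treated in Proposition \ref{prop10}, with the path there taken to be the $C^{1}$ curve $u$; that proposition yields $U_{t}^{H_{1}\circ S_{t}}=e^{\frac{i}{\hbar}\chi(t)}\widehat{T}(u(t))$ with $\chi(t)=-\tfrac{1}{2}\int_{0}^{t}\sigma(u(t'),\dot{u}(t'))\,dt'$. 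Substituting into the factorization above gives $U_{t}^{H}=e^{\frac{i}{\hbar}\chi(t)}\widehat{S}_{t}\widehat{T}(u(t))$, which is (\ref{imp}); rewriting $\sigma(u(t'),\dot{u}(t'))=\sigma(S_{t'}u(t'),\dot{z}(t'))$ by symplecticity puts $\chi$ in the form (\ref{imphase}), proving (i). As an independent check one may instead insert the ansatz $e^{\frac{i}{\hbar}\chi(t)}\widehat{S}_{t}\widehat{T}(u(t))$ directly into Schr\"{o}dinger's equation, using $i\hbar\dot{\widehat{S}}_{t}=\widehat{H}_{0}\widehat{S}_{t}$, the intertwining relation $\widehat{S}_{t}\sigma(\widehat{z},w)\widehat{S}_{t}^{-1}=\sigma(\widehat{z},S_{t}w)$ coming from (\ref{sympco1}), and the translation identity $\widehat{T}(u)\sigma(\widehat{z},w)\widehat{T}(u)^{-1}=\sigma(\widehat{z},w)-\sigma(u,w)$: matching the $\widehat{z}$-linear part forces $\dot{u}(t)=S_{t}^{-1}\dot{z}(t)$ and matching the scalar part forces $\dot{\chi}(t)=-\tfrac{1}{2}\sigma(u(t),\dot{u}(t))$.

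For (ii), the point is that $\widehat{S}_{t}\widehat{T}(u(t))$ belongs to the inhomogeneous metaplectic group $\operatorname*{IMp}(n)$, being the product of a metaplectic operator and a Heisenberg--Weyl operator; by Proposition \ref{prop8} the Feichtinger algebra $S_{0}(\mathbb{R}^{n})$ is invariant under $\operatorname*{IMp}(n)$, and multiplying by the unimodular scalar $e^{\frac{i}{\hbar}\chi(t)}$ does not affect membership in the (linear) space $S_{0}(\mathbb{R}^{n})$. Hence $\psi(\cdot,t)=U_{t}^{H}\psi_{0}\in S_{0}(\mathbb{R}^{n})$ for every $t$ whenever $\psi_{0}\in S_{0}(\mathbb{R}^{n})$.

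The step I expect to be the main obstacle is the phase bookkeeping in (i): one must invoke Proposition \ref{prop10} with the normalization $u(0)=0$ so that no spurious factor $\widehat{T}(u(0))$ survives, and then carefully track the several unimodular factors---those arising from $\widehat{T}(a+b)=e^{-\frac{i}{2\hbar}\sigma(a,b)}\widehat{T}(a)\widehat{T}(b)$ and from commuting $\widehat{S}_{t}$ past $\widehat{T}(u(t))$---to land on exactly formula (\ref{imphase}); the identities for $\sigma$ under $S_{t}$ and under translations must be used with close attention to signs. By contrast the regularity bookkeeping (that $M,m\in C^{j}$, $j\geq2$, makes $(S_{t})_{t}$, $(\widehat{S}_{t})_{t}$ and $u$ smooth enough for Propositions \ref{prop9} and \ref{prop10}) is routine.
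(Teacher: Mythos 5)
Your proposal follows essentially the same route as the paper: split $H=H_{0}+H_{1}$, apply Proposition \ref{propho} to write $U_{t}^{H}=\widehat{S}_{t}U_{t}^{H_{1}\circ S_{t}}$, identify $H_{1}\circ S_{t}=\sigma(z,\dot{u}(t))$ so Proposition \ref{prop10} gives the Heisenberg--Weyl factor with the phase, and deduce (ii) from the $\operatorname*{IMp}(n)$-invariance of $S_{0}(\mathbb{R}^{n})$ (Proposition \ref{prop8}). One small point worth noting: your derivation (correctly) produces $\chi(t)=-\tfrac12\int_{0}^{t}\sigma\bigl(S_{t'}u(t'),\dot{z}(t')\bigr)dt'$, which matches the paper's \emph{proof} only up to an apparent $u\leftrightarrow z$ slip in the paper's formula (\ref{imphase}); since $u\neq z$ in general, you should not claim the two expressions coincide but rather flag the likely misprint.
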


\begin{proof}
(i) Write $H=H_{0}+H_{1}$ where $H_{0}(z,t)=\frac{1}{2}M(t)z\cdot z$ and
$H_{1}(z,t)=\sigma(z,\dot{z}(t))$. In view of formula (\ref{ut}) in
Proposition \ref{propho} we have $U_{t}^{H_{0}+H_{1}}=\widehat{S}_{t}%
U_{t}^{H_{1}\circ S_{t}}$ where $\widehat{S}_{t}$ is the evolution operator
for Schr\"{o}dinger's equation with Hamiltonian operator $\widehat{H}_{0}$ and
$S_{t}=\Pi(\widehat{S}_{t})$ (Proposition \ref{prop9}). On the other hand, we
have
\[
(H_{1}\circ S_{t})(z,t)=\sigma(S_{t}z,\dot{z}(t))=\sigma(z,S_{t}^{-1}\dot
{z}(t))
\]
hence
\[
\widehat{H_{1}\circ S_{t}}=\sigma(\widehat{z},S_{t}^{-1}\dot{z}(t))=\sigma
(\widehat{z},\dot{u}(t))
\]
where we have set $u(t)=\int_{0}^{t}S_{t^{\prime}}^{-1}\dot{z}(t^{\prime
})dt^{\prime}$. It follows from Proposition \ref{prop10}, formula
(\ref{schrimp1}), that we have
\begin{align*}
U_{t}^{H_{1}\circ S_{t}}  &  =\exp\left(  -\frac{i}{2\hbar}\int_{0}^{t}%
\sigma(z(t^{\prime}),\dot{u}(t))dt^{\prime}\right)  \widehat{T}(u(t))\\
&  =\exp\left(  -\frac{i}{2\hbar}\int_{0}^{t}\sigma(S_{t^{\prime}}z(t^{\prime
}),\dot{z}(t))dt^{\prime}\right)  \widehat{T}(u(t))
\end{align*}
and hence%
\[
U_{t}^{H}=\exp\left(  -\frac{i}{2\hbar}\int_{0}^{t}\sigma(S_{t^{\prime}%
}z(t^{\prime}),\dot{z}(t))dt^{\prime}\right)  \widehat{S}_{t}\widehat{T}%
(u(t))
\]
which is (\ref{imp}). (ii) Follows from formula (\ref{imp}) using Proposition
\ref{prop8} and observing that, up to the unimodular factor $e^{i\chi
(t)/\hbar}$ the propagator $U_{t}^{H}$ is in the inhomogeneous metaplectic
group $\operatorname*{IMp}(n)$.
\end{proof}

Part (ii) of the Proposition above can be interpreted by saying that the
\textquotedblleft phase space concentration\textquotedblright\ of an initial
wavepacket is preserved when the quantum propagator arises from a (homogeneous
or inhomogeneous) quadratic Hamiltonian; we will discuss an extension of this
result in next Section.

\subsection{An extension of the metaplectic representation}

In the 1980s Weinstein \cite{we85} suggested a way to extend the metaplectic
representation by constructing solutions to Schr\"{o}dinger's equations with
non-quadratic potentials. Weinstein's \textquotedblleft generalized
metaplectic operators\textquotedblright\ have recently been rediscovered in
\cite{cogrniro14} where the authors define and study an algebra of Fourier
integral operators and their action on modulation spaces. They assume that the
Hamiltonian is a quadratic function perturbed by a symbol belonging to certain
classes of modulation spaces, the so-called Sj\"{o}strand classes
\cite{sj94,sj95}.

\subsubsection{The Sj\"{o}strand class}

The next application is a regularity result extending (ii) in Proposition
\ref{prop11} hereabove. Let us first introduce a convenient class of symbols
(Sj\"{o}strand \cite{sj94,sj95}, Gr\"{o}chenig \cite{gr01,gr06-2}):

\begin{definition}
The Sj\"{o}strand class $M^{\infty,1}(\mathbb{R}^{2n})$ is the complex vector
space consisting of all $a\in\mathcal{S}^{\prime}(\mathbb{R}^{2n})$ such that%
\[
\sup_{z\in\mathbb{R}^{2n}}|W(a,b)(z,\cdot)|\in L^{1}(\mathbb{R}^{2n})
\]
for all $b\in\mathcal{S}(\mathbb{R}^{2n})$; here $(z,\varsigma)\longmapsto
W(a,b)(z,\varsigma)$ is the cross-Wigner transform on $\mathbb{R}^{2n}%
\times\mathbb{R}^{2n}$.
\end{definition}

The Sj\"{o}strand class is a Banach space for the topology defined by the
equivalent norms $||\cdot||_{b,M^{\infty,1}}$ defined, for $\Phi\in
\mathcal{S}(\mathbb{R}^{2n})$, by%
\[
||a||_{\Phi,M^{\infty,1}}=\int_{\mathbb{R}^{2n}}\sup_{z\in\mathbb{R}^{2n}%
}|W(a,\Phi)(z,\zeta)|d\zeta;
\]
the Schwartz space $\mathcal{S}(\mathbb{R}^{2n})$ is dense in $M^{\infty
,1}(\mathbb{R}^{2n})$. We also mention that $M^{\infty,1}(\mathbb{R}^{2n})$ is
invariant under any linear change of variables: if $a\in M^{\infty
,1}(\mathbb{R}^{2n})$ and $M\in\operatorname*{GL}(2n,\mathbb{R})$, then
$a\circ M\in M^{\infty,1}(\mathbb{R}^{2n})$.

In our context, the interest of this symbol space comes from the two following properties:

\begin{proposition}
\label{prop12}Let $a\in M^{\infty,1}(\mathbb{R}^{2n})$. (i) The Weyl operator
$\widehat{A}=\operatorname*{Op}^{w}(a)$ is bounded on the Feichtinger algebra
$S_{0}(\mathbb{R}^{n})$. (ii) $M^{\infty,1}(\mathbb{R}^{2n})$ is a Banach
algebra for the star-product $\ast_{\hbar}$: if $b\in M^{\infty,1}%
(\mathbb{R}^{2n})$ then $a\ast_{\hbar}b\in M^{\infty,1}(\mathbb{R}^{2n})$, and
for every $\Phi\in\mathcal{S}(\mathbb{R}^{2n})$ there exists a constant
$C_{\Phi}>0$ such that
\begin{equation}
||a\ast_{\hbar}b||_{\Phi,M^{\infty,1}}\leq C_{\Phi}||a||_{\Phi,M^{\infty,1}%
}||b||_{\Phi,M^{\infty,1}} \label{est}%
\end{equation}
for all $a\ast_{\hbar}b\in M^{\infty,1}(\mathbb{R}^{2n})$.
\end{proposition}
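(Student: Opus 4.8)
The plan is to read both assertions in the language of time-frequency analysis, where they are classical. On the one hand $S_{0}(\mathbb{R}^{n})$ is the modulation space $M^{1}(\mathbb{R}^{n})$, with the Wigner norm (\ref{wignorm}) equivalent to the $M^{1}$-norm; on the other hand $M^{\infty,1}(\mathbb{R}^{2n})$ is, up to the involutive symplectic Fourier transform $F_{\sigma}$, a Wiener--amalgam-type space controlling the short-time behaviour of a symbol uniformly in position and integrably in frequency. With this dictionary (i) and (ii) are the boundedness and Banach-algebra properties of Sj\"{o}strand-class Weyl operators (Sj\"{o}strand \cite{sj94,sj95}, Gr\"{o}chenig \cite{gr01,gr06-2}), and the route I would take is the \emph{almost-diagonalization} of such operators by the Heisenberg--Weyl operators $\widehat{T}(z)$ already at our disposal, carried out in the $\hbar$-dependent Weyl calculus of Section \ref{secweyl}.

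Fix a Gaussian window $g\in\mathcal{S}(\mathbb{R}^{n})$ with $\|g\|_{L^{2}}=1$. For part (i) the first and decisive step is the kernel estimate
\[
|\langle\widehat{A}\,\widehat{T}(w)g\,|\,\widehat{T}(w')g\rangle|\le H(w-w')\qquad(w,w'\in\mathbb{R}^{2n}),
\]
with $\widehat{A}=\operatorname{Op}^{w}(a)$ and $H\in L^{1}(\mathbb{R}^{2n})$, $\|H\|_{L^{1}}\le C\|a\|_{\Phi,M^{\infty,1}}$. This follows by inserting the harmonic representation (\ref{Opa}) of $\widehat{A}$ through its twisted symbol $a_{\sigma}$ and using the product law (\ref{hw3}): the left-hand side becomes, up to a unimodular factor, a twisted convolution of $a_{\sigma}$ with the cross-Wigner function $W(g,g)$, and the hypothesis $a\in M^{\infty,1}$ is exactly what makes the resulting function integrable in $w-w'$ uniformly in $w$. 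The second step is to expand an arbitrary $\psi\in\mathcal{S}(\mathbb{R}^{n})$ by the reconstruction formula attached to $g$ and apply $\widehat{A}$ under the integral sign, so that $W(\widehat{A}\psi,g)$ is an average of the functions $W(\widehat{A}\,\widehat{T}(w)g,g)$ weighted by $W(\psi,g)\in L^{1}(\mathbb{R}^{2n})$; the kernel estimate then gives the pointwise bound $|W(\widehat{A}\psi,g)|\le H\ast|W(\psi,g)|$, whence by Young's inequality $\|\widehat{A}\psi\|_{g,S_{0}}\le C'\|a\|_{\Phi,M^{\infty,1}}\|\psi\|_{g,S_{0}}$. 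Density of $\mathcal{S}(\mathbb{R}^{n})$ in $S_{0}(\mathbb{R}^{n})$ together with the equivalence of the Wigner norms over windows (recorded after Definition \ref{defmodspw}) upgrades this to boundedness of $\widehat{A}$ on $S_{0}(\mathbb{R}^{n})$.

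Part (ii) reuses the same estimate. If $a,b\in M^{\infty,1}(\mathbb{R}^{2n})$, then by the kernel estimate of part (i) the operators $\operatorname{Op}^{w}(a)$ and $\operatorname{Op}^{w}(b)$ have ``Gabor matrices'' (with respect to the coherent states $\widehat{T}(w)g$) dominated by $L^{1}$-convolution kernels $H_{a},H_{b}$ with $\|H_{a}\|_{L^{1}}\lesssim\|a\|_{\Phi,M^{\infty,1}}$ and $\|H_{b}\|_{L^{1}}\lesssim\|b\|_{\Phi,M^{\infty,1}}$. Since $\operatorname{Op}^{w}(a)\operatorname{Op}^{w}(b)=\operatorname{Op}^{w}(a\ast_{\hbar}b)$ (the composition/Moyal relation of Section \ref{secweyl}), inserting the reconstruction formula for $g$ between the two factors exhibits the Gabor matrix of $\operatorname{Op}^{w}(a\ast_{\hbar}b)$ as the ``matrix product'' of those of $\operatorname{Op}^{w}(a)$ and $\operatorname{Op}^{w}(b)$, hence dominated (up to the constant $(2\pi\hbar)^{-n}$) by $H_{a}\ast H_{b}\in L^{1}(\mathbb{R}^{2n})$, with $\|H_{a}\ast H_{b}\|_{L^{1}}\le\|H_{a}\|_{L^{1}}\|H_{b}\|_{L^{1}}$. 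It then remains to invoke the converse half of the almost-diagonalization characterization --- an operator whose Gabor matrix relative to a Gaussian is dominated by a convolution kernel in $L^{1}$ is the Weyl quantization of a symbol in $M^{\infty,1}$, with norm controlled by the $L^{1}$-norm of the kernel --- to obtain $a\ast_{\hbar}b\in M^{\infty,1}(\mathbb{R}^{2n})$ and the submultiplicative bound (\ref{est}). Density of $\mathcal{S}(\mathbb{R}^{2n})$ in $M^{\infty,1}(\mathbb{R}^{2n})$ legitimizes the manipulations.

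The step I expect to be the real work is the kernel estimate of part (i) together with its converse used in part (ii): translating the $M^{\infty,1}$-norm of $a$ --- a statement about the cross-Wigner transform of the \emph{symbol} --- into uniform integrability of the \emph{matrix entries} $\langle\operatorname{Op}^{w}(a)\,\widehat{T}(w)g\,|\,\widehat{T}(w')g\rangle$, and conversely. This is a careful computation intertwining the $\hbar$-Weyl correspondence, the cross-Wigner transform, and the cocycle appearing in (\ref{hw2})--(\ref{hw3}); once both directions are secured, boundedness on $S_{0}(\mathbb{R}^{n})$ and the algebra property (\ref{est}) are purely formal consequences via the $L^{1}$-convolution estimates above. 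One may of course instead simply cite Sj\"{o}strand \cite{sj94,sj95} and Gr\"{o}chenig \cite{gr01,gr06-2}, of which the above is the $\hbar$-quantitative version adapted to the notation of the present paper.
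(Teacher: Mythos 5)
The paper offers no proof of Proposition~\ref{prop12}: it simply states ``We refer to Gr\"ochenig \cite{gr01,gr06-2} for a proof of these properties.'' Your sketch is a faithful (and correct, modulo the $\hbar$-dependent normalizations you flag) reconstruction of exactly the almost-diagonalization argument those references use --- the $L^{1}$-controlled coherent-state matrix estimate, Young's inequality for boundedness on $S_{0}(\mathbb{R}^{n})$, and the converse characterization to close under $\ast_{\hbar}$ --- so you have in effect supplied the proof the paper delegates to the citation you name at the end.
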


We refer to Gr\"{o}chenig \cite{gr01,gr06-2} for a proof of these properties.

\subsubsection{A perturbation of the metaplectic representation}

In a recent paper Cordero \textit{et al}. \cite{cogrniro14} establish the
following result, which considerably extends Proposition \ref{prop8}:

\begin{proposition}
\label{prop13}Let $a\in M^{\infty,1}(\mathbb{R}^{2n})$ and set
\begin{equation}
H(z,t)=\frac{1}{2}M(t)z^{2}+a(z,t).
\end{equation}
Consider the associated Schr\"{o}dinger equation
\[
i\hbar\frac{\partial\psi}{\partial t}=\widehat{H}\psi\ \text{,}\ \psi
(\cdot,0)=\psi_{0}.
\]
We have $\psi(\cdot,t)\in S_{0}(\mathbb{R}^{n})$ for every $t\in\mathbb{R}$ if
and only if $\psi_{0}\in S_{0}(\mathbb{R}^{n})$.
\end{proposition}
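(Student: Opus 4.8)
The plan is to reduce the statement, by means of the factorization of Proposition \ref{propho}, to a bounded-perturbation (Dyson series) argument governed by the boundedness and Banach-algebra properties of the Sj\"ostrand class collected in Proposition \ref{prop12}. First I would split $H=H_0+H_1$ with $H_0(z,t)=\tfrac12M(t)z^2$ the quadratic part and $H_1(z,t)=a(z,t)\in M^{\infty,1}(\mathbb{R}^{2n})$, and invoke formula (\ref{ut}) in Proposition \ref{propho}:
\[
U_t^{H}=\widehat{S}_t\,U_t^{H_1\circ S_t},
\]
where $(S_t)_t=(f_t^{H_0})_t$ is the linear Hamiltonian flow of $H_0$ and $\widehat{S}_t\in\operatorname*{Mp}(n)$ is its metaplectic lift. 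Since $S_0(\mathbb{R}^n)$ is invariant under $\operatorname*{IMp}(n)$ (Proposition \ref{prop8}), $\widehat{S}_t$ and $\widehat{S}_t^{-1}$ map $S_0(\mathbb{R}^n)$ bijectively onto itself, so the whole problem reduces to showing that $V_t:=U_t^{H_1\circ S_t}$ restricts to a bijection of $S_0(\mathbb{R}^n)$.

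Now the time-dependent symbol $b(\cdot,t):=a\circ S_t$ still belongs to $M^{\infty,1}(\mathbb{R}^{2n})$, because that space is invariant under linear changes of variables and $S_t\in\operatorname*{GL}(2n,\mathbb{R})$; together with the regular dependence of $S_t$ on $t$ this gives $C_T:=\sup_{|t|\le T}\|b(\cdot,t)\|_{\Phi,M^{\infty,1}}<\infty$ for every $T>0$. The family $V_t$ solves $i\hbar\dot V_t=\operatorname*{Op}^w(b(\cdot,t))V_t$, $V_0=I_{\mathrm d}$, and its generator $\operatorname*{Op}^w(b(\cdot,t))$ is bounded both on $L^2(\mathbb{R}^n)$ and, by Proposition \ref{prop12}(i), on $S_0(\mathbb{R}^n)$, with operator norm bounded by a constant times $\|b(\cdot,t)\|_{\Phi,M^{\infty,1}}$. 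Hence $V_t$ is given by the time-ordered Dyson series
\[
V_t=\sum_{k=0}^{\infty}\Big(\tfrac{1}{i\hbar}\Big)^{k}\int_{0\le t_k\le\cdots\le t_1\le t}\operatorname*{Op}^w(b(\cdot,t_1))\cdots\operatorname*{Op}^w(b(\cdot,t_k))\,dt_1\cdots dt_k ,
\]
and its inverse $V_t^{-1}$ by the series with the opposite time ordering and sign. By the Banach-algebra property Proposition \ref{prop12}(ii) and the estimate (\ref{est}), each integrand is a Weyl operator whose $M^{\infty,1}$-symbol has norm at most $(C_\Phi C_T)^{k}$; since the integration simplex has volume $t^k/k!$, the $k$-th term has operator norm on $S_0(\mathbb{R}^n)$ at most $(|t|C_\Phi C_T/\hbar)^k/k!$, so both series converge absolutely in $\mathcal{B}(S_0(\mathbb{R}^n))$. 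Therefore $V_t$ is a bijection of $S_0(\mathbb{R}^n)$, whence so is $U_t^H=\widehat{S}_tV_t$; this gives $\psi(\cdot,t)=U_t^H\psi_0\in S_0(\mathbb{R}^n)$ for $\psi_0\in S_0(\mathbb{R}^n)$, and conversely $\psi_0=(U_t^H)^{-1}\psi(\cdot,t)\in S_0(\mathbb{R}^n)$ when $\psi(\cdot,t)\in S_0(\mathbb{R}^n)$, which is the desired equivalence.

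The main obstacle is the rigorous justification of the Dyson-series representation and of its convergence in the operator norm on the Banach space $S_0(\mathbb{R}^n)$: one must verify that the time-ordered product genuinely solves the Cauchy problem for $V_t$ (routine once one knows $\operatorname*{Op}^w(b(\cdot,t))$ is $L^2$-bounded and strongly continuous in $t$), that the $M^{\infty,1}$-valued integrals make sense (continuity of $t\mapsto b(\cdot,t)$ into $M^{\infty,1}$, which follows from the $C^j$ dependence of $S_t$ on $t$), and that iterated star products accumulate exactly as in (\ref{est}) with a constant $C_\Phi$ uniform in the number of factors. These are precisely the technical points carried out by Cordero \textit{et al}. \cite{cogrniro14}; the function of Proposition \ref{propho} in this scheme is to confine the unbounded operator $\widehat{H}_0$ to the harmless metaplectic factor $\widehat{S}_t$, leaving only a bounded perturbation to which the Sj\"ostrand-class machinery directly applies.
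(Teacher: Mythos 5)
Your proposal follows the paper's own strategy almost verbatim: decompose $H=H_{0}+H_{1}$, invoke the factorization $U_{t}^{H}=\widehat{S}_{t}\,U_{t}^{a\circ S_{t}}$ from Proposition~\ref{propho}, use invariance of $M^{\infty,1}(\mathbb{R}^{2n})$ under linear changes of variables together with invariance of $S_{0}(\mathbb{R}^{n})$ under $\operatorname*{IMp}(n)$ to reduce the problem to a propagator generated by a Sj\"ostrand-class symbol, and conclude by a convergent series argument resting on Proposition~\ref{prop12}. The only genuine difference is a welcome refinement: you expand $V_{t}$ as the time-ordered Dyson series at the operator level, which correctly treats the time-dependent symbol $b(\cdot,t)=a\circ S_{t}$, whereas the paper's sketch assumes time-independence for simplicity and writes a symbol-level exponential series, and you also make the converse direction (via $V_{t}^{-1}$) explicit, which the paper leaves implicit.
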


This result is very interesting, because it shows that the perturbation of a
quadratic Hamiltonian by an operator associated to a Sj\"{o}strand symbol does
not affect the phase-space concentration of the solution to Schr\"{o}dinger's
equation. Such perturbations can thus be viewed as \textquotedblleft
small\textquotedblright\ from the phase space perspective. Wether a similar
result holds for larger classes of \textquotedblleft
perturbations\textquotedblright\ is an open problem.

It turns out that Proposition \ref{prop13} can be obtained from the
factorization result (\ref{ut}) in Proposition \ref{propho}. We are going to
sketch the argument here; the full details will be published elsewhere. Let us
write $H=H_{0}+H_{1}$ where $H_{0}=\frac{1}{2}M(t)z^{2}$ and $H_{1}=a$; in
view of (\ref{ut}) we have $U_{t}^{H}=\widehat{S}_{t}U_{t}^{a\circ S_{t}}$
where $(\widehat{S}_{t})_{t}$ is the metaplectic isotopy determined by
$\widehat{H}_{0}$. In view of the closedness of $S_{0}(\mathbb{R}^{n})$ under
the action of $\operatorname*{Mp}(n)$ it is thus sufficient to show that
$U_{t}^{a\circ S_{t}}$ maps $S_{0}(\mathbb{R}^{n})$ into itself. Since the
Sj\"{o}strand class is invariant under any linear change of variables, the
proof of Proposition \ref{prop13} thus boils down to the simpler statement:

\begin{lemma}
Let $H\in M^{\infty,1}(\mathbb{R}^{2n})$. Then $U_{t}^{H}:S_{0}(\mathbb{R}%
^{n})\longrightarrow S_{0}(\mathbb{R}^{n})$ for every $t\in\mathbb{R}$.
\end{lemma}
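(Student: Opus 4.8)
The plan is to prove the lemma by expanding the propagator $U_t^H$ as a Dyson (time-ordered) series and running the two estimates of Proposition~\ref{prop12} term by term; this is exactly the residual claim left after the reduction via the factorization formula (\ref{ut}). Since $H\in M^{\infty,1}(\mathbb{R}^{2n})$ is real, the operator $\widehat{H}(t)=\operatorname{Op}^w(H(\cdot,t))$ is bounded and self-adjoint on $L^2(\mathbb{R}^n)$ for each $t$ (Sjöstrand's boundedness theorem, of which Proposition~\ref{prop12}(i) is the $S_0$-analogue), so the propagator is genuinely given by the norm-convergent series
\begin{equation}
U_t^H=\sum_{k=0}^{\infty}\Big(\tfrac{1}{i\hbar}\Big)^{k}\int_{0\le t_k\le\cdots\le t_1\le t}\widehat{H}(t_1)\cdots\widehat{H}(t_k)\,dt_1\cdots dt_k,
\end{equation}
exactly as in the time-ordered product argument used in the proof of Proposition~\ref{prop10}. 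By the Weyl symbol calculus, $\widehat{H}(t_1)\cdots\widehat{H}(t_k)=\operatorname{Op}^w\big(H(\cdot,t_1)\ast_{\hbar}\cdots\ast_{\hbar}H(\cdot,t_k)\big)$.

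Next I would iterate the Banach-algebra estimate (\ref{est}) in Proposition~\ref{prop12}(ii): fixing a window $\Phi\in\mathcal{S}(\mathbb{R}^{2n})$ and setting $A=\sup_{|s|\le|t|}\|H(\cdot,s)\|_{\Phi,M^{\infty,1}}$ (finite because $s\mapsto H(\cdot,s)$ is continuous — indeed $C^{j}$ — into $M^{\infty,1}$), one obtains $\|H(\cdot,t_1)\ast_{\hbar}\cdots\ast_{\hbar}H(\cdot,t_k)\|_{\Phi,M^{\infty,1}}\le C_\Phi^{\,k-1}A^{k}$. Proposition~\ref{prop12}(i) then supplies a constant $C$ with $\|\operatorname{Op}^w(b)\|_{\mathcal{B}(S_0)}\le C\|b\|_{\Phi,M^{\infty,1}}$, so the $k$-th term of the series — whose time simplex has volume $|t|^{k}/k!$ — is bounded in $\mathcal{B}(S_0(\mathbb{R}^n))$ by $C\,C_\Phi^{\,k-1}A^{k}|t|^{k}/k!$. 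Summing over $k$ shows the Dyson series converges absolutely in $\mathcal{B}(S_0(\mathbb{R}^n))$, with an operator-norm bound of the form $C_1e^{C_2|t|}$, uniformly on compact $t$-intervals.

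Finally I would identify the two limits. The partial sums of the series converge to $U_t^H$ in $\mathcal{B}(L^2)$ and to some operator $V_t$ in $\mathcal{B}(S_0)$; since $\mathcal{S}(\mathbb{R}^n)\subset S_0(\mathbb{R}^n)$ is dense in $L^2(\mathbb{R}^n)$ and the two limits agree on it, $V_t$ is the restriction of $U_t^H$, whence $U_t^H\big(S_0(\mathbb{R}^n)\big)\subseteq S_0(\mathbb{R}^n)$. Applying the same argument with $-t$ in place of $t$ gives the reverse inclusion, so $U_t^H$ restricts to a bounded bijection of $S_0(\mathbb{R}^n)$.

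The main obstacle is not any individual inequality but the bookkeeping that makes the Dyson expansion rigorous: one must know that $\widehat{H}$ with $M^{\infty,1}$ symbol generates a genuine unitary propagator and that this propagator is the sum of the time-ordered series — here the $L^2$-boundedness of $\widehat{H}$ (Sjöstrand) together with the standard Picard/Dyson iteration does the work, but the interchange of the $t$-integrations with the $\mathcal{B}(S_0)$-valued summation has to be justified. A secondary technical point is checking that $t\mapsto H(\cdot,t)$ is continuous into the Banach space $M^{\infty,1}(\mathbb{R}^{2n})$, so that $A<\infty$; in the application of the lemma this follows from the $C^{j}$ dependence of $M(t)$ and $a$ on $t$ together with the linear-change-of-variables invariance of $M^{\infty,1}$ that was used to pass from $a$ to $a\circ S_t$.
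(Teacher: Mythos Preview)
Your approach is essentially the same as the paper's: both expand the propagator as a star-product series and feed the two parts of Proposition~\ref{prop12} into it. The paper's sketch simplifies by taking $H$ time-independent (so $U_t^H=e^{-it\widehat{H}/\hbar}$ and the exponential series for the Weyl symbol $u_t$ is summed directly in $M^{\infty,1}$ via (\ref{est}), after which (i) is applied once), whereas you treat the time-dependent case via the Dyson series and bound each term in $\mathcal{B}(S_0(\mathbb{R}^n))$; this is only a cosmetic reordering of the same argument.
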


\noindent\textbf{Sketch of the proof}. For simplicity, we assume that $H$ is
time-independent; then $U_{t}^{H}=e^{-it\widehat{H}/\hbar}$; one shows that
the Weyl symbol of $U_{t}^{H}$ can be written as a series
\[
u_{t}=I+H+\frac{1}{2!}H\ast_{\hbar}H+\frac{1}{3!}(H\ast_{\hbar}H\ast_{\hbar
}H)^{3}+\cdot\cdot\cdot
\]
which is convergent in view of the inequality (\ref{est}) in Proposition
\ref{prop12}. Hence $u_{t}\in M^{\infty,1}(\mathbb{R}^{2n})$, and it then
suffices to use part (i) of the same proposition.

\section{Discussion}

\paragraph{Phase space picture of quantum mechanics}

There is dissymmetry between Hamiltonian mechanics and quantum mechanics in
its usual formulation (the Schr\"{o}dinger representation, which we have been
considering here). Hamilton's equations are expressed in terms of a phase
space, endowed with a symplectic structure, while Schr\"{o}dinger's equation
leads to evolution operators defined on functions on configuration space. This
difficulty is actually not a real one, because one can replace the standard
Schr\"{o}dinger equation by the phase space equation%
\[
i\hbar\frac{\partial\Psi}{\partial t}=H\left(  x+\tfrac{1}{2}i\hbar
\partial_{p},p-\tfrac{1}{2}i\hbar\partial_{x}\right)  \Psi
\]
obtained from the Hamilton function $H$ using Bopp quantization instead of
Weyl quantization (see \cite{Birkbis,golu09bis,golu11}. This equation allows
to define a propagator $\widetilde{U}_{t}^{H}$ acting on functions defined on
phase space, which is related to the usual propagator $U_{t}^{H}$ by an
intertwining relation. The phase space Schr\"{o}dinger equation is actually a
variant of deformation quantization; this approach might actually be, both
mathematically and physically, more fruitful than the traditional
Schr\"{o}dinger approach, since it views quantum mechanics as a deformation of
classical mechanics.

\paragraph{Epistemic considerations}

One cannot help being struck by the beauty and harmony of the structures
underlying Hamiltonian flows and isotopies, and their operator counterparts,
quantum isotopies. Both are governed by almost similar -- one is tempted to
say \emph{identical} -- algebraic laws corresponding to underlying physical
processes that are well illustrated by Bohm and Hiley's \textquotedblleft
implicate order\textquotedblright\ \cite{bohi93}. As hinted at in the
Introduction (also see the discussion in Section \ref{seclift}), one actually
passes from Hamiltonian flows to their quantum counterparts by
\textquotedblleft lifting\textquotedblright\ the first to a covering group;
since the most obvious occurrence of this property can be seen on the
metaplectic representation, we have called elsewhere \cite{ICP,gohi15}
\textit{metatron} the entity whose evolution is governed by the quantum motion
induced by the quantum propagator. In that philosophical view, Hamiltonian
flows can be seen as \textquotedblleft unfolding\textquotedblright%
\ reality;\ Hiley's work \cite{Basil1} might well help shed some light on the
deeper algebraic structures; in Hiley's language our product formulas could be
interpreted as algebraic unfolding processes underlying a \textquotedblleft
holomovement\textquotedblright\ (also see de Gosson and Hiley \cite{gohi13}
for a musical comparison). There is however a fundamental difference between
Hamiltonian and quantum mechanics, already visible at the level of
mathematics: the first is local, while the second is non-local, global
(\textit{cf}. the Moyal product (\ref{Moyal1}) where one integrates over the
whole space at time $t$). These issues of classical versus quantum processes
will be further discussed in \cite{gohi15}.

\begin{acknowledgement}
The author has been financed by the Austrian Research Agency FWF; project
number P 23902-N13.
\end{acknowledgement}

\begin{acknowledgement}
I thank Basil Hiley for critical comments about the physical content of this
paper. I would also like to express my warmest thanks to Glen Dennis for
careful proofreading of an early version of the manuscript; needless to say, I
am solely responsible for remaining misprints or typos.
\end{acknowledgement}


\begin{thebibliography}{99}                                                                                               %


\bibitem {AM}R. Abraham, J. E. Marsden. Foundations of Mechanics. Second
Edition, revised, enlarged, and reset, Addison--Wesley Publishing Company,
Inc., Redwood City, CA, 1987

\bibitem {AMR}R. Abraham, J. E. Marsden, T. Ratiu. Manifolds, Tensor Analysis,
and Applications. Applied Mathematical Sciences 75, Springer, 1988

\bibitem {Arnold}V. I. Arnold. Mathematical Methods of Classical Mechanics.
Graduate Texts in Mathematics, second edition, Springer-Verlag, 1989

\bibitem {Banyaga}A. Banyaga. Sur la structure du groupe des
diff\'{e}omorphismes qui pr\'{e}servent une forme symplectique, Comm. Math.
Helv. 53, 174--227 (1978)

\bibitem {Binz}E. Binz, S. Pods. The geometry of Heisenberg groups: with
applications in signal theory, optics, quantization, and field quantization
(Vol. 151). American Mathematical Soc., 2008

\bibitem {Berry}M. V. Berry. Quantal phase factors accompanying adiabatic
changes. Proc. Roy. Soc. London Ser. A, Mathematical and Physical Sciences
392(1802), 45--57 (1984)

\bibitem {beshu}F. A. Berezin, M. A. Shubin. The Schr\"{o}dinger Equation.
Vol. 66. Springer, 1991

\bibitem {bohi93}D. Bohm, B. J. Hiley. The Undivided Universe, London:
Routledge, 1993

\bibitem {Boothby}W. M. Boothby. Transitivity of the automorphisms of certain
geometric structures. Trans. Amer. Math. Soc., 93--100 (1969)

\bibitem {bj}M. Born, P. Jordan. Zur Quantenmechanik, Z. Physik 34, 858--888
(1925) [English translation in: M. Jammer The Conceptual Development of
Quantum Mechanics. New York: McGraw-Hill (1966); 2nd ed: New York: American
Institute of Physics (1989)]

\bibitem {bu}G. Burdet, M. Perrin, M. Perroud. Generating functions for the
affine symplectic group. Comm. Math. Phys., 58(3), 241--254 (1978)

\bibitem {chorin}A. J. Chorin, T. J. Hughes, M. F. McCracken, J. E. Marsden.
Product formulas and numerical algorithms. Comm. Pure Appl. Math., 31(2),
205--256 (1978)

\bibitem {cogrniro14}E. Cordero, K. Gr\"{o}chenig, F. Nicola, L. Rodino.
Generalized metaplectic operators and the Schr\"{o}dinger equation with a
potential in the Sj\"{o}strand class. J. Math. Phys. 55, 081506 (2014)

\bibitem {Dragt}A. J. Dragt. Lie Methods for Nonlinear Dynamics with
Applications to Accelerator Physics. Online book: http://www.physics.umd.edu/dsat/docs/Book8Nov2014.pdf

\bibitem {eg69}Y. V. Egorov. The canonical transformations of
pseudodifferential operators. Uspekhi Matematicheskikh Nauk, 24(5), 235--236 (1969)

\bibitem {em83}G. G. Emch. Geometric dequantization and the correspondence
problem. Int. J. Theor. Phys. 22(5), 397--420 (1983)

\bibitem {er11}B. Erdelyi. Symmetries and their Applications in Beam Physics.
Int. J. Pure Appl. Math. 68, 301 (2011)

\bibitem {hanskernel}H. G. Feichtinger. Un espace de Banach de distributions
temp\'{e}r\'{e}es sur les groupes localement compacts ab\'{e}liens, C. R.
Acad. Sci. Paris, S\'{e}rie A--B 290\textbf{\ }(17), A791--A794 (1980)

\bibitem {fe81}H. G. Feichtinger. On a new Segal algebra,\ Monatsh. Math. 92,
269--289 (1981)

\bibitem {fe81bis}H. G. Feichtinger. Banach spaces of distributions of
Wiener's type and interpolation, in Functional Analysis and Approximation
(Oberwohlfach, 1980), Internat. Ser. Numer. Math. 60 Birkh\"{a}user, Basel,
153--165 (1981)

\bibitem {fo89}G. B. Folland. Harmonic analysis in phase space. No. 122.
Princeton university press, 1989

\bibitem {gala14}W. Gaim, C. Lasser. Corrections to Wigner type phase space
methods. arXiv preprint arXiv:1403.2839 (2014)

\bibitem {AIF}M. de Gosson. Maslov Indices on $\operatorname*{Mp}(n)$.\ Ann.
Inst. Fourier, Grenoble, 40(3), 537--555 (1990)

\bibitem {JMPA}M. de Gosson. The structure of $q$-symplectic geometry. J.
Math. Pures et Appl. 71, 429--453 (1992)

\bibitem {Wiley}M. de Gosson. Maslov classes, metaplectic representation and
Lagrangian quantization. Mathematical research, Vol. 95, Akademie Verlag (1997)

\bibitem {ICP}M. de Gosson. The Principles of Newtonian and Quantum Mechanics:
the Need for Planck's Constant $h$; with a foreword by B. Hiley. Imperial
College Press, 2001

\bibitem {go05}M. de Gosson. On the Weyl representation of metaplectic
operators. Lett. Math. Phys. 72(2), 129--142 (2005)

\bibitem {Birk}M. de Gosson Symplectic Geometry and Quantum Mechanics,
Birkh\"{a}user, Basel, series \textquotedblleft Operator Theory: Advances and
Applications\textquotedblright\ (subseries: \textquotedblleft Advances in
Partial Differential Equations\textquotedblright), Vol. 166 (2006)

\bibitem {go07}M. de Gosson. Metaplectic representation, Conley--Zehnder
index, and Weyl calculus on phase space. Rev. Math. Phys. 19(10), 1149--1188 (2007)

\bibitem {go09}M. de Gosson. On the usefulness of an index due to Leray for
studying the intersections of Lagrangian and symplectic paths. Journal de
math\'{e}matiques pures et appliqu\'{e}es 91(6), 598--613 (2009)

\bibitem {FP}M. de Gosson. The Symplectic Camel and the Uncertainty Principle:
The Tip of an Iceberg? Found. Phys. 99, 194--214 (2009)

\bibitem {stat}M. de Gosson. On the Use of Minimum Volume Ellipsoids and
Symplectic Capacities for Studying Classical Uncertainties for Joint
Position--Momentum Measurements. J. Stat. Mech., P11005 (2010)

\bibitem {Birkbis}M. de Gosson. Symplectic Methods in Harmonic Analysis;
Applications to Mathematical Physics, Birkh\"{a}user, 2010

\bibitem {go11}M. de Gosson. A transformation property of the Wigner
distribution under Hamiltonian symplectomorphisms. J. Pseudo-Differ. Oper.
Appl. 2(1) 91--99 (2011)

\bibitem {go13}M. de Gosson. Symplectic covariance properties for Shubin and
Born--Jordan pseudo-differential operators. Trans. Amer. Math. Soc. 365(6),
3287--3307 (2013)

\bibitem {go14}M. de Gosson. Born--Jordan Quantization and the Equivalence of
the Schr\"{o}dinger and Heisenberg Pictures. Found. Phys. 44(10), 1096--1106 (2014)

\bibitem {gogo03}M. de Gosson, S. de Gosson. Symplectic path intersections and
the Leray index . Symplectic path intersections and the Leray index. Partial
differential equations and mathematical physics (Tokyo, 2001), Progr.
Nonlinear Differential Equations Appl. 52, Birkh\"{a}user Boston, Boston, MA (2003)

\bibitem {gogo06}M. de Gosson, S. de Gosson. Extension of the Conley-Zehnder
index, a product formula, and an application to the Weyl representation of
metaplectic operators. J. Math. Phys. 47(12), 123506 (2006)

\bibitem {gohi11}M. de Gosson, B. Hiley. Imprints of the quantum world in
classical mechanics. Found. Phys. (41)9, 1415--1436 (2011)

\bibitem {gohi13}M. de Gosson, B. Hiley. Hamiltonian Flows and the
Holomovement." Mind and Matter 11.2 (2013): 205-221.

\bibitem {gohi15}M. de Gosson, B. Hiley. Mathematical and Physical Aspects of
Quantum Processes (with Basil Hiley). To appear in Imperial College Press (2014)

\bibitem {golu09}M. de Gosson, F. Luef. Symplectic capacities and the geometry
of uncertainty: the irruption of symplectic topology in classical and quantum
mechanics. Phys. Reps., 484(5),131--179 (2009)

\bibitem {golu09bis}M. de Gosson, F. Luef. On the usefulness of modulation
spaces in deformation quantization. J. Phys. A: Mathematical and Theoretical,
42(31), 315205 (2009)

\bibitem {golu11}M. de Gosson, F. Luef. Preferred quantization rules:
Born--Jordan vs. Weyl; applications to phase space quantization. J.
Pseudo-Differ. Oper. Appl 2(1), 115--139 (2011)

\bibitem {GB}J. M. Gracia-Bondia. Generalized Moyal Quantization on
Homogeneous Symplectic spaces. Contemporary Mathematics 134, Amer. Math. Soc.,
Providence, RI, 93--114, 1992

\bibitem {gr01}K. Gr\"{o}chenig. Foundations of Time-Frequency Analysis, Appl.
Numer. Harmon. Anal., Birkh\"{a}user, Boston, MA (2001)

\bibitem {gr06}K. Gr\"{o}chenig. Composition and spectral invariance of
pseudodifferential operators on modulation spaces. Journal d'analyse
math\'{e}matique, 98(1), 65--82 (2006)

\bibitem {gr06-2}K. Gr\"{o}chenig. Time-frequency analysis of Sj\"{o}strand's
class. Rev. Mat. Iberoamericana 22(2), 703--724 (2006)

\bibitem {Gromov}M. Gromov. Pseudoholomorphic curves in symplectic
manifolds.\ Invent. Math. 82, 307--347 (1985)

\bibitem {gr76}A. Grossmann. Parity operators and quantization of $\delta
$-functions. Commun. Math. Phys. 48, 191--193 (1976)

\bibitem {Hansen}F. Hansen. The Moyal Product and Spectral Theory for a Class
of Infinite Dimensional Matrices. PubL RIMS, Kyoto Univ. 26, 885--933 (1990)

\bibitem {ha66}Y. Hatakeyama. Some notes on the group of automorphisms of
contact and symplectic structures, T\^{o}hoku Math. J. 18, 338--347 (1966)

\bibitem {Basil1}B. J. Hiley. Process, distinction, groupoids and Clifford
algebras: an alternative view of the quantum formalism. New Structures for
Physics. Springer Berlin Heidelberg, 705--752, 2011

\bibitem {Hoermander}L. H\"{o}rmander. The Analysis of Linear Partial
Differential Operators, vol. III, Springer-Verlag, Berlin, 1985

\bibitem {HZ}H. Hofer, E. H. Zehnder. Symplectic Invariants and Hamiltonian
Dynamics., Birkh\"{a}user Advanced texts (Basler Lehrb\"{u}cher),
Birkh\"{a}user Verlag, 1994L.

\bibitem {kela13}J. Keller, C. Lasser. Propagation of Quantum Expectations
with Husimi Functions. SIAM Journal on Applied Mathematics 73(4), 1557--1581 (2014)

\bibitem {Katok}A. Katok. Ergodic perturbations of degenerate integrable
Hamiltonian systems. Izv. Akad. Nauk. SSSR Ser. Mat. 37, 539--576 (1973)
[Russian]. English translation: Math. USSR izvestija 7, 535--571 (1973)

\bibitem {Leray}J. Leray. Lagrangian Analysis and Quantum Mechanics,\ a
mathematical structure related to asymptotic expansions and the Maslov index
(the MIT Press, Cambridge, Mass., 1981); translated from Analyse Lagrangienne
RCP 25, Strasbourg Coll\`{e}ge de France (1976--1977)

\bibitem {Littlejohn}R. G. Littlejohn. The semiclassical evolution of wave
packets, Physics Reports 138(4--5), 193--291 (1986)

\bibitem {ma98}G. W. Mackey. The Relationship Between Classical and Quantum
Mechanics. In Contemporary Mathematics 214, Amer. Math. Soc., Providence, RI, 1998

\bibitem {mivi94}P. W. Michor, C. Vizman. N--transitivity of Certain
Diffeomorphism Groups. Acta Math. Univ. Comenianae 63(2), 221--225 (1994)

\bibitem {mac}R. I. McLachlan, P. Atela. The accuracy of symplectic
integrators. Nonlinearity 5, 541--562 (1992)

\bibitem {Marsden}J. E. Marsden. Hamiltonian one parameter groups a
mathematical exposition of infinite dimensional Hamiltonian Systems with
applications in classical and quantum mechanics. Archive for Rational
Mechanics and Analysis 28(5), 362--396 (1968)

\bibitem {RS}M. Reed, B. Simon. Methods of Modern Mathematical Physics.
Academic Press, New York, 1972

\bibitem {Reiter}H. Reiter. Metaplectic groups and harmonic analysis. Springer
Berlin Heidelberg, 1988

\bibitem {ro77}A. Royer. Wigner functions as the expectation value of a parity
operator. Phys. Rev. A 15, 449--450 (1977)

\bibitem {sj94}J. Sj\"{o}strand. An algebra of pseudodifferential operators,
Math. Res. Lett. 1(2) 185--192 (1994)

\bibitem {sj95}J. Sj\"{o}strand. Wiener type algebras of pseudodifferential
operators, S\'{e}minaire sur les \'{E}quations aux D\'{e}riv\'{e}es
Partielles, 1994--1995, \'{E}cole Polytech., Palaiseau, Exp. No. IV, 21 (1995)

\bibitem {sh87}M. A. Shubin. Pseudodifferential Operators and Spectral Theory,
Springer-Verlag, 1987 [original Russian edition in Nauka, Moskva, 1978]

\bibitem {Souriau}J.-M. Souriau. Construction explicite de l'indice de Maslov.
Applications. Group Theoretical Methods in Physics, 117--148 (1976)

\bibitem {spanier}E. Spanier. Algebraic Topology. McGraw-Hill, 1966

\bibitem {Stone}M. H. Stone. On one-parameter unitary groups in Hilbert Space.
Ann. Math. 33(3), 643--648 (1932)

\bibitem {Wang}D. Wang. Some aspects of Hamiltonian systems and symplectic
algorithms, Physica D. 73, 1--16 (1994)

\bibitem {we85}A. Weinstein. A symbol class for some Schr\"{o}dinger equations
on $\mathbb{R}^{n}$. Am. J. Math. 107(1), 1--21 (1985)

\bibitem {ya87}K. Yajima. Existence of solutions for Schr\"{o}dinger evolution
equations. Comm. Math. Phys. 110(3), 415--426 (1987)

\bibitem {ya96}K. Yajima. Smoothness and Non-Smoothness of the Fundamental
Solution of Time Dependent Schr\"{o}dinger Equations. Comm. Math. Phys. 181,
605--629 (1996)

\bibitem {zehnder}E. Zehnder. Lectures on Dynamical Systems: Hamiltonian
Vector Fields and Symplectic Capacities. EMS Textbooks in Mathematics.
European Mathematical Society, 2010.
\end{thebibliography}
\end{document}